\newcommand{\rmnum}[1]{\romannumeral #1}
\newcommand{\Rmnum}[1]{\expandafter\@slowromancap\romannumeral #1@}
\spnewtheorem{claim}{Claim}{\bfseries}{\rmfamily}
\begin{document}

\title{The Effectiveness of Uniform Sampling for Center-Based Clustering with Outliers}
\author{Hu Ding, Jiawei Huang, and Haikuo Yu}
\institute{
 School of Computer Science and Engineering, University of Science and Technology of China \\
 He Fei, China\\
  \email{huding@ustc.edu.cn}, 
    \email{hjw0330@mail.ustc.edu.cn},  \email{yhk7786@mail.ustc.edu.cn}\\
}
%
\maketitle

\thispagestyle{empty}

\begin{abstract}
Clustering has many important applications in computer science, but real-world datasets often contain outliers. Moreover, the presence of outliers can make the clustering problems to be much more challenging. To reduce the complexities, various sampling methods have been proposed in past years. Namely, we take a small sample (uniformly or non-uniformly) from input and run an existing approximation algorithm on the sample. Comparing with existing non-uniform sampling methods, the uniform sampling approach has several significant benefits. For example, it only needs to read the data in one-pass and is very easy to implement in practice. Thus, the effectiveness of uniform sampling for clustering with outliers is a natural and fundamental problem deserving to study in both theory and practice. The previous analyses on uniform sampling often indicate that the sample size should depend on the ratio $n/z$, where $n$ is the number of input points and $z$ is the pre-specified number of outliers, and the dimensionality (for instance in Euclidean space), which could be both very high. Moreover, to guarantee the desired clustering  qualities, they need to discard more than $z$ outliers. In this paper, we propose a new and unified framework for analyzing the effectiveness of uniform sampling for three representative center-based clustering with outliers problems, $k$-center/median/means clustering with outliers. We introduce a ``significance'' criterion and prove that the performance of uniform sampling depends on the significance degree of the given instance. In particular, we show that  the sample size can be independent of the ratio $n/z$ and the dimensionality. More importantly, to the best of our knowledge, our method is the first uniform sampling approach that allows to discard exactly $z$ outliers for these three center-based clustering with outliers problems. The results proposed in this paper also can be viewed as an extension of the previous sub-linear time algorithms for the ordinary clustering problems (without outliers). The experiments suggest that the uniform sampling method can achieve comparable clustering results with other existing methods, but greatly reduce the running times.

\end{abstract}

\newpage

\pagestyle{plain}
\pagenumbering{arabic}
\setcounter{page}{1}

\section{Introduction}
\label{sec-intro}
\vspace{-0.1in}

Clustering is a fundamental topic that has many important applications in real world~\cite{jain2010data}. An important type of clustering problems is called ``center-based clustering'' including the well-known $k$-center/median/means clustering problems~\cite{awasthi2014center}. Center-based clustering problems can be defined in arbitrary metrics and Euclidean space $\mathbb{R}^D$. 
Usually, a center-based clustering problem aims to find $k$ cluster centers so as to minimize the induced clustering cost. For example, the $k$-center clustering problem is to minimize the maximum distance from the input data to the set of cluster centers~\cite{hochbaum1985best,G85}; the $k$-median (means) clustering problem is to minimize the average (squared) distance instead~\cite{lloyd1982least,li2016approximating}. 
%
%

Real-world datasets often contain outliers that could seriously destroy the final clustering results~\cite{tan2006introduction,chandola2009anomaly}. Clustering with outliers can be viewed as a generalization of the ordinary clustering problems; however, the presence  of outliers makes the problems to be much more challenging. Charikar {\em et al.}~\cite{charikar2001algorithms} proposed a $3$-approximation algorithm for $k$-center clustering with outliers in arbitrary metrics. 
The time complexity of their algorithm is at least quadratic in data size, since it needs to read all the pairwise distances. A following streaming $(4+\epsilon)$-approximation algorithm was proposed by McCutchen and Khuller~\cite{mccutchen2008streaming}. 
Chakrabarty {\em et al.}~\cite{DBLP:conf/icalp/ChakrabartyGK16} showed a $2$-approximation algorithm for metric $k$-center clustering with outliers based on the LP relaxation techniques. 
Recently, Ding {\em et al.}~\cite{DBLP:conf/esa/DingYW19} provided a greedy algorithm that yields a bi-criteria approximation (returning more than $k$ clusters) based on the idea of the Gonzalez's $k$-center clustering algorithm~\cite{G85}. B\u{a}doiu {\em et al.}~\cite{BHI} showed a coreset based approach but having an exponential time complexity if $k$ is not a constant (a ``coreset'' is a small set of points that approximates the structure/shape of a much larger point set, and thus can be used to significantly reduce the time complexities for many optimization problems~\cite{DBLP:journals/widm/Feldman20}). The coresets for instances in doubling metrics were studied in~\cite{DBLP:conf/esa/DingYW19,DBLP:journals/corr/abs-1802-09205,huang2018epsilon}. 


For $k$-median/means clustering with outliers, the algorithms with provable guarantees~\cite{chen2008constant,krishnaswamy2018constant,friggstad2018approximation} are difficult to implement due to their high complexities. Several heuristic algorithms without provable guarantee have been studied before~\cite{chawla2013k,ott2014integrated}. By using the local search method, Gupta {\em et al.}\cite{gupta2017local} provided a $274$-approximation algorithm for $k$-means clustering with outliers; they also showed that the well known $k$-means++ method~\cite{arthur2007k} can be used as a coreset approach to reduce the complexity. Very recently, Im {\em et al.}\cite{DBLP:journals/corr/abs-2003-02433} provided a method for constructing the coreset of $k$-means clustering with outliers by combining $k$-means++ and uniform sampling. 
Partly inspired by the successive sampling method of~\cite{mettu2004optimal}, Chen {\em et al.}~\cite{DBLP:conf/nips/ChenA018} proposed a novel summary construction algorithm to reduce input data size. 

Moreover, due to the rapid increase of data volumes in real world, a number of communication efficient distributed algorithms for $k$-center clustering with outliers~\cite{malkomes2015fast,guha2017distributed,DBLP:journals/corr/abs-1802-09205,li2018distributed} and $k$-median/means clustering with outliers~\cite{guha2017distributed,li2018distributed,DBLP:conf/nips/ChenA018} were proposed in recent years. 

%
%
%
 \vspace{-0.05in}
\subsection{Existing Sampling Methods and Our Main Results}
\label{sec-ourresult}
 \vspace{-0.05in}

As mentioned in above, existing algorithms for clustering with outliers often have high complexities ({\em e.g.}, quadratic complexity). Therefore, several sampling methods have been studied for reducing the complexities. Namely, we take a small sample (uniformly or non-uniformly) from input and run an existing approximation algorithm on the sample. The \textbf{non-uniform sampling methods} include the aforementioned greedy algorithm~\cite{DBLP:conf/esa/DingYW19}, $k$-means++~\cite{gupta2017local,DBLP:journals/corr/abs-2003-02433}, and successive sampling~\cite{DBLP:conf/nips/ChenA018}. However, these approaches suffer several drawbacks in practice; for example, they need to read the input dataset in multiple passes with high computational complexities, or have to discard more than the pre-specified number of outliers. The sensitivity-based coreset method is also a popular non-uniform sampling approach for ordinary clustering problems~\cite{DBLP:conf/stoc/FeldmanL11}. Informally, each data point $p$ has the ``sensitivity'' $\phi(p)$ to measure its importance to the whole dataset; the coreset construction is a simple sampling procedure where each point  is drawn {\em i.i.d.}  proportional to its sensitivity. However, to the best of our knowledge, the sensitivity-based coreset approach is not quite ideal to handle outliers, as it is not easy to compute the sensitivities  because each point $p$ could be inlier or outlier for different solutions.

Due to the simplicity, the idea of \textbf{uniform sampling} has attracted a lot of attention. We follow the usual definition of ``uniform sampling'' in the articles~\cite{meyerson2004k,mishra2001sublinear,czumaj2004sublinear}, where it means that \textbf{we take a sample from the input independently and uniformly at random}. Suppose $n$ is the input size and $z$ is the number of outliers. Charikar {\em et al.}\cite{charikar2003better} and Meyerson {\em et al.}~\cite{meyerson2004k} respectively provided uniform sampling approaches for reducing data size for clustering with outliers; Huang {\em et al.}~\cite{huang2018epsilon} and Ding {\em et al.}~\cite{DBLP:conf/esa/DingYW19} presented similar results for instance in Euclidean space. However, these methods usually suffer the following dilemma.

\vspace{0.05in}
\textbf{The dilemma for uniform sampling.} Let $S$ be the uniform sample from the input. If we try to avoid sampling any outlier, the size of $S$ should not be too large; but a small $S$ is more likely to yield a large clustering error. If we keep $|S|$ large enough, it is necessary to have an accurate enough estimation on the number of sampled outliers in $S$; that means $|S|$ should be larger than some threshold depending on $n/z$ (and the dimensionality $D$ for instance in Euclidean space), which could be very large ({\em e.g.,} $D$ could be high and $z$ could be much smaller than $n$).  As the results proposed in~\cite{huang2018epsilon,DBLP:conf/esa/DingYW19}, $S$ should be at least an $\epsilon$-sample (or some other variants) with the size depending on the VC-dimension in Euclidean space and the ratio $n/z$. 
Moreover, since it is impossible to know the exact number of sampled outliers in $S$ (even if $|S|$ is large), the error on the number of discarded outliers seems to be inevitable in almost all the previous uniform sampling approaches~\cite{charikar2003better,meyerson2004k,huang2018epsilon,DBLP:conf/esa/DingYW19}; that is, to guarantee the desired clustering  qualities, they need to discard more than $z$ outliers. 
\vspace{0.05in}


The uniform sampling method was also applied to design sub-linear time algorithms for ordinary $k$-median/means clustering (without outliers) problems~\cite{meyerson2004k,mishra2001sublinear,czumaj2004sublinear,DBLP:conf/stoc/Indyk99}, and we notice that the sample sizes proposed in~\cite{meyerson2004k,czumaj2004sublinear} are independent of $n/z$ and $D$. Therefore, a natural question is that whether their results can be extended for the clustering with outliers problems; in other words, is it possible to remove the dependencies on the values $n/z$ and $D$ in the sample complexity of uniform sampling? Another key question is that whether we can discard exactly $z$ outliers when using uniform sampling. 



\textbf{Our contributions.} Though the uniform sampling approach suffers the above dilemma in theory, it often achieves nice performance in practice even if the sample size is much smaller than the theoretical bounds proposed in~\cite{charikar2003better,meyerson2004k,huang2018epsilon,DBLP:conf/esa/DingYW19}. \textbf{To explain this phenomenon, we propose a new and unified framework for analyzing the effectiveness of uniform sampling for $k$-center/median/means clustering with outliers}. 
%
We show that the sample size can be independent of the ratio $n/z$ and the dimensionality $D$, under some reasonable assumption. If we only require to output cluster centers, our uniform sampling approach runs in sub-linear time that is independent of the input size\footnote{Obviously, if we require to output the clustering membership for each point, it needs at least linear time.}. To further boost the success probability, we can take multiple samples and select the one yielding the smallest objective value by scanning the whole dataset in one-pass. More importantly, to the best of our knowledge, our method is the first uniform sampling approach that allows to discard exactly $z$ outliers.

%
%

In our framework, we consider the relation between two  values: the lower bound of the sizes of the optimal clusters $\inf_{1\leq j\leq k}|C^*_j|$ and the number of outliers $z$ (the formal definitions will be shown in Section~\ref{sec-pre}). 
 If a cluster $C^*_j$ has size $\ll z$, then we can say that $C^*_j$ is not a ``significant'' cluster. In real applications, we may only have an estimation for the number of clusters $k$. Consequently, if there exists a cluster $C^*_j$ having size $|C^*_j|\ll z$, we can  formulate the problem as a simpler $(k-1)$-center/median/means clustering with $z+|C^*_j|$ outliers instead. So we can assume that each cluster $C^*_j$ has a size comparable to $z$. If $\frac{\inf_{1\leq j\leq k}|C^*_j|}{z}=\Omega(1)$ (note that the ratio could be a value smaller than $1$, say $0.5$, in this case), our framework outputs $k+O(\log k)$ cluster centers that yield a $4$-approximation for $k$-center clustering with outliers; further, if $\inf_{1\leq j\leq k}|C^*_j|>z$, our framework returns exactly $k$ cluster centers that yield a $(c+2)$-approximation solution, if we run an existing $c$-approximation algorithm with $c\geq 1$ on the sample. The framework can also handle $k$-median/means clustering with outliers and yields similar results.

 
We should point out that Meyerson {\em et al.}~\cite{meyerson2004k} also considered the lower bound of the cluster sizes when designing their sub-linear time $k$-median clustering (without outliers) algorithm. However, it is challenging to directly extend their idea to handle the case  with outliers (as explained in the aforementioned dilemma for uniform sampling), and thus we need to develop significantly new ideas in our algorithms design and analysis. 
Recently, Gupta~\cite{gupta2018approximation} proposed a similar uniform sampling approach to handle $k$-means clustering with outliers. 
However, their analysis and results are quite different from ours. Also their assumption is stronger: it requires that each optimal cluster has size roughly $\Omega(\frac{z}{\gamma^2}\log k)$ ({\em i.e.,} $\frac{\inf_{1\leq j\leq k}|C^*_j|}{z}=\Omega(\frac{\log k}{\gamma^2})$) where $\gamma$ is a small parameter in $(0,1)$.

 \vspace{-0.05in}
\subsection{Preliminaries}
\label{sec-pre}
\vspace{-0.05in}
Let the input be a point set $P\subset \mathbb{R}^D$ with $|P|=n$. Given a set of points $H\subset \mathbb{R}^D$ and a positive integer $z<n$, we define the following notations.

\begin{eqnarray}
\Delta^{-z}_{\infty}(P, H)&=&\min\{\max_{p\in P'}dist(p, H)\mid P'\subset P, |P'|=n-z\};\label{def-center}\\
%
\Delta^{-z}_{1}(P, H)&=&\min\{\frac{1}{|P'|}\sum_{p\in P'}dist(p, H)\mid P'\subset P, |P'|=n-z\};\label{def-median}\\
\Delta^{-z}_{2}(P, H)&=&\min\{\frac{1}{|P'|}\sum_{p\in P'}\big(dist(p, H)\big)^2\mid P'\subset P, |P'|=n-z\},\label{def-means}
\end{eqnarray}
where $dist(p, H)=\min_{q\in H}||p-q||$ and $||p-q||$ denotes the distance between $p$ and $q$.

\begin{definition}[$k$-Center/Median/Means Clustering with Outliers]
\label{def-outlier}
Given a set $P$ of $n$ points in $\mathbb{R}^D$ with two positive integers $k$ and $z<n$, the problem of $k$-center (resp., $k$-median, $k$-means) clustering with outliers is to find $k$ cluster centers $C=\{c_1, \cdots, c_k\}\subset \mathbb{R}^D$, such that the objective function $\Delta^{-z}_{\infty}(P, C)$ (resp., $\Delta^{-z}_{1}(P, C)$, $\Delta^{-z}_{2}(P, C)$) is minimized.
%
\end{definition}
The definition can be easily modified for arbitrary metric space $(X, d)$, where $X$ contains $n$ vertices and $d(\cdot, \cdot)$ is the distance function: the Euclidean distance ``$||p-q||$'' is replaced by $d(p, q)$; the cluster centers $\{c_1, \cdots, c_k\}$ should be chosen from $X$. In this paper, we always use $P_{opt}$, a subset of $P$ with size $n-z$, to denote the subset yielding the optimal solution with respect to the objective functions in Definition~\ref{def-outlier}. Also, let $\{C^*_1, \cdots, C^*_k\}$ be the $k$ optimal clusters forming $P_{opt}$. 


As mentioned in Section~\ref{sec-ourresult}, it is rational to assume that $\inf_{1\leq j\leq k} |C^*_j|$ is not far smaller than $z$. To formally state this assumption, we introduce the following definition. 

\begin{definition}[$(\epsilon_1, \epsilon_2)$-Significant Instance]
\label{def-sig}
Let $\epsilon_1, \epsilon_2>0$. Given an instance of $k$-center (resp., $k$-median, $k$-means) clustering with outliers as described in Definition~\ref{def-outlier}, if $\inf_{1\leq j\leq k} |C^*_j|\geq\frac{\epsilon_1}{k}n$ and $z=\frac{\epsilon_2}{k}n$, we say that it is an $(\epsilon_1,\epsilon_2)$-significant instance.
\end{definition}
Obviously, since $\sum^k_{j=1}|C^*_j|<n$, $\epsilon_1$ should be smaller than $1$. In Definition~\ref{def-sig}, we do not say ``$\inf_{1\leq j\leq k} |C^*_j|=\frac{\epsilon_1}{k}n$'', since we may not be able to obtain the exact value of $\inf_{1\leq j\leq k} |C^*_j|$; instead, we may only have a lower bound of $\inf_{1\leq j\leq k} |C^*_j|$. The ratio $\frac{\epsilon_1}{\epsilon_2}$ ($\leq\frac{\inf_{1\leq j\leq k} |C^*_j|}{z}$) reveals the ``significance'' of the clusters to outliers; the higher the ratio is, the more significant the clusters to outliers will be. 


The rest of the paper is organized as follows. To help our analysis, we present two implications of Definition~\ref{def-sig} in Section~\ref{sec-implication}. Then we introduce our uniform sampling algorithms for $k$-center clustering with outliers and $k$-median/means clustering with outliers in Section~\ref{sec-kcenter} and~\ref{sec-kmedian}, respectively. We also explain that how to boost the success probability of our framework and how to further determine the clustering memberships of data points in Section~\ref{sec-boost}. Finally, we present our experimental results in Section~\ref{sec-exp}.

\vspace{-0.1in}
\section{Implications of Significant Instance}
\label{sec-implication}
\vspace{-0.1in}

\begin{lemma}
\label{lem-imp1}
Given an $(\epsilon_1, \epsilon_2)$-significant instance $P$ as described in Definition~\ref{def-sig}, one selects a set $S$ of points from $P$ uniformly at random. Let $\eta, \delta\in(0,1)$. (\rmnum{1}) If $|S|\geq \frac{k}{\epsilon_1}\log\frac{k}{\eta}$, with probability at least $1-\eta$, $S\cap C^*_j\neq \emptyset$ for any $1\leq j\leq k$. (\rmnum{2}) If $|S|\geq\frac{3k}{\delta^2\epsilon_1}\log\frac{2k}{\eta}$, with probability at least $1-\eta$, $|S\cap C^*_j|\in (1\pm\delta)\frac{|C^*_j|}{n} |S|$ for any $1\leq j\leq k$.
\end{lemma}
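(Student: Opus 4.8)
The plan is to treat each of the two parts as a union-bound argument over the $k$ optimal clusters, where the per-cluster failure probability is controlled by an exponential tail estimate. The significance assumption of Definition~\ref{def-sig} enters through the uniform lower bound $|C^*_j|\ge\frac{\epsilon_1}{k}n$, which guarantees that every optimal cluster occupies at least an $\frac{\epsilon_1}{k}$ fraction of $P$; hence a single uniformly drawn point lands in $C^*_j$ with probability $\frac{|C^*_j|}{n}\ge\frac{\epsilon_1}{k}$. This single observation is what feeds both estimates.

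For part (\rmnum{1}), I would fix a cluster $C^*_j$ and bound the probability that all $|S|$ sampled points miss it. Since the samples are drawn independently and uniformly, this probability is at most $\big(1-\frac{\epsilon_1}{k}\big)^{|S|}\le\exp\big(-\frac{\epsilon_1}{k}|S|\big)$. Choosing $|S|\ge\frac{k}{\epsilon_1}\log\frac{k}{\eta}$ makes this at most $\eta/k$, and a union bound over the $k$ clusters yields the claimed $1-\eta$ success probability that $S\cap C^*_j\neq\emptyset$ for every $j$.

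For part (\rmnum{2}), I would write $|S\cap C^*_j|=\sum_{i=1}^{|S|}X_i$ as a sum of independent indicator variables, each with success probability $\frac{|C^*_j|}{n}$, so that its expectation is $\mu_j=\frac{|C^*_j|}{n}|S|\ge\frac{\epsilon_1}{k}|S|$. Applying the multiplicative Chernoff bound $\Pr\big[\,\big|\,|S\cap C^*_j|-\mu_j\,\big|\ge\delta\mu_j\,\big]\le 2\exp(-\delta^2\mu_j/3)$ together with $\mu_j\ge\frac{\epsilon_1}{k}|S|$ shows the per-cluster failure probability is at most $2\exp\big(-\frac{\delta^2\epsilon_1}{3k}|S|\big)$, which drops below $\eta/k$ exactly when $|S|\ge\frac{3k}{\delta^2\epsilon_1}\log\frac{2k}{\eta}$; a final union bound over the $k$ clusters completes the argument.

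The only point requiring care — rather than a genuine obstacle — is the sampling model, i.e.\ whether $S$ is drawn with or without replacement. The tail estimates above are stated in the with-replacement (binomial) model. If the sample is taken without replacement, the missing probability in part (\rmnum{1}) only decreases (it becomes hypergeometric), and the concentration in part (\rmnum{2}) is at least as sharp, since the Chernoff--Hoeffding bound carries over to sampling without replacement. Thus the stated sample-size thresholds remain valid in either model, and the direction of the inequality $\frac{\epsilon_1}{k}\le\frac{|C^*_j|}{n}$ is precisely what is needed for both the tail bound and the lower bound on $\mu_j$.
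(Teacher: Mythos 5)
Your proof is correct and follows essentially the same route as the paper: the paper first proves a generic claim for a fixed subset of fraction $\tau$ (bounding the miss probability by $(1-\tau)^{|S|}$ for part (\rmnum{1}) and applying the multiplicative Chernoff bound with exponent $-\frac{\delta^2\tau}{3}|S|$ for part (\rmnum{2})), then substitutes $\eta/k$ and takes a union bound over the $k$ clusters, exactly as you do via $\frac{|C^*_j|}{n}\geq\frac{\epsilon_1}{k}$. Your closing remark on sampling without replacement is a careful addition the paper leaves implicit, but it does not alter the argument.
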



Lemma~\ref{lem-imp1} can be obtained by using the Chernoff bound~\cite{alon2004probabilistic}, and we leave the proof to Section~\ref{sec-proof-lem-imp1}.
Moreover, we know that the expected number of outliers contained in the sample $S$ is $\frac{\epsilon_2}{k}|S|$. So we immediately have the following result by using the Markov's inequality.

\begin{lemma}
\label{lem-imp2}
Given an $(\epsilon_1, \epsilon_2)$-significant instance $P$ as described in Definition~\ref{def-sig}, one selects a set $S$ of points from $P$ uniformly at random. Let $\eta\in(0,1)$. With probability at least $1-\eta$, $\big|S\setminus P_{opt}\big|\leq  \frac{\epsilon_2}{k\eta}|S|$.
\end{lemma}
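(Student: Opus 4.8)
The plan is to obtain the bound as a one-line consequence of Markov's inequality, once the expected number of sampled outliers has been identified. First I would introduce the random variable $X = |S \setminus P_{opt}|$. Since $P_{opt}$ has size $n-z$, the set $P \setminus P_{opt}$ consists of exactly the $z$ outliers, so $X$ is precisely the number of points of $S$ that are outliers. The goal is then to show $\Pr[X > \frac{\epsilon_2}{k\eta}|S|] \le \eta$.

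The key step is to compute $E[X]$ via linearity of expectation. Writing $X = \sum_{p \in S} \mathbf{1}[p \notin P_{opt}]$ and using the fact that each point of the uniform sample lies among the $z$ outliers with probability $z/n$ (this holds identically whether the sample is drawn with or without replacement, by the symmetry of uniform sampling), I obtain $E[X] = |S| \cdot \frac{z}{n}$. Plugging in the significance assumption $z = \frac{\epsilon_2}{k} n$ from Definition~\ref{def-sig} yields $E[X] = \frac{\epsilon_2}{k}|S|$, which matches the expected count stated just before the lemma.

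Finally, since $X$ is nonnegative, Markov's inequality applied with the threshold $t = \frac{1}{\eta}E[X] = \frac{\epsilon_2}{k\eta}|S|$ gives $\Pr[X \ge t] \le E[X]/t = \eta$. Hence with probability at least $1-\eta$ we have $X < t$, and in particular $|S \setminus P_{opt}| = X \le \frac{\epsilon_2}{k\eta}|S|$, which is the claimed bound.

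I do not expect any genuine obstacle here, as the statement is a routine tail bound. The only points requiring a little care are the correct identification of $P \setminus P_{opt}$ with the outlier set and the observation that the per-point outlier probability equals $z/n$ regardless of the sampling model, so that linearity of expectation applies cleanly before Markov's inequality is invoked.
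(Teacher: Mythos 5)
Your proposal is correct and follows exactly the paper's argument: the paper proves Lemma~\ref{lem-imp2} by noting that the expected number of outliers in $S$ is $\frac{\epsilon_2}{k}|S|$ and applying Markov's inequality, which is precisely your computation of $E[X]=|S|\cdot\frac{z}{n}=\frac{\epsilon_2}{k}|S|$ followed by the Markov tail bound at threshold $\frac{1}{\eta}E[X]$. Your added remarks (identifying $P\setminus P_{opt}$ with the $z$ outliers, and that the per-point probability $z/n$ holds with or without replacement) merely make explicit what the paper leaves implicit.
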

\vspace{-0.1in}

\section{Uniform Sampling for $k$-Center Clustering with Outliers}
\label{sec-kcenter}
\vspace{-0.1in}

Let $r_{opt}$ be the optimal radius of the instance $P$, {\em i.e.}, each optimal cluster $C^*_j$ is covered by a ball with radius $r_{opt}$. For any point $p\in \mathbb{R}^D$ and any value $r\geq 0$, we use $Ball(p, r)$ to denote the ball centered at $p$ with radius $r$. We present  Algorithm~\ref{alg-kc1} and \ref{alg-kc2} for the $k$-center clustering with outliers problem, and prove their clustering qualities in Theorem~\ref{the-kc1} and \ref{theorem-kc2} respectively. We only focus on the problem in Euclidean space due to the space limit, but the results also hold for abstract metric space by using the same idea.

\begin{theorem}
\label{the-kc1}
In Algorithm~\ref{alg-kc1}, the size $|H|= k+\frac{1}{\eta}\frac{\epsilon_2}{\epsilon_1}\log \frac{k}{\eta}$. Also, with probability at least $(1-\eta)^2$, $\Delta^{-z}_{\infty}(P, H)\leq 4r_{opt}$. 
\end{theorem}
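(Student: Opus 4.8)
The plan is to take the uniform sample $S$ of size $\frac{k}{\epsilon_1}\log\frac{k}{\eta}$ promised by Lemma~\ref{lem-imp1}, build $H$ as a greedy $2r_{opt}$-net inside $S$, and then trade the ``hit every optimal cluster'' property against the ``few sampled outliers'' property to control both the covering radius and the cardinality of $H$ at once. Concretely, I would scan the points of $S$ one by one, adding a point to $H$ whenever it is farther than $2r_{opt}$ from every point already chosen. This makes $H$ a maximal $2r_{opt}$-separated subset of $S$, hence simultaneously a $2r_{opt}$-packing and a $2r_{opt}$-cover of $S$, which is exactly the duality I want to exploit.

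First I would fix the two good events. Let $E_1$ be the event of Lemma~\ref{lem-imp1}(\rmnum{1}), that $S\cap C^*_j\neq\emptyset$ for every $j$, which holds with probability at least $1-\eta$ for the chosen $|S|$; let $E_2$ be the event of Lemma~\ref{lem-imp2}, that $|S\setminus P_{opt}|\leq \frac{\epsilon_2}{k\eta}|S|=\frac{1}{\eta}\frac{\epsilon_2}{\epsilon_1}\log\frac{k}{\eta}=:m$, which also holds with probability at least $1-\eta$. The remainder of the argument is then carried out deterministically on $E_1\cap E_2$.

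For the size bound I would use the packing property: any point of $H$ lies either in some optimal cluster $C^*_j$ or in $S\setminus P_{opt}$. Since each $C^*_j$ sits in a ball of radius $r_{opt}$ and any two points of such a ball are within $2r_{opt}$ of each other, a $2r_{opt}$-separated set contains at most one inlier per cluster, i.e.\ at most $k$ inlier points in total; every remaining point of $H$ is a sampled outlier, of which there are at most $m$ under $E_2$. Hence $|H|\le k+m$, matching the stated cardinality. For the radius bound I would use the covering property together with $E_1$: take any inlier $p\in P_{opt}$, say $p\in C^*_j$, and pick $s_j\in S\cap C^*_j$ (nonempty under $E_1$). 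Both $p$ and $s_j$ lie in the radius-$r_{opt}$ ball of $C^*_j$, so $\|p-s_j\|\le 2r_{opt}$; by maximality $s_j$ is within $2r_{opt}$ of some $h\in H$; the triangle inequality then gives $dist(p,H)\le 4r_{opt}$. Thus all $n-z$ points of $P_{opt}$ are covered within $4r_{opt}$, so discarding only the $z$ genuine outliers yields $\Delta^{-z}_{\infty}(P,H)\le 4r_{opt}$.

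Finally I would combine the probabilities, and this is the step I expect to be most delicate. The size bound needs $E_2$ and the radius bound needs $E_1$, so I must control $\Pr[E_1\cap E_2]$, and the target $(1-\eta)^2$ is strictly stronger than what a union bound ($1-2\eta$) delivers. The fix is that $E_1$ and $E_2$ are positively correlated: conditioning on the number of sampled outliers $O$, the chance of hitting every cluster, $\Pr[E_1\mid O=o]$, is decreasing in $o$, so conditioning on the down-event $E_2=\{O\le m\}$ can only raise it, giving $\Pr[E_1\cap E_2]=\Pr[E_1\mid E_2]\Pr[E_2]\ge \Pr[E_1]\Pr[E_2]\ge(1-\eta)^2$ by this FKG-type monotonicity. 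I would also note the minor point that the construction presumes the value $r_{opt}$, which in practice is handled by enumerating the $O(n^2)$ candidate pairwise distances (or a guessed radius) and retaining the best feasible $H$.
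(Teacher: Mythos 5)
Your argument is mathematically sound and shares the paper's skeleton: condition on the two good events of Lemma~\ref{lem-imp1}(i) and Lemma~\ref{lem-imp2} (call them $E_1$ and $E_2$), observe that under them $S$ sits inside $k$ cluster balls of radius $r_{opt}$ plus at most $k'=\frac{1}{\eta}\frac{\epsilon_2}{\epsilon_1}\log\frac{k}{\eta}$ sampled outliers, and finish with the same two-hop triangle inequality through a sampled point $s_j\in S\cap C^*_j$ to reach $4r_{opt}$. Where you genuinely diverge is the middle step. The paper keeps Algorithm~\ref{alg-kc1} exactly as stated and invokes the $2$-approximation guarantee of the Gonzalez subroutine~\cite{G85} as a black box: since $S$ is coverable by $k+k'$ balls of radius $r_{opt}$, the returned $(k+k')$-center radius is at most $2r_{opt}$. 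You instead discard that subroutine and hand-build $H$ as a maximal $2r_{opt}$-separated subset of $S$, using packing for the size bound and maximality for covering. Strictly speaking this proves the guarantee for a \emph{modified} algorithm, not for Algorithm~\ref{alg-kc1}: your construction presumes $r_{opt}$, the enumeration patch changes the procedure (and should range over the $O(|S|^2)$ pairwise distances within $S$, not $O(n^2)$ distances in $P$, or the sub-linear running time is lost), and you only obtain $|H|\le k+k'$ under $E_2$, whereas the theorem's $|H|=k+k'$ is a deterministic triviality of the stated algorithm. The mismatch is easily repaired, though: Gonzalez's farthest-point traversal outputs $k+k'$ centers which, together with the next farthest point, are pairwise separated by at least the achieved radius $r$; your packing count (at most one point per cluster ball, at most $k'$ sampled outliers) applied to these $k+k'+1$ points forces $r\le 2r_{opt}$ with no knowledge of $r_{opt}$ at all. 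So your packing/covering duality is really a self-contained proof of the very step the paper cites, and with that re-routing your argument analyzes the actual algorithm.

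On the probability bookkeeping you are in fact more careful than the paper. The paper asserts the $(1-\eta)^2$ bound directly from Lemma~\ref{lem-imp1} and Lemma~\ref{lem-imp2}, but a plain union bound only gives $1-2\eta<(1-\eta)^2$, so some independence or positive-association argument is implicitly needed. Your observation that, under i.i.d.\ sampling, $\Pr[E_1\mid O=o]$ is nonincreasing in the number $o$ of sampled outliers (conditioned on $O=o$ the remaining draws are i.i.d.\ uniform on $P_{opt}$), whence $\Pr[E_1\cap E_2]\ge\Pr[E_1]\Pr[E_2]\ge(1-\eta)^2$, is a correct and welcome justification of the stated constant that the paper leaves unexplained.
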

\begin{remark}
\label{rem-the-kc1}
\textbf{(\rmnum{1})} If we assume both $\frac{\epsilon_2}{\epsilon_1}$ and $\frac{1}{\eta}$ are $O(1)$, Theorem~\ref{the-kc1} indicates that Algorithm~\ref{alg-kc1} returns $k+O(\log k)$ cluster centers. $\frac{\epsilon_2}{\epsilon_1}=O(1)$ means that $\frac{\inf_{1\leq j\leq k}|C^*_j|}{z}=\Omega(1)$, that is, the number of outliers is not significantly larger than the size of the smallest cluster. The running time of the subroutine algorithm~\cite{G85} in Step 2 is $(k+k')|S|D=O(\frac{k^2}{\epsilon_1}(\log k) D)$, which is independent of the input size $n$. 
\textbf{(\rmnum{2})} In Section~\ref{sec-example}, we present an example to show that the value of $k'$ cannot be further reduced. That is, the clustering quality could be arbitrarily bad if we run $(k+k'')$-center clustering on $S$ with $k''<k'$. 
\end{remark}
\vspace{-0.2in}
\begin{proof}(\textbf{of Theorem~\ref{the-kc1}})
First,  it is straightforward to know that $|H|=k+k'= k+\frac{1}{\eta}\frac{\epsilon_2}{\epsilon_1}\log \frac{k}{\eta}$. 
Below, we assume that the sample $S$ contains at least one point from each $C^*_j$, and at most $k'=\frac{\epsilon_2}{k\eta}|S|$ points from $P\setminus P_{opt}$ (these events happen with probability at least $(1-\eta)^2$ due to Lemma~\ref{lem-imp1} and \ref{lem-imp2}).

Since the sample $S$ contains at most $k'$ points from $P\setminus P_{opt}$ and $P_{opt}$ can be covered by $k$ balls with radius $r_{opt}$, we know that $S$ can be covered by $k+k'$ balls with radius $r_{opt}$. Thus, if we perform the $2$-approximation $(k+k')$-center clustering algorithm~\cite{G85} on $S$, the resulting balls should have radius no larger than $2r_{opt}$. Let $H=\{h_1, \cdots, h_{k+k'}\}$ and $\mathbb{B}_S=\{Ball(h_l, r)\mid 1\leq l\leq k+k'\}$ be those balls covering $S$ with $r\leq 2r_{opt}$. Also, for each $1\leq j\leq k$, since $S\cap C^*_j\neq\emptyset$, there exists one ball of $\mathbb{B}_S$, say $Ball(h_{l_j}, r)$, covers at least one point, say $p_j$, from $C^*_j$. For any point $p\in C^*_j$, we have $||p-p_j||\leq 2 r_{opt}$ (by the triangle inequality) and $||p_j-h_{l_j}||\leq r\leq 2 r_{opt}$; therefore, $||p-h_{l_j}||\leq ||p-p_j||+||p_j-h_{l_j}||\leq 4 r_{opt}$. See Figure~\ref{fig-th1}. Overall, $P_{opt}=\cup^k_{j=1}C^*_j$ is covered by the balls $\cup^{k+k'}_{l=1}Ball(h_l, 4 r_{opt})$, {\em i.e.}, $\Delta^{-z}_{\infty}(P, H)\leq 4r_{opt}$. 
%
%
\qed
\end{proof}

\begin{algorithm}[tb]
   \caption{\textsc{Uniform Sampling $k$-Center Outliers \Rmnum{1}}}
   \label{alg-kc1}
\begin{algorithmic}
  \STATE {\bfseries Input:} An $(\epsilon_1, \epsilon_2)$-significant instance $P$ of $k$-center clustering with $z$ outliers, and $|P|=n$; a parameter $\eta\in (0,1)$. 
   \STATE
\begin{enumerate}
\item Sample a set $S$ of $\frac{k}{\epsilon_1}\log\frac{k}{\eta}$ points uniformly at random from $P$.
\item Let $k'=\frac{1}{\eta}\frac{\epsilon_2}{k}|S|$, and solve the $(k+k')$-center clustering problem on $S$ by using the $2$-approximation algorithm~\cite{G85}. 
\end{enumerate}
  \STATE {\bfseries Output}  $H$, which is the set of $k+k'$ cluster centers returned in Step 2. 
\end{algorithmic}
\end{algorithm}

\begin{figure}
 \vspace{-0.2in}
\begin{center}
    \includegraphics[width=0.17\textwidth]{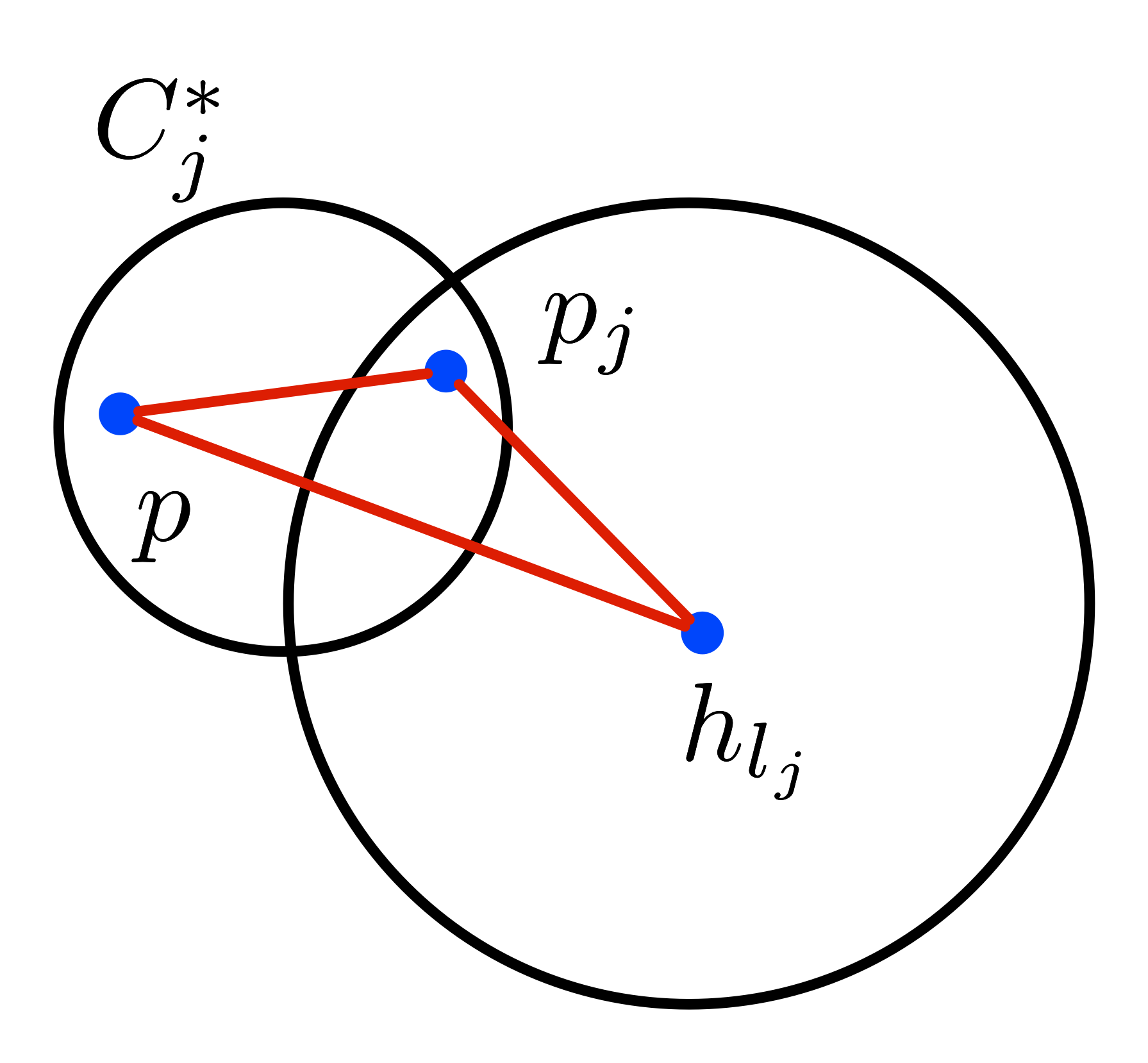}  
    \end{center}
 \vspace{-0.2in}
  \caption{$||p-h_{l_j}||\leq ||p-p_j||+||p_j-h_{l_j}||\leq 4 r_{opt}$.}     
   \label{fig-th1}
 \vspace{-0.3in}
\end{figure}

\begin{algorithm}[tb]
   \caption{\textsc{Uniform Sampling $k$-Center Outliers \Rmnum{2}}}
   \label{alg-kc2}
\begin{algorithmic}
  \STATE {\bfseries Input:}  An $(\epsilon_1, \epsilon_2)$-significant instance $P$ of $k$-center clustering with $z$ outliers, and $|P|=n$; two parameters $\eta,\delta\in (0,1)$. 
   \STATE
\begin{enumerate}
\item Sample a set $S$ of $\frac{3k}{\delta^2\epsilon_1}\log\frac{2k}{\eta}$ points uniformly at random from $P$.
\item Let $z'=\frac{1}{\eta}\frac{\epsilon_2}{k}|S|$, and solve the $k$-center clustering with $z'$ outliers problem on $S$ by using any $c$-approximation algorithm with $c\geq1$ ({\em e.g.,} the $3$-approximation algorithm~\cite{charikar2001algorithms}). 
\end{enumerate}
  \STATE {\bfseries Output} $H$, which is the set of $k$ cluster centers returned in Step 2.
\end{algorithmic}
\end{algorithm}


\begin{theorem}
\label{theorem-kc2}
If $\frac{\epsilon_1}{\epsilon_2}>\frac{1}{\eta(1-\delta)}$, with probability at least $(1-\eta)^2$, Algorithm~\ref{alg-kc2} returns $k$ cluster centers achieving a $(c+2)$-approximation for  $k$-center clustering with $z$ outliers, {\em i.e.}, $\Delta^{-z}_{\infty}(P, H)\leq (c+2)r_{opt}$.
\end{theorem}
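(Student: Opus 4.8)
The plan is to mimic the argument of Theorem~\ref{the-kc1}, but now exploiting the stronger balancedness guaranteed by Lemma~\ref{lem-imp1}(\rmnum{2}) together with the quantitative gap in the hypothesis $\frac{\epsilon_1}{\epsilon_2}>\frac{1}{\eta(1-\delta)}$. First I would fix the two ``good'' events for the sample $S$. Since $|S|=\frac{3k}{\delta^2\epsilon_1}\log\frac{2k}{\eta}$ meets the threshold of Lemma~\ref{lem-imp1}(\rmnum{2}), with probability at least $1-\eta$ we have $|S\cap C^*_j|\geq(1-\delta)\frac{|C^*_j|}{n}|S|\geq(1-\delta)\frac{\epsilon_1}{k}|S|$ for every $j$ (call this Event A, using $|C^*_j|\geq\frac{\epsilon_1}{k}n$). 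By Lemma~\ref{lem-imp2}, with probability at least $1-\eta$ the sample contains at most $\frac{\epsilon_2}{k\eta}|S|=z'$ points outside $P_{opt}$ (Event B). Following the convention of the proof of Theorem~\ref{the-kc1}, I would assume both events hold, which occurs with probability at least $(1-\eta)^2$.

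Next I would bound the quality of the subroutine on $S$. Under Event B the set $S\cap P_{opt}$, whose size is at least $|S|-z'$, lies inside the $k$ optimal balls of radius $r_{opt}$; hence discarding the at most $z'$ points of $S\setminus P_{opt}$ shows that the optimal value of $k$-center clustering with $z'$ outliers restricted to $S$ is at most $r_{opt}$. Running the $c$-approximation algorithm on $S$ therefore returns a set $H=\{h_1,\dots,h_k\}$ for which all but at most $z'$ points of $S$ lie within distance $c\,r_{opt}$ of $H$.

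The crux is to argue that no optimal cluster is ``lost'' among these at most $z'$ uncovered sample points. Here the significance hypothesis enters: $\frac{\epsilon_1}{\epsilon_2}>\frac{1}{\eta(1-\delta)}$ is exactly the statement $(1-\delta)\frac{\epsilon_1}{k}|S|>\frac{1}{\eta}\frac{\epsilon_2}{k}|S|=z'$, so Event A gives $|S\cap C^*_j|>z'$ for every $j$. Since at most $z'$ sample points in total fail to be covered within $c\,r_{opt}$, a pigeonhole argument forces each $C^*_j$ to retain at least one sample point $p_j$ with $dist(p_j,H)\leq c\,r_{opt}$, say $||p_j-h_{l_j}||\leq c\,r_{opt}$. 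Finally, for any $p\in C^*_j$ the triangle inequality gives $||p-h_{l_j}||\leq||p-p_j||+||p_j-h_{l_j}||\leq 2r_{opt}+c\,r_{opt}=(c+2)r_{opt}$, because $p$ and $p_j$ share a common optimal ball of radius $r_{opt}$. Thus $P_{opt}=\cup_{j=1}^k C^*_j$ is covered by $\cup_{l=1}^k Ball(h_l,(c+2)r_{opt})$, and since $|P_{opt}|=n-z$ we conclude $\Delta^{-z}_{\infty}(P,H)\leq(c+2)r_{opt}$. I expect the pigeonhole step --- converting the global bound of $z'$ uncovered points into a per-cluster guarantee via the significance gap --- to be the essential (and only non-routine) ingredient; everything else parallels Theorem~\ref{the-kc1}.
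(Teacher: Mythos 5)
Your proposal is correct and follows essentially the same route as the paper's proof: both bound the optimal radius of the sampled instance by $r_{opt}$ (so the $c$-approximation yields radius at most $c\,r_{opt}$), use the hypothesis $\frac{\epsilon_1}{\epsilon_2}>\frac{1}{\eta(1-\delta)}$ to get $|S\cap C^*_j|>z'$ and hence, by pigeonhole, a covered sample point in every optimal cluster, and finish with the triangle inequality to obtain $(c+2)r_{opt}$. No gaps; your ``pigeonhole'' step is exactly the paper's inequality~(\ref{for-kc2-1}).
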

\begin{remark}
\label{rem-2}
For example, if we set $\eta=\delta=1/2$, the algorithm works for any instance with $\frac{\epsilon_1}{\epsilon_2}>\frac{1}{\eta(1-\delta)}=4$. Actually, as long as $\frac{\epsilon_1}{\epsilon_2}>1$ ({\em i.e.,} $\inf_{1\leq j\leq k}|C^*_j|>z$), we can always find the appropriate values for $\eta$ and $\delta$ to satisfy $\frac{\epsilon_1}{\epsilon_2}>\frac{1}{\eta(1-\delta)}$. For example, we can set $\eta>\sqrt{\frac{\epsilon_2}{\epsilon_1}}$ and $\delta<1-\sqrt{\frac{\epsilon_2}{\epsilon_1}}$; obviously, if $\frac{\epsilon_1}{\epsilon_2}$ is close to $1$, the success probability $(1-\eta)^2$ could be small (we will show that how to boost the success probability in Section~\ref{sec-boost}). The running time  depends on the complexity of the subroutine $c$-approximation algorithm used in Step 2. For example, the algorithm of~\cite{charikar2001algorithms} takes $O\big(|S|^2D+k|S|^2\log |S|\big)$ time in $\mathbb{R}^D$.
\end{remark}

\begin{proof}(\textbf{of Theorem~\ref{theorem-kc2}})
We assume that $|S\cap C^*_j|\in(1\pm\delta)\frac{|C^*_j|}{n} |S|$ for each $C^*_j$, and $S$ has at most $z'=\frac{\epsilon_2}{k\eta}|S|$ points from $P\setminus P_{opt}$ (these events happen with probability at least $(1-\eta)^2$ due to Lemma~\ref{lem-imp1} and \ref{lem-imp2}).


Let $\mathbb{B}_S=\{Ball(h_l, r)\mid 1\leq l\leq k\}$ be the set of $k$ balls returned in Step 2 of Algorithm~\ref{alg-kc2}. Since $S\cap P_{opt}$ can be covered by $k$ balls with radius $r_{opt}$ and $|S\setminus P_{opt}|\leq z'$, the optimal radius for the instance $S$ with $z'$ outliers should be at most $r_{opt}$. 
Consequently, $r\leq c r_{opt}$. Moreover,
\begin{eqnarray}
|S\cap C^*_j|&\geq& (1-\delta)\frac{|C^*_j|}{n} |S|\geq (1-\delta)\frac{\epsilon_1}{k}|S|\nonumber\\
&>&(1-\delta)\frac{\epsilon_2}{\eta(1-\delta)k}|S|=\frac{\epsilon_2}{\eta k}|S|=z'\label{for-kc2-1}
\end{eqnarray}
for any $1\leq j\leq k$, where the last inequality comes from $\frac{\epsilon_1}{\epsilon_2}>\frac{1}{\eta(1-\delta)}$. Thus, if we perform $k$-center clustering with $z'$ outliers on $S$, the resulting $k$ balls must cover at least one point from each $C^*_j$ (since $|S\cap C^*_j|>z'$ from (\ref{for-kc2-1})). Through a similar manner as the proof of Theorem~\ref{the-kc1}, we know that $P_{opt}=\cup^k_{j=1}C^*_j$ is covered by the balls $\cup^{k}_{l=1}Ball(h_l, r+2 r_{opt})$, {\em i.e.}, $\Delta^{-z}_{\infty}(P, H)\leq r+2 r_{opt}\leq (c+2)r_{opt}$. 
%
%
%
%
\qed
\end{proof}

\section{Uniform Sampling for $k$-Median/Means Clustering with Outliers}
\label{sec-kmedian}
For the problem of $k$-means clustering with outliers, 
we apply the similar  ideas as Algorithm~\ref{alg-kc1} and~\ref{alg-kc2} (see Algorithm~\ref{alg-km} and~\ref{alg-km2}). However, the analyses are more complicated here. For ease of understanding, we present our high-level idea first. Also, due to the space limit, we show the extensions for  $k$-median clustering with outliers and their counterparts in arbitrary metric space in Section~\ref{sec-extension}.

\textbf{High-level idea.} Let $S$ be a large enough random sample from $P$.
 Denote by $O^*=\{o^*_1, \cdots, o^*_k\}$ the mean points of $\{C^*_1, \cdots, C^*_k\}$, respectively. We first show that $S\cap C^*_j$ can well approximate $C^*_j$ for each $1\leq j\leq k$. Informally, 
 \begin{eqnarray}
 \frac{|S\cap C^*_j|}{|S|}\approx \frac{|C^*_j|}{n} \text{ \hspace{0.1in }and \hspace{0.1in }} \frac{1}{|S\cap C^*_j|}\sum_{q\in S\cap C^*_j}||q-o^*_j||^2\approx\frac{1}{|C^*_j|}\sum_{p\in  C^*_j}||p-o^*_j||^2. \label{for-key}
 \end{eqnarray} 
We also define a transformation on $P_{opt}$ to help our analysis. 


\begin{definition}[Star Shaped Transformation]
\label{def-star}
For each point in $C^*_j$, we translate it to $o^*_j$; overall, we generate a new set of $n-z$ points located at $\{o^*_1, \cdots o^*_k\}$, where each $o^*_j$ has $|C^*_j|$ overlapping points. For any point $p\in C^*_j$ with $1\leq j\leq k$, denote by $\tilde{p}$ its transformed point; for any $U\subseteq P_{opt}$, denote by $\tilde{U}$ its transformed point set. 
\end{definition}

 \begin{wrapfigure}{r}{0.3\textwidth}
  \begin{center}
   \vspace{-0.45in}
    \includegraphics[width=0.22\textwidth]{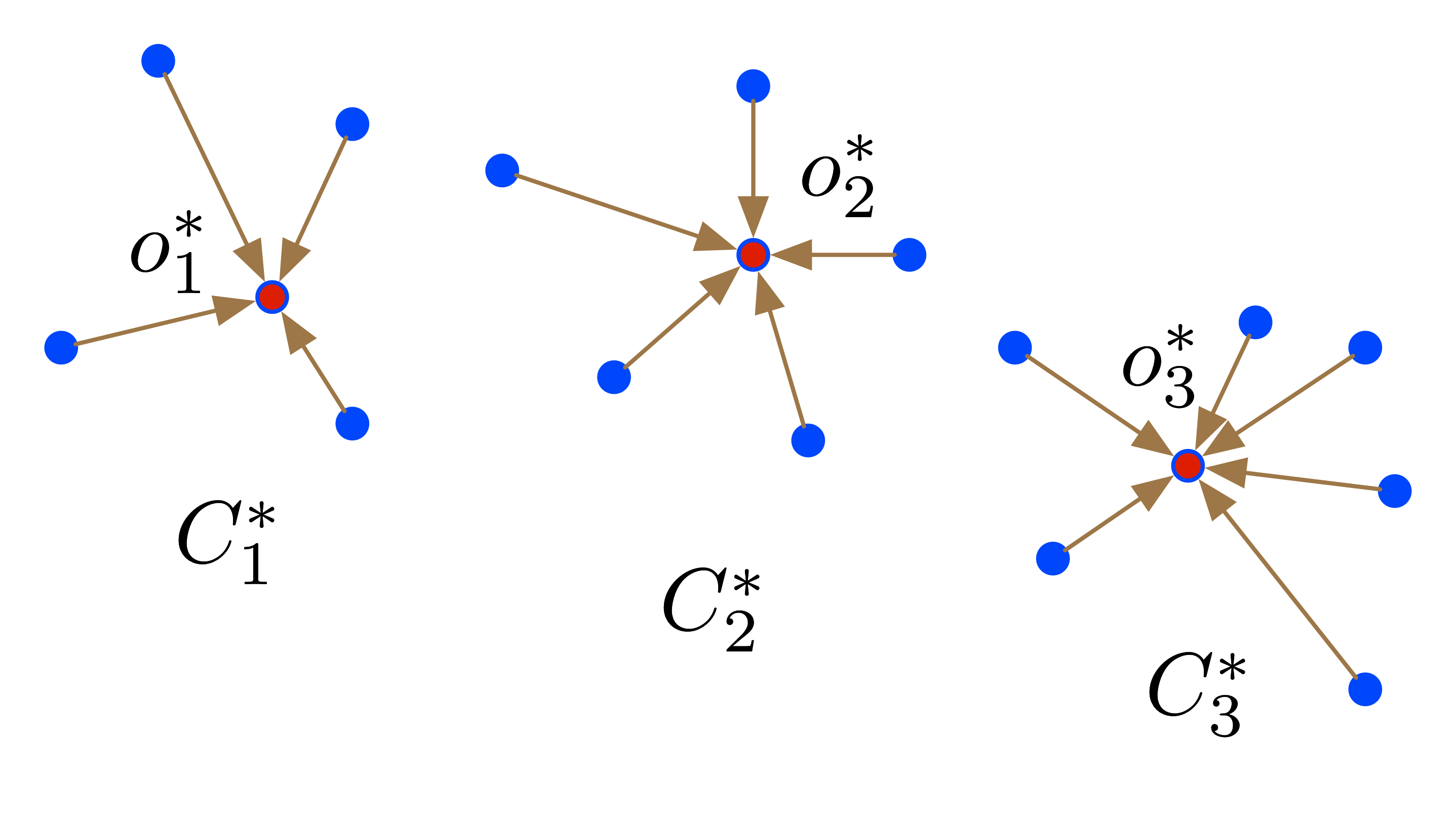}
  \end{center}
   \vspace{-0.3in}
  \caption{The transformation from $P_{opt}$ to $\tilde{P}_{opt}$.}
     \label{fig-key}
      \vspace{-0.4in}
\end{wrapfigure}

Since the transformation forms $k$ ``stars'' (see Figure.~\ref{fig-key}), we call it ``star shaped transformation''. 
%
Let $S_{opt}=S\cap P_{opt}$. By using (\ref{for-key}), we can prove that the clustering costs of $\tilde{P}_{opt}$ and $\tilde{S}_{opt}$ are close (after the normalization) for any given set of cluster centers. Let $H$ be the $k+k'$ cluster centers returned by Algorithm~\ref{alg-km}. 
Then, we can use $\tilde{S}_{opt}$ and $\tilde{P}_{opt}$ as the ``bridges'' between $S$ and $P$, so as to prove that $H$ yields an approximate solution for the instance $P$, {\em i.e.}, $\Delta^{-z}_2(P, H)$ is bounded.   

In Algorithm~\ref{alg-km2}, we run $k$-means with $z'$ outliers algorithm on the sample $S$ and return $k$ (rather than $k+k'$) cluster centers. So we need to modify the above idea to analyze the quality. Let $S_{in}$ be the set of $|S|-z'$ inliers of $S$ obtained in Step 2. If the ratio $\frac{\epsilon_1}{\epsilon_2}$ is large enough, we can prove that $|S_{in}\cap C^*_j|\approx |S\cap C^*_j|$ for each $1\leq j\leq k$. Therefore, we can replace ``$S$'' by ``$S_{in}$'' in (\ref{for-key}) and prove a similar quality guarantee for Algorithm~\ref{alg-km2}.

%

In Theorem~\ref{the-km} and~\ref{the-km2}, $\mathcal{L}$ denotes the maximum diameter of the $k$ clusters $C^*_1, \cdots, C^*_k$, {\em i.e.}, $\mathcal{L}$$=\max_{1\leq j\leq k}\max_{p, q\in C^*_j}$ $||p-q||$. Actually, our result can be viewed as an extension of the sub-linear time $k$-median/means clustering algorithms~\cite{meyerson2004k,mishra2001sublinear,czumaj2004sublinear} to the case with outliers. We also want to emphasize that the additive error is unavoidable even for the vanilla case (without outliers), if we require the sample complexity to be independent of the input size~\cite{mishra2001sublinear,czumaj2004sublinear}; the $k$-median clustering with outliers algorithm proposed by Meyerson {\em et al.}~\cite{meyerson2004k} does not yield an additive error, but it needs to discard more than $16z$ outliers and the sample size depends on the ratio $n/z$. We place the full proofs of Theorem~\ref{the-km} and~\ref{the-km2} in Section~\ref{sec-proof-km} and \ref{sec-proof-km2} respectively.  

\begin{theorem}
\label{the-km}
With probability at least $(1-\eta)^3$, the set of cluster centers $H$ returned by Algorithm~\ref{alg-km} results in a clustering cost $\Delta^{-z}_{2}(P, H)$ at most $\alpha \Delta^{-z}_{2}(P, O^*)+\beta\xi \mathcal{L}^2$, where $\alpha=\big(2+(4+4c)\frac{1+\delta}{1-\delta}\big)$ and $\beta=(4+4c)\frac{1+\delta}{1-\delta}$.
\end{theorem}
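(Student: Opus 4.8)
The plan is to bound $\Delta^{-z}_{2}(P,H)$ from above by charging the cost of clustering $P_{opt}$ with the returned centers $H$, and to route this charge through the star shaped transformation. Since $|P_{opt}|=n-z$, the set $P_{opt}$ is a feasible choice of the $n-z$ retained points, so $\Delta^{-z}_{2}(P,H)\leq\frac{1}{n-z}\sum_{j=1}^{k}\sum_{p\in C^*_j}dist(p,H)^2$. For each $p\in C^*_j$ I would bound $dist(p,H)^2$ by $\|p-h_j\|^2$, where $h_j\in H$ is the center nearest to $o^*_j$, and then apply the relaxed triangle inequality $\|p-h_j\|^2\leq 2\|p-o^*_j\|^2+2\,dist(o^*_j,H)^2$. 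Summing over $j$ and using that $O^*$ are the cluster means (so that $\frac{1}{n-z}\sum_j\sum_{p\in C^*_j}\|p-o^*_j\|^2=\Delta^{-z}_{2}(P,O^*)$) yields
$$\Delta^{-z}_{2}(P,H)\leq 2\,\Delta^{-z}_{2}(P,O^*)+\frac{2}{n-z}\sum_{j=1}^{k}|C^*_j|\,dist(o^*_j,H)^2.$$
The remaining term is exactly the (unnormalized) cost of $H$ on the transformed set $\tilde P_{opt}$, so the whole task reduces to bounding $\sum_j|C^*_j|\,dist(o^*_j,H)^2$.

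Next I would use the sample as a bridge, conditioning on three high-probability events, each holding with probability at least $1-\eta$ (which gives the overall $(1-\eta)^3$): the size approximation $|S\cap C^*_j|\in(1\pm\delta)\frac{|C^*_j|}{n}|S|$ for all $j$ (Lemma~\ref{lem-imp1}(ii)); a within-cluster concentration $\frac{1}{|S\cap C^*_j|}\sum_{q\in S\cap C^*_j}\|q-o^*_j\|^2\leq\frac{1}{|C^*_j|}\sum_{p\in C^*_j}\|p-o^*_j\|^2+\xi\mathcal{L}^2$ (a Hoeffding-type bound, valid since each term lies in $[0,\mathcal{L}^2]$); and $|S\setminus P_{opt}|\leq k'$ (Lemma~\ref{lem-imp2}). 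The size approximation lets me replace $|C^*_j|$ by $\frac{n}{(1-\delta)|S|}|S\cap C^*_j|$, turning the target term into a multiple of the transformed sample cost $\sum_j|S\cap C^*_j|\,dist(o^*_j,H)^2$. To control the latter I would invoke the relaxed triangle inequality again, this time per sample point $q\in S\cap C^*_j$, as $dist(o^*_j,H)^2\leq 2\|q-o^*_j\|^2+2\,dist(q,H)^2$; averaging over $q$ and summing over $j$ gives $\sum_j|S\cap C^*_j|\,dist(o^*_j,H)^2\leq 2W_S+2\sum_{q\in S_{opt}}dist(q,H)^2$, where $W_S=\sum_j\sum_{q\in S\cap C^*_j}\|q-o^*_j\|^2$.

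The key structural step is then to bound $\sum_{q\in S_{opt}}dist(q,H)^2$ by the cost of the $(k+k')$-means solution computed on $S$. Because $S$ contains at most $k'$ sampled outliers, placing one of the $k'$ extra centers on each such outlier (cost $0$) and keeping $O^*$ for the remaining inliers is a feasible $(k+k')$-means solution on $S$ of cost at most $W_S$; since $H$ is a $c$-approximate $(k+k')$-means solution, $\sum_{q\in S_{opt}}dist(q,H)^2\leq\sum_{q\in S}dist(q,H)^2\leq c\,W_S$. Chaining these estimates produces $\sum_j|S\cap C^*_j|\,dist(o^*_j,H)^2\leq(2+2c)W_S$, and the within-cluster concentration together with $\sum_j|C^*_j|=n-z$ converts $\frac{n}{(n-z)|S|}W_S$ into $(1+\delta)\Delta^{-z}_2(P,O^*)+(1+\delta)\xi\mathcal{L}^2$. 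Collecting the multiplicative factors $2$, $\frac{1}{1-\delta}$, $(2+2c)$ and $(1+\delta)$ reproduces exactly $\alpha=2+(4+4c)\frac{1+\delta}{1-\delta}$ and $\beta=(4+4c)\frac{1+\delta}{1-\delta}$.

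I expect the main obstacle to be the within-cluster cost concentration and keeping every relaxed-triangle-inequality pointed in the cost-increasing direction. Unlike the size estimate, the per-cluster variance cannot be approximated multiplicatively (a cluster may have tiny variance yet be sampled sparsely), which is precisely why the additive $\xi\mathcal{L}^2$ term is unavoidable; I must take the worst case uniformly over all $k$ clusters via a union bound (the source of one $\eta$ factor) and ensure the additive slack aggregates to $(n-z)\xi\mathcal{L}^2$ rather than $n\xi\mathcal{L}^2$. A secondary subtlety is that, for every upper bound, the nearest center must be taken relative to the mean $o^*_j$ (not relative to $p$ or $q$), so that $dist(o^*_j,H)$ genuinely sits on the larger side of each inequality.
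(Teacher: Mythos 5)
Your proposal is correct and follows essentially the same route as the paper's own proof: the same three conditioned events with probability $(1-\eta)^3$, the same star-shaped-transformation bridge (your $\sum_j|C^*_j|\,dist(o^*_j,H)^2$ is exactly $Cost(\tilde{P}_{opt},H)$, bounded via $\frac{1}{1-\delta}\frac{n}{|S|}Cost(\tilde{S}_{opt},H)\leq\frac{1}{1-\delta}\frac{n}{|S|}(2+2c)\,Cost(S_{opt},O^*)$), and the same feasibility trick for the $(k+k')$-means cost on $S$, yielding identical constants $\alpha$ and $\beta$. The only cosmetic difference is that you place the $k'$ singleton centers on the sampled outliers themselves, while the paper places them on the $k'$ farthest points of $S$ from $O^*$; both give $W\leq Cost(S_{opt},O^*)$.
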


\begin{remark}
\label{rem-km}
\textbf{(\rmnum{1})} 
In Step 2 of Algorithm~\ref{alg-km}, we can apply an $O(1)$-approximation $k$-means algorithm ({\em e.g.,} \cite{kanungo2004local}). If we assume $1/\delta$ and $1/\eta$ are fixed constants, 
then  the sample size $|S|=O( \frac{k}{\xi^2\epsilon_1}\log k)$, and both the factors $\alpha$ and $\beta$ are $O(1)$, {\em i.e.}, 
$\Delta^{-z}_{2}(P, H)\leq O(1)\Delta^{-z}_{2}(P, O^*)+O(1)\xi \mathcal{L}^2$; moreover, the number of returned cluster centers $|H|=k+O(\frac{\log k}{\xi^2})$ if $\frac{\epsilon_1}{\epsilon_2}=\Omega(1)$. 
\textbf{(\rmnum{2})} Similar to Theorem~\ref{the-kc1}, we use the same example in Section~\ref{sec-example} to show that the value of $k'$ cannot be further reduced in Algorithm~\ref{alg-km}. 
 \end{remark}

\begin{theorem}
\label{the-km2}
Assume $t=\eta(1-\delta)\frac{\epsilon_1}{\epsilon_2}>1$.
With probability at least $(1-\eta)^3$, the set of cluster centers $H$ returned by Algorithm~\ref{alg-km2} results in a clustering cost $\Delta^{-z}_{2}(P, H)$ at most $\alpha \Delta^{-z}_{2}(P, O^*)+\beta\xi \mathcal{L}^2$, where $\alpha=\big(2+(4+4c)\frac{t}{t-1}\frac{1+\delta}{1-\delta}\big)$ and $\beta=(4+4c)\frac{t}{t-1}\frac{1+\delta}{1-\delta}$.
\end{theorem}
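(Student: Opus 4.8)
The plan is to mirror the proof of Theorem~\ref{the-km}, replacing the role of the full sample $S$ by the set of sample inliers $S_{in}$ returned in Step~2, and to pay for the restriction by a single multiplicative factor $\frac{t}{t-1}$. First I would condition on three favorable events, each failing with probability at most $\eta$, so that they hold simultaneously with probability at least $(1-\eta)^3$: (i) $|S\cap C^*_j|\in(1\pm\delta)\frac{|C^*_j|}{n}|S|$ for every $j$ (Lemma~\ref{lem-imp1}(\rmnum{2})); (ii) $|S\setminus P_{opt}|\leq z'=\frac{\epsilon_2}{k\eta}|S|$ (Lemma~\ref{lem-imp2}); and (iii) the cost-approximation event underlying~(\ref{for-key}), namely that for each $j$ the empirical average of $\|q-o^*_j\|^2$ over $S\cap C^*_j$ is within an additive $\xi\mathcal{L}^2$ of the true average over $C^*_j$ (a Hoeffding bound, using that every squared within-cluster distance is at most $\mathcal{L}^2$).

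The key new step, absent from Theorem~\ref{the-km}, is to show that discarding $z'$ outliers from $S$ cannot destroy any optimal cluster. Exactly as in the derivation of~(\ref{for-kc2-1}), event~(i) together with the significance assumption gives $|S\cap C^*_j|\geq(1-\delta)\frac{\epsilon_1}{k}|S|$, and by the hypothesis $t=\eta(1-\delta)\frac{\epsilon_1}{\epsilon_2}>1$ this yields $\frac{z'}{|S\cap C^*_j|}\leq\frac{1}{t}$. Since at most $z'$ points are removed as outliers in total, I would conclude
\[
|S_{in}\cap C^*_j|\;\geq\;|S\cap C^*_j|-z'\;\geq\;\Big(1-\tfrac{1}{t}\Big)|S\cap C^*_j|\;=\;\tfrac{t-1}{t}\,|S\cap C^*_j|
\]
for every $j$. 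Hence each cluster retains a $\frac{t-1}{t}$ fraction of its sampled points, so I may substitute $S_{in}$ for $S$ in both relations of~(\ref{for-key}): the count ratio changes by a further factor in $[\frac{t-1}{t},1]$, and because the total (unnormalized) sum of squared distances over $S_{in}\cap C^*_j$ is no larger than that over $S\cap C^*_j$ while the denominator shrinks by at most $\frac{t-1}{t}$, the empirical average inflates by at most $\frac{t}{t-1}$. This is precisely the extra $\frac{t}{t-1}$ that separates the constants $\alpha,\beta$ here from those of Theorem~\ref{the-km}.

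With these estimates I would run the cost-transfer argument as for Algorithm~\ref{alg-km}. Since $S_{opt}=S\cap P_{opt}$ can be clustered by $O^*$ and $|S\setminus P_{opt}|\leq z'$, the optimal $k$-means-with-$z'$-outliers value on $S$ is at most the cost of $O^*$ on $S_{opt}$, so the $c$-approximate centers $H$ have controlled cost on $S_{in}$. I would then bound $\Delta^{-z}_2(P,H)$ by the normalized cost of $H$ on $P_{opt}$ (a feasible subset of $n-z$ inliers), split each term via the relaxed triangle inequality $dist(p,H)^2\leq 2\|p-o^*_j\|^2+2\,dist(o^*_j,H)^2$, and route the second part through the star-shaped transformation of Definition~\ref{def-star}: the normalized cost of $H$ on $\tilde{P}_{opt}$ is $\frac{1}{n-z}\sum_j|C^*_j|\,dist(o^*_j,H)^2$, which I compare to the weighted sum $\sum_j|S_{in}\cap C^*_j|\,dist(o^*_j,H)^2$ over the transformed inliers using the count estimates (contributing $\frac{1+\delta}{1-\delta}$ and $\frac{t}{t-1}$), and finally relate that weighted sum to the true cost of $H$ on $S_{in}$ by a second application of the relaxed triangle inequality, the residual within-cluster terms being absorbed into $\xi\mathcal{L}^2$ through event~(iii). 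Composing the two factor-$2$ inequalities with the $c$-approximation produces $(4+4c)$, giving $\alpha=2+(4+4c)\frac{t}{t-1}\frac{1+\delta}{1-\delta}$ and $\beta=(4+4c)\frac{t}{t-1}\frac{1+\delta}{1-\delta}$.

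The main obstacle will be the cost inflation in the second paragraph: the $z'$ removed outliers could be exactly the sampled points of $C^*_j$ lying farthest from $o^*_j$, so one cannot claim that $S_{in}\cap C^*_j$ has a \emph{smaller} average squared distance to $o^*_j$ than $S\cap C^*_j$. The correct move is to control the \emph{total} squared distance, which can only decrease under point removal, and to charge the shrunken denominator to the $\frac{t}{t-1}$ factor. One must also note that the empirical mean of $S_{in}\cap C^*_j$ may drift away from $o^*_j$, but since $o^*_j$ is used throughout as a \emph{fixed} reference center rather than as the empirical mean, every bias–variance splitting needed for the transfer remains valid. Matching the accumulated constants to the stated $\alpha$ and $\beta$ is then routine bookkeeping.
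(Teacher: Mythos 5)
Your proposal is correct and takes essentially the same route as the paper's proof: your retention bound $|S_{in}\cap C^*_j|\geq\big(1-\tfrac{1}{t}\big)|S\cap C^*_j|$ is exactly Lemma~\ref{lem-km2-1}, your cost transfer through the star-shaped transformation with the relaxed triangle inequality reproduces (\ref{for-km2-2-4})--(\ref{for-km2-2-7}), and your fix of bounding the \emph{total} squared distance (which is monotone under point removal) while charging the shrunken count to the factor $\tfrac{t}{t-1}$ is precisely how the paper combines Lemmas~\ref{lem-km2-2} and~\ref{lem-km2-3}. Your constant accounting ($2\times 2\times(1+c)=4+4c$ together with $\tfrac{1+\delta}{1-\delta}\tfrac{t}{t-1}$) and the $(1-\eta)^3$ probability convention likewise match the paper's.
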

Similar to Theorem~\ref{theorem-kc2}, as long as $\epsilon_1/\epsilon_2>1$, we can set $\eta>\sqrt{\frac{\epsilon_2}{\epsilon_1}}$ and $\delta<1-\sqrt{\frac{\epsilon_2}{\epsilon_1}}$ to keep $t>1$. 

%

\begin{algorithm}[tb]
   \caption{\textsc{Uniform Sampling $k$-Means Outliers \Rmnum{1}}}
   \label{alg-km}
\begin{algorithmic}
  \STATE {\bfseries Input:} An $(\epsilon_1, \epsilon_2)$-significant instance $P$ of $k$-means clustering with $z$ outliers in $\mathbb{R}^D$, and $|P|=n$; three parameters $\eta, \delta, \xi\in (0,1)$. 
   \STATE
\begin{enumerate}
\item Sample a set $S$ of $\max\{\frac{3k}{\delta^2\epsilon_1}\log\frac{2k}{\eta}, \frac{k}{2\xi^2\epsilon_1(1-\delta)}\log\frac{2k}{\eta}\}$ points uniformly at random from $P$.
\item Let $k'=\frac{1}{\eta}\frac{\epsilon_2}{k}|S|$, and solve the $(k+k')$-means clustering on $S$ by using any $c$-approximation algorithm with $c\geq 1$.  
\end{enumerate}
  \STATE {\bfseries Output} $H$, which is the set of $k+k'$ cluster centers returned in Step 2. 
\end{algorithmic}
\end{algorithm}

\begin{algorithm}[tb]
   \caption{\textsc{Uniform Sampling $k$-Means Outliers \Rmnum{2}}}
   \label{alg-km2}
\begin{algorithmic}
  \STATE {\bfseries Input:} An $(\epsilon_1, \epsilon_2)$-significant instance $P$ of $k$-means clustering with $z$ outliers in $\mathbb{R}^D$, and $|P|=n$; three parameters $\eta, \delta, \xi\in (0,1)$. 
   \STATE
\begin{enumerate}
\item Sample a set $S$ of $\max\{\frac{3k}{\delta^2\epsilon_1}\log\frac{2k}{\eta}, \frac{k}{2\xi^2\epsilon_1(1-\delta)}\log\frac{2k}{\eta}\}$ points uniformly at random from $P$.
\item Let $z'=\frac{1}{\eta}\frac{\epsilon_2}{k}|S|$, and solve the $k$-means clustering with $z'$ outliers on $S$ by using any $c$-approximation algorithm with $c\geq 1$. 
\end{enumerate}
  \STATE {\bfseries Output}  $H$, which is the set of $k$ cluster centers returned in Step 2. 
\end{algorithmic}
\end{algorithm}

\subsection{Proof of Theorem~\ref{the-km}}
\label{sec-proof-km}
The following lemma can be obtained via the Hoeffding's inequality (each $||q-o^*_j||^2$ can be viewed as a random variable between $0$ and $\mathcal{L}^2$)~\cite{alon2004probabilistic}.

\begin{lemma}
\label{lem-km-sample1}
We fix a cluster $C^*_j$. Given $\eta, \xi\in (0,1)$, if one uniformly selects a set $T$ of $\frac{1}{2\xi^2}\log\frac{2}{\eta}$ or more points at random from $C^*_j$, 
\begin{eqnarray}
\bigg|\frac{1}{|T|}\sum_{q\in T}||q-o^*_j||^2-\frac{1}{|C^*_j|}\sum_{p\in C^*_j}||p-o^*_j||^2\bigg|\leq \xi \mathcal{L}^2
\end{eqnarray}
with probability at least $1-\eta$.
\end{lemma}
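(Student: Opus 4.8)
The plan is to recognize this as a direct application of Hoeffding's inequality to bounded i.i.d.\ random variables, exactly as the paper's parenthetical hint suggests. First I would establish the crucial boundedness claim: for every point $q \in C^*_j$ we have $\|q - o^*_j\| \leq \mathcal{L}$. This holds because $o^*_j = \frac{1}{|C^*_j|}\sum_{p \in C^*_j} p$ is a convex combination of the points of $C^*_j$, so by the triangle inequality $\|q - o^*_j\| = \big\|\frac{1}{|C^*_j|}\sum_{p}(q-p)\big\| \leq \frac{1}{|C^*_j|}\sum_{p} \|q-p\| \leq \mathcal{L}$, where the final bound uses that $\mathcal{L}$ upper bounds the diameter of $C^*_j$. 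Consequently each quantity $\|q - o^*_j\|^2$ lies in the interval $[0, \mathcal{L}^2]$.

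Next I would set up the random variables. Writing $T = \{q_1, \ldots, q_m\}$ with $m = |T|$ for the uniformly drawn sample, define $X_i = \|q_i - o^*_j\|^2$. Each $X_i$ takes values in $[0, \mathcal{L}^2]$ and has expectation equal to the population average $\mu := \frac{1}{|C^*_j|}\sum_{p \in C^*_j}\|p - o^*_j\|^2$, which is precisely the right-hand quantity in the lemma. The sample average $\frac{1}{m}\sum_i X_i$ is exactly the left-hand quantity, so the claim becomes the concentration statement $\big|\frac{1}{m}\sum_i X_i - \mu\big| \leq \xi \mathcal{L}^2$.

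Then I would invoke Hoeffding's inequality. Since each $X_i$ is confined to a range of width $\mathcal{L}^2$, Hoeffding yields $\Pr\big[\big|\frac{1}{m}\sum_i X_i - \mu\big| \geq \xi\mathcal{L}^2\big] \leq 2\exp\!\big(-2m\xi^2\big)$, where the exponent simplifies because $\sum_{i=1}^m (\mathcal{L}^2)^2 = m\mathcal{L}^4$ cancels against the $m^2$ from the deviation threshold. Requiring $m \geq \frac{1}{2\xi^2}\log\frac{2}{\eta}$ then forces this failure probability to be at most $\eta$ (reading $\log$ as the natural logarithm, which is what makes the stated threshold tight), and this gives the lemma.

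The computation is routine, so there is no genuinely hard step; the only points requiring care are (i) the boundedness claim $\|q-o^*_j\|\leq \mathcal{L}$, which hinges on $o^*_j$ being the mean and hence lying in the convex hull of $C^*_j$, and (ii) the fact that $T$ is a sample drawn from the finite set $C^*_j$ without replacement rather than an i.i.d.\ sequence. The latter is harmless: Hoeffding's tail bound for sampling without replacement is no weaker than the i.i.d.\ bound, so the same estimate $2\exp(-2m\xi^2)$ applies verbatim.
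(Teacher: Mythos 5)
Your proof is correct and takes essentially the same route as the paper, which justifies Lemma~\ref{lem-km-sample1} by exactly this appeal to Hoeffding's inequality with each $\|q-o^*_j\|^2$ viewed as a random variable in $[0,\mathcal{L}^2]$; your computation $2\exp(-2|T|\xi^2)\leq\eta$ for $|T|\geq\frac{1}{2\xi^2}\log\frac{2}{\eta}$ matches the stated sample size. The two details you add---the convex-hull argument showing $\|q-o^*_j\|\leq\mathcal{L}$ since $o^*_j$ is the mean of $C^*_j$, and the observation that Hoeffding's tail bound remains valid under sampling without replacement---are correct refinements of points the paper leaves implicit.
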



\begin{lemma}
\label{lem-km-sample3}
If one uniformly selects a set $S$ of $\max\{\frac{3k}{\delta^2\epsilon_1}\log\frac{2k}{\eta}, \frac{k}{2\xi^2\epsilon_1(1-\delta)}\log\frac{2k}{\eta}\}$ points at random from $P$, 
\begin{eqnarray}
\sum_{q\in S\cap C^*_j}||q-o^*_j||^2\leq (1+\delta)\frac{|S|}{n}\big(\sum_{p\in C^*_j}||p-o^*_j||^2+\xi |C^*_j| \mathcal{L}^2\big)
\end{eqnarray}
for $1\leq j\leq k$, with probability at least $(1-\eta)^2$.
\end{lemma}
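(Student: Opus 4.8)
The plan is to combine the size-concentration bound of Lemma~\ref{lem-imp1}(\rmnum{2}) with the within-cluster cost-concentration bound of Lemma~\ref{lem-km-sample1}, applied to the sample restricted to each optimal cluster. Write $\Phi_j=\sum_{p\in C^*_j}||p-o^*_j||^2$ and $\hat\Phi_j=\sum_{q\in S\cap C^*_j}||q-o^*_j||^2$, so that the target inequality reads $\hat\Phi_j\le(1+\delta)\frac{|S|}{n}\big(\Phi_j+\xi|C^*_j|\mathcal{L}^2\big)$ for every $1\le j\le k$.

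First I would introduce the event $A$ that $|S\cap C^*_j|\in(1\pm\delta)\frac{|C^*_j|}{n}|S|$ holds simultaneously for all $1\le j\le k$; by Lemma~\ref{lem-imp1}(\rmnum{2}) this happens with probability at least $1-\eta$, because the first term inside the sample-size maximum is exactly the threshold that lemma demands. On $A$, the lower half of the bound together with significance ($|C^*_j|\ge\frac{\epsilon_1}{k}n$) yields $|S\cap C^*_j|\ge(1-\delta)\frac{\epsilon_1}{k}|S|$, and substituting the second term of the sample-size maximum gives $|S\cap C^*_j|\ge\frac{1}{2\xi^2}\log\frac{2k}{\eta}$. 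This is precisely the sample size required by Lemma~\ref{lem-km-sample1} when its failure probability is set to $\eta/k$.

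Next I would observe that $S\cap C^*_j$ is itself a uniform sample of $C^*_j$, so Lemma~\ref{lem-km-sample1} applies with $T=S\cap C^*_j$. Conditioning on $A$ (indeed on the exact realized sizes, which on $A$ all exceed the threshold) leaves the points of $S\cap C^*_j$ uniformly distributed in $C^*_j$, so for each $j$ the one-sided inequality $\frac{1}{|S\cap C^*_j|}\hat\Phi_j\le\frac{1}{|C^*_j|}\Phi_j+\xi\mathcal{L}^2$ fails with probability at most $\eta/k$; a union bound over the $k$ clusters defines an event $B$ with $\mathbb{P}(B\mid A)\ge1-\eta$. Multiplying this inequality by $|S\cap C^*_j|$ and then invoking the upper half of $A$, namely $|S\cap C^*_j|\le(1+\delta)\frac{|C^*_j|}{n}|S|$, collapses the right-hand side to exactly $(1+\delta)\frac{|S|}{n}\big(\Phi_j+\xi|C^*_j|\mathcal{L}^2\big)$, which is the claim.

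The main obstacle is the probability bookkeeping: the events $A$ and $B$ both depend on the same sample $S$ and are therefore not independent, so a naive union bound would only deliver $1-2\eta$. The resolution is to chain the bounds multiplicatively as $\mathbb{P}(A\cap B)=\mathbb{P}(A)\,\mathbb{P}(B\mid A)\ge(1-\eta)^2$; the point that justifies $\mathbb{P}(B\mid A)\ge1-\eta$ is precisely that conditioning on $A$ and on the induced cluster sizes leaves each $S\cap C^*_j$ a uniform sample of the correct (super-threshold) size inside $C^*_j$, keeping Lemma~\ref{lem-km-sample1} applicable under the conditioning. Everything else is a routine substitution of the two sample-size thresholds.
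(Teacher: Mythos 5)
Your proposal is correct and follows essentially the same route as the paper's proof: the first sample-size threshold feeds Lemma~\ref{lem-imp1}(\rmnum{2}), the second guarantees $|S\cap C^*_j|\geq\frac{1}{2\xi^2}\log\frac{2k}{\eta}$ so that Lemma~\ref{lem-km-sample1} (with $\eta/k$ and a union bound) applies, and the upper half of the size bound converts the per-point average into the stated inequality. Your explicit chaining $\mathbb{P}(A\cap B)=\mathbb{P}(A)\,\mathbb{P}(B\mid A)\geq(1-\eta)^2$, justified by conditioning on the realized cluster sizes, is in fact a more careful account of the $(1-\eta)^2$ probability than the paper's proof, which simply multiplies the two success probabilities without comment.
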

\begin{proof}
Suppose $|S|=\max\{\frac{3k}{\delta^2\epsilon_1}\log\frac{2k}{\eta}, \frac{k}{2\xi^2\epsilon_1(1-\delta)}\log\frac{2k}{\eta}\}$. According to Lemma~\ref{lem-imp1}, $|S|\geq\frac{3k}{\delta^2\epsilon_1}\log\frac{2k}{\eta}$ indicates that $|S\cap C^*_j|\geq (1-\delta)\frac{|C^*_j|}{n}|S|\geq (1-\delta)\frac{\epsilon_1}{k}|S|$ for each $1\leq j\leq k$; further, $|S|\geq\frac{k}{2\xi^2\epsilon_1(1-\delta)}\log\frac{2k}{\eta}$ implies that $(1-\delta)\frac{\epsilon_1}{k}|S|\geq \frac{1}{2\xi^2}\log\frac{2k}{\eta}$. Therefore, $|S\cap C^*_j|\geq\frac{1}{2\xi^2}\log\frac{2k}{\eta}$ and by Lemma~\ref{lem-km-sample1} we have
\begin{eqnarray}
\bigg|\frac{1}{|S\cap C^*_j|}\sum_{q\in S\cap C^*_j}||q-o^*_j||^2-\frac{1}{|C^*_j|}\sum_{p\in C^*_j}||p-o^*_j||^2\bigg|\leq \xi\mathcal{L}^2 \label{for-km-sample3-1}
\end{eqnarray}
with probability at least $1-\eta$ for each $1\leq j\leq k$ ($\eta$ is replaced by $\eta/k$ in  Lemma~\ref{lem-km-sample1} for taking the union bound). From (\ref{for-km-sample3-1}) we know that
\begin{eqnarray}
\sum_{q\in S\cap C^*_j}||q-o^*_j||^2&\leq& |S\cap C^*_j|\big(\frac{1}{|C^*_j|}\sum_{p\in C^*_j}||p-o^*_j||^2+\xi \mathcal{L}^2\big)\nonumber\\
&\leq&(1+\delta)\frac{|C^*_j|}{n}|S|\big(\frac{1}{|C^*_j|}\sum_{p\in C^*_j}||p-o^*_j||^2+\xi \mathcal{L}^2\big)\nonumber\\
&=&(1+\delta)\frac{|S|}{n}\big(\sum_{p\in C^*_j}||p-o^*_j||^2+\xi |C^*_j| \mathcal{L}^2\big), 
\end{eqnarray}
where the second inequality comes from Lemma~\ref{lem-imp1}. So we complete the proof.
\qed
\end{proof}

%


%

We define a new notation that is used in the following lemmas. Given two point sets $X$ and $Y\subset\mathbb{R}^D$, we use $Cost(X, Y)$ to denote the clustering cost of $X$ by taking $Y$ as the cluster centers, {\em i.e.}, $Cost(X, Y)=\sum_{q\in X}(dist(q, Y))^2$. Obviously, $\Delta^{-z}_2(P, H)=\frac{1}{n-z}Cost(P_{opt}, H)$. Let $S_{opt}=S\cap P_{opt}$. 
Below, we prove the upper bounds of $Cost(S_{opt}, O^*)$, $Cost(\tilde{S}_{opt}, H)$, and $Cost(\tilde{P}_{opt}, H)$ respectively, and use these bounds to complete the proof of Theorem~\ref{the-km}. For convenience, we always assume that the events mentioned in Lemma~\ref{lem-imp1},~\ref{lem-imp2}, and~\ref{lem-km-sample3} all happen so that we do not need to repeatedly state the success probabilities. 


\begin{lemma}
\label{lem-km-1}
$Cost(S_{opt}, O^*)\leq (1+\delta)\frac{|S|}{n}(n-z)\big(\Delta^{-z}_2(P, O^*)+\xi  \mathcal{L}^2\big)$.
\end{lemma}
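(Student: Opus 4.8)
The plan is to reduce the claim to a clusterwise application of Lemma~\ref{lem-km-sample3}, followed by a summation. First I would split $S_{opt}=S\cap P_{opt}=\bigcup_{j=1}^k(S\cap C^*_j)$ and write
\begin{eqnarray*}
Cost(S_{opt}, O^*)=\sum_{q\in S_{opt}}(dist(q, O^*))^2=\sum_{j=1}^k\sum_{q\in S\cap C^*_j}(dist(q, O^*))^2.
\end{eqnarray*}
Since $o^*_j\in O^*$, for every $q\in C^*_j$ we have $dist(q, O^*)\leq ||q-o^*_j||$, so the right-hand side is bounded above by $\sum_{j=1}^k\sum_{q\in S\cap C^*_j}||q-o^*_j||^2$.

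Next I would invoke Lemma~\ref{lem-km-sample3} for each $j$ (under the events assumed just before the lemma, all $k$ inequalities hold simultaneously, so this step costs nothing further) to get
\begin{eqnarray*}
\sum_{q\in S\cap C^*_j}||q-o^*_j||^2&\leq&(1+\delta)\frac{|S|}{n}\Big(\sum_{p\in C^*_j}||p-o^*_j||^2+\xi |C^*_j| \mathcal{L}^2\Big),
\end{eqnarray*}
and then sum these over $1\leq j\leq k$, pulling the common factor $(1+\delta)\frac{|S|}{n}$ out front.

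The final step is to identify the two resulting global sums. For $p\in C^*_j$ the nearest center in $O^*$ is $o^*_j$, because $\{C^*_1,\dots,C^*_k\}$ is the optimal partition induced by the optimal means $O^*$; hence $\sum_{j=1}^k\sum_{p\in C^*_j}||p-o^*_j||^2=Cost(P_{opt}, O^*)=(n-z)\Delta^{-z}_2(P, O^*)$, using the identity $\Delta^{-z}_2(P, O^*)=\frac{1}{n-z}Cost(P_{opt}, O^*)$ recorded before the lemma. Likewise $\sum_{j=1}^k\xi |C^*_j| \mathcal{L}^2=\xi\mathcal{L}^2\sum_{j=1}^k|C^*_j|=\xi\mathcal{L}^2(n-z)$ since $\sum_j|C^*_j|=|P_{opt}|=n-z$. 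Substituting both identities and factoring out $(n-z)$ yields $Cost(S_{opt}, O^*)\leq(1+\delta)\frac{|S|}{n}(n-z)\big(\Delta^{-z}_2(P, O^*)+\xi\mathcal{L}^2\big)$, which is exactly the claim.

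I expect no serious obstacle here, since Lemma~\ref{lem-km-sample3} already absorbs the probabilistic concentration work; the proof is essentially bookkeeping. The only points requiring care are the pointwise inequality $dist(q, O^*)\leq ||q-o^*_j||$, keeping track that the per-cluster additive terms $\xi|C^*_j|\mathcal{L}^2$ sum to $\xi\mathcal{L}^2(n-z)$, and the identification of $\sum_j\sum_{p\in C^*_j}||p-o^*_j||^2$ with $(n-z)\Delta^{-z}_2(P, O^*)$, which relies on $O^*$ being the true optimal means and $\{C^*_j\}$ their induced clusters.
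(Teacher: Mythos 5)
Your proof is correct and follows essentially the same route as the paper's: both decompose $Cost(S_{opt}, O^*)$ over the clusters $S\cap C^*_j$, apply Lemma~\ref{lem-km-sample3} to each cluster simultaneously, and then identify $\sum^k_{j=1}\sum_{p\in C^*_j}||p-o^*_j||^2=(n-z)\Delta^{-z}_2(P, O^*)$ and $\sum^k_{j=1}|C^*_j|=n-z$. The only difference is cosmetic: you justify the pointwise step via $dist(q, O^*)\leq ||q-o^*_j||$ where the paper writes the decomposition as an equality, so your version is if anything slightly more careful bookkeeping.
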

\begin{proof}
First, we have 
\begin{eqnarray}
Cost(S_{opt}, O^*)&=&\sum^k_{j=1}\sum_{q\in S\cap C^*_j}||q-o^*_j||^2\nonumber\\
&\leq&(1+\delta)\frac{|S|}{n}\sum^k_{j=1}\big(\sum_{p\in C^*_j}||p-o^*_j||^2+\xi |C^*_j| \mathcal{L}^2\big)
\end{eqnarray} 
by Lemma~\ref{lem-km-sample3}. Further, since $\sum^k_{j=1}\sum_{p\in C^*_j}||p-o^*_j||^2=(n-z)\Delta^{-z}_2(P, O^*)$ and $\sum^k_{j=1}|C^*_j|=n-z$, we know that $Cost(S_{opt}, O^*)$ is no larger than $(1+\delta)\frac{|S|}{n}(n-z)\big(\Delta^{-z}_2(P, O^*)+\xi  \mathcal{L}^2\big)$.
\qed
\end{proof}

\begin{lemma}
\label{lem-km-2}
$Cost(\tilde{S}_{opt}, H)\leq (2+2c)Cost(S_{opt}, O^*)$.
\end{lemma}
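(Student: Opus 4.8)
The plan is to bound $Cost(\tilde{S}_{opt}, H)$ by routing through $S_{opt}$ in two independent ways and then invoking the $c$-approximation guarantee of the subroutine used in Step 2 of Algorithm~\ref{alg-km}. Observe first that after the star shaped transformation every $q\in S\cap C^*_j$ is located at $o^*_j$, so $Cost(\tilde{S}_{opt}, H)=\sum_{j=1}^k |S\cap C^*_j|\,(dist(o^*_j,H))^2=\sum_{q\in S_{opt}}(dist(\tilde{q},H))^2$. I would then relate each transformed term back to its original point $q$: for $q\in S\cap C^*_j$ the triangle inequality gives $dist(o^*_j,H)\le ||q-o^*_j||+dist(q,H)$, and squaring together with $(a+b)^2\le 2a^2+2b^2$ yields $(dist(o^*_j,H))^2\le 2||q-o^*_j||^2+2(dist(q,H))^2$. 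Summing over all $q\in S_{opt}$ gives
\[
Cost(\tilde{S}_{opt}, H)\le 2\sum_{q\in S_{opt}}||q-o^*_{j(q)}||^2+2\,Cost(S_{opt},H)=2\,Cost(S_{opt},O^*)+2\,Cost(S_{opt},H),
\]
where $j(q)$ denotes the optimal cluster containing $q$ and the first sum is identified with $Cost(S_{opt},O^*)$ under the assignment induced by the optimal clustering.

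Next I would control the second term. Since $S_{opt}\subseteq S$, trivially $Cost(S_{opt},H)\le Cost(S,H)$, and because $H$ is produced by a $c$-approximation algorithm for $(k+k')$-means on $S$, we have $Cost(S,H)\le c\cdot OPT_{k+k'}(S)$, where $OPT_{k+k'}(S)$ denotes the optimal $(k+k')$-means cost of the sample $S$. It therefore remains only to upper bound $OPT_{k+k'}(S)$ by $Cost(S_{opt},O^*)$.

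The crucial step is to exhibit an explicit feasible set of $k+k'$ centers whose cost on $S$ is at most $Cost(S_{opt},O^*)$, namely $O^*\cup(S\setminus P_{opt})$. Under the events assumed throughout this subsection (in particular Lemma~\ref{lem-imp2}), $|S\setminus P_{opt}|\le k'$, so this candidate set has at most $k+k'$ points and is feasible. Evaluating its cost, each sampled outlier in $S\setminus P_{opt}$ serves as its own center and contributes $0$, while each $q\in S\cap C^*_j$ can be charged to $o^*_j\in O^*$ and contributes at most $||q-o^*_j||^2$; hence $OPT_{k+k'}(S)\le Cost(S_{opt},O^*)$. Combining the three bounds yields
\[
Cost(\tilde{S}_{opt}, H)\le 2\,Cost(S_{opt},O^*)+2c\,Cost(S_{opt},O^*)=(2+2c)\,Cost(S_{opt},O^*),
\]
as claimed.

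The main obstacle is precisely this last construction: to make the approximation guarantee useful one must produce a genuinely cheap $(k+k')$-means solution for $S$, and this is exactly where the extra budget $k'$ is spent, by dedicating one center to each of the at most $k'$ sampled outliers so that they contribute nothing. That this is legitimate depends entirely on Lemma~\ref{lem-imp2} capping $|S\setminus P_{opt}|$ by $k'$. Everything else is routine triangle-inequality bookkeeping; the only points requiring care are the $(a+b)^2\le 2a^2+2b^2$ relaxation (responsible for the factor $2$ in front of both terms) and the convention, inherited from Lemma~\ref{lem-km-1}, that $Cost(S_{opt},O^*)$ is evaluated with the assignment induced by the optimal clustering.
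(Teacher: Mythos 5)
Your proof is correct and takes essentially the same route as the paper's: the identical star-transformation triangle-inequality step giving $Cost(\tilde{S}_{opt},H)\leq 2Cost(S_{opt},O^*)+2Cost(S_{opt},H)$, then $Cost(S_{opt},H)\leq Cost(S,H)\leq c\cdot W$, and finally an explicit cheap $(k+k')$-means solution on $S$ to bound the optimum $W$ by $Cost(S_{opt},O^*)$. The only cosmetic difference is your candidate center set $O^*\cup(S\setminus P_{opt})$, whose feasibility uses Lemma~\ref{lem-imp2} directly, versus the paper's $O^*\cup S'$ with $S'$ the $k'$ farthest points of $S$ from $O^*$, which is feasible by construction and instead invokes Lemma~\ref{lem-imp2} afterwards to bound $(|S|-k')\Delta^{-k'}_2(S,O^*)$ by $Cost(S_{opt},O^*)$.
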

\begin{proof}
We fix a point $q\in S_{opt}$, and assume that the nearest neighbors of $q$ and $\tilde{q}$ in $H$ are $h_{j_q}$ and $h_{\tilde{j}_q}$, respectively. Then, we have
\begin{eqnarray}
||\tilde{q}-h_{\tilde{j}_q}||^2\leq ||\tilde{q}-h_{j_q}||^2\leq 2||\tilde{q}-q||^2+2||q-h_{j_q}||^2\label{for-km-2-1}
\end{eqnarray}
via the triangle inequality. Therefore,
\begin{eqnarray}
\sum_{q\in S_{opt}}||\tilde{q}-h_{\tilde{j}_q}||^2&\leq& 2\sum_{q\in S_{opt}}||\tilde{q}-q||^2+2\sum_{q\in S_{opt}}||q-h_{j_q}||^2,\nonumber\\
\Longrightarrow Cost(\tilde{S}_{opt}, H)&\leq& 2Cost(S_{opt}, O^*)+2Cost(S_{opt}, H).\label{for-km-2-2}
\end{eqnarray} 
Moreover, since $S_{opt}\subseteq S$ (because $S_{opt}=S\cap P_{opt}$) and $H$ yields a $c$-approximate clustering cost of the $(k+k')$-means clustering on $S$, we have 
\begin{eqnarray}
Cost(S_{opt}, H)\leq Cost(S, H)\leq c\cdot W,\label{for-km-2-3}
\end{eqnarray}
where $W$ is the optimal clustering cost of $(k+k')$-means clustering on $S$. Let $S'$ be the $k'$ farthest points of $S$ to $O^*$, then the set $O^*\cup S'$ also  
forms a solution for $(k+k')$-means clustering on $S$; namely, $S$ is partitioned into $k+k'$ clusters where each point of $S'$ is a cluster having a single point. Obviously, such a clustering yields a clustering cost $(|S|-k')\Delta^{-k'}_2(S, O^*)$. Consequently,
\begin{eqnarray}
W\leq (|S|-k')\Delta^{-k'}_2(S, O^*).\label{for-km-2-4}
\end{eqnarray}
Also, Lemma~\ref{lem-imp2}  shows that $S$ contains at most $k'$ points from $P\setminus P_{opt}$, {\em i.e.}, $|S_{opt}|\geq |S|-k'$. Thus, $Cost(S_{opt}, O^*)\geq (|S|-k')\Delta^{-k'}_2(S, O^*)$. Together with (\ref{for-km-2-2}), (\ref{for-km-2-3}), and (\ref{for-km-2-4}), we have $Cost(\tilde{S}_{opt}, H)\leq (2+2c)Cost(S_{opt}, O^*)$.
\qed
\end{proof}

\begin{lemma}
\label{lem-km-3}
$Cost(\tilde{P}_{opt}, H)\leq \frac{1}{1-\delta}\frac{n}{|S|}Cost(\tilde{S}_{opt}, H)$.
\end{lemma}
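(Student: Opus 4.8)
The plan is to exploit the key structural feature of the star shaped transformation: after the transformation, both $\tilde{P}_{opt}$ and $\tilde{S}_{opt}$ are supported on exactly the same $k$ locations $\{o^*_1, \dots, o^*_k\}$, and differ only in the multiplicity attached to each location. Concretely, in $\tilde{P}_{opt}$ the point $o^*_j$ carries weight $|C^*_j|$, while in $\tilde{S}_{opt}$ it carries weight $|S\cap C^*_j|$. Since the cost of a point mass against the center set $H$ is simply its multiplicity times $(dist(o^*_j, H))^2$, both sides of the claimed inequality decompose as weighted sums over the very same quantities $(dist(o^*_j,H))^2$:
\[
Cost(\tilde{P}_{opt}, H)=\sum_{j=1}^k |C^*_j|\,(dist(o^*_j, H))^2, \qquad Cost(\tilde{S}_{opt}, H)=\sum_{j=1}^k |S\cap C^*_j|\,(dist(o^*_j, H))^2.
\]

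First I would record these two identities explicitly, so that the problem reduces to comparing the coefficients $|C^*_j|$ and $|S\cap C^*_j|$ term by term. The comparison is supplied directly by Lemma~\ref{lem-imp1}(\rmnum{2}), which, under the events we assume throughout this section, gives $|S\cap C^*_j|\geq (1-\delta)\frac{|C^*_j|}{n}|S|$ for every $1\leq j\leq k$. Rearranging yields the uniform bound $|C^*_j|\leq \frac{1}{1-\delta}\frac{n}{|S|}\,|S\cap C^*_j|$, in which the factor $\frac{1}{1-\delta}\frac{n}{|S|}$ crucially does not depend on $j$.

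Then I would multiply this bound by the nonnegative quantity $(dist(o^*_j, H))^2$ and sum over $j$. Because the factor is independent of $j$, it pulls out of the summation, turning $\sum_{j=1}^k |C^*_j|\,(dist(o^*_j,H))^2$ into at most $\frac{1}{1-\delta}\frac{n}{|S|}\sum_{j=1}^k |S\cap C^*_j|\,(dist(o^*_j,H))^2$, which is exactly $\frac{1}{1-\delta}\frac{n}{|S|}Cost(\tilde{S}_{opt}, H)$. This closes the argument.

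There is essentially no hard step here; the only thing to get right is the bookkeeping of the transformation, namely that the cost against $H$ sees only the location $o^*_j$ together with its multiplicity, and that the distances $(dist(o^*_j, H))^2$ are literally shared between the two sums. Once that identification is made, the inequality is an immediate consequence of the per-cluster size estimate. The mild point worth stating clearly is that the same multiplicative factor works simultaneously for all $j$, which is precisely what allows it to be extracted from the sum rather than demanding a more delicate term-by-term argument.
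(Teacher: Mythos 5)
Your proposal is correct and follows essentially the same route as the paper's own proof: both rewrite $Cost(\tilde{P}_{opt}, H)$ and $Cost(\tilde{S}_{opt}, H)$ as weighted sums of $\big(dist(o^*_j, H)\big)^2$ over the shared locations $\{o^*_1,\cdots,o^*_k\}$, invoke Lemma~\ref{lem-imp1} to get $|C^*_j|\leq \frac{1}{1-\delta}\frac{n}{|S|}|S\cap C^*_j|$ for every $j$, and pull the $j$-independent factor out of the sum. No gaps; your explicit remark that the factor is uniform in $j$ is the same (implicit) observation the paper relies on.
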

\begin{proof}
From the constructions of $\tilde{P}_{opt}$ and $\tilde{S}_{opt}$, we know that they are overlapping points locating at $\{o^*_1, \cdots, o^*_k\}$. From Lemma~\ref{lem-imp1}, we know $|S\cap C^*_j|\geq (1-\delta)\frac{|C^*_j|}{n}|S|$, {\em i.e.}, $|C^*_j|\leq \frac{1}{1-\delta}\frac{n}{|S|}|S\cap C^*_j|$ for $1\leq j\leq k$. Overall, we have 
$Cost(\tilde{P}_{opt}, H)=\sum^k_{j=1}|C^*_j|\big(dist(o^*_j, H)\big)^2$ that is at most 
$\frac{1}{1-\delta}\frac{n}{|S|}\sum^k_{j=1}|S\cap C^*_j|\big(dist(o^*_j, H)\big)^2= \frac{1}{1-\delta}\frac{n}{|S|}Cost(\tilde{S}_{opt}, H)$.
%
%
%
%
\qed
\end{proof}

Now, we are ready to prove Theorem~\ref{the-km}. Note that $(n-z)\Delta^{-z}_2(P, H)$ actually is the $|H|$-means clustering cost of $P$ by removing the farthest $z$ points to $H$, and $|P_{opt}|=n-z$. So we have $(n-z)\Delta^{-z}_2(P, H)\leq Cost(P_{opt}, H)$. Further, by using a similar manner of (\ref{for-km-2-2}), we have $Cost(P_{opt}, H)\leq 2Cost(P_{opt}, O^*)+2Cost(\tilde{P}_{opt}, H)$.
 Therefore,
\begin{eqnarray}
\Delta^{-z}_2(P, H)&\leq&\frac{1}{n-z}Cost(P_{opt}, H)\leq \frac{2}{n-z}\Big(Cost(P_{opt}, O^*)+Cost(\tilde{P}_{opt}, H)\Big)\nonumber\\
&\leq& \frac{2}{n-z}\Big(Cost(P_{opt}, O^*)+\frac{1}{1-\delta}\frac{n}{|S|}Cost(\tilde{S}_{opt}, H) \Big) \label{for-the-km-f1}\\
&\leq& \frac{2}{n-z}\Big(Cost(P_{opt}, O^*)+\frac{1}{1-\delta}\frac{n}{|S|}(2+2c)Cost(S_{opt}, O^*) \Big) \label{for-the-km-f2}\\
&\leq& \frac{2}{n-z}\Big(Cost(P_{opt}, O^*)+\frac{1+\delta}{1-\delta}(2+2c)(n-z)\big(\Delta^{-z}_2(P, O^*)+\xi  \mathcal{L}^2\big)\Big) \label{for-the-km-f3} \\
&=&\big(2+(4+4c)\frac{1+\delta}{1-\delta}\big)\Delta^{-z}_{2}(P, O^*)+(4+4c)\frac{1+\delta}{1-\delta}\xi \mathcal{L}^2,  \label{for-the-km-f4}
\end{eqnarray}
where (\ref{for-the-km-f1}), (\ref{for-the-km-f2}), and (\ref{for-the-km-f3}) come from Lemma~\ref{lem-km-3}, \ref{lem-km-2}, and \ref{lem-km-1} respectively, and (\ref{for-the-km-f4}) comes from the fact $ Cost(P_{opt}, O^*)=(n-z)\Delta^{-z}_2(P, O^*)$. The success probability $(1-\eta)^3$ comes from Lemma~\ref{lem-km-sample3} and Lemma~\ref{lem-imp2}  (note that Lemma~\ref{lem-km-sample3} already takes into account of the success probability of Lemma~\ref{lem-imp1} ). Thus, we obtain Theorem~\ref{the-km}.

\subsection{Proof of Theorem~\ref{the-km2}}
\label{sec-proof-km2}

Suppose the $k$ clusters of $S$ obtained in Step 2 of Algorithm~\ref{alg-km2} are $S_1, S_2, \cdots, S_k$, and thus the inliers $S_{in}=\cup^k_{j=1}S_j$. In the following lemmas, we always assume that the events mentioned in Lemma~\ref{lem-imp1},~\ref{lem-imp2}, and~\ref{lem-km-sample3} all happen so that we do not need to repeatedly state the success probabilities.

\begin{lemma}
\label{lem-km2-1}
$\frac{|C^*_j|}{|C^*_j\cap S_{in}|}\leq \frac{n}{|S|}\frac{t}{(t-1)(1-\delta)}$ for each $1\leq j\leq k$. 
\end{lemma}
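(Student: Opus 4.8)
The plan is to lower bound $|C^*_j \cap S_{in}|$ directly and then invert the inequality. The starting observation is that $S_{in}$ is produced from $S$ by discarding exactly $z'$ points (the outliers returned when solving $k$-means with $z'$ outliers on $S$), so $|S \setminus S_{in}| = z'$. Since the points of $C^*_j$ that are removed when forming $S_{in}$ form a subset of these $z'$ discarded points, we have $|C^*_j \cap (S \setminus S_{in})| \le z'$, and therefore
$$|C^*_j \cap S_{in}| \ge |C^*_j \cap S| - z'.$$

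Next I would substitute the two quantitative facts already available under our assumed events. From Lemma~\ref{lem-imp1}, $|S \cap C^*_j| \ge (1-\delta)\frac{|C^*_j|}{n}|S|$; and from the significance assumption in Definition~\ref{def-sig}, $|C^*_j| \ge \frac{\epsilon_1}{k}n$, so $\frac{|C^*_j|}{n}\ge \frac{\epsilon_1}{k}$. Combining these with $z' = \frac{1}{\eta}\frac{\epsilon_2}{k}|S|$, I would factor $(1-\delta)\frac{|C^*_j|}{n}|S|$ out of the lower bound and control the subtracted term:
$$\frac{z'}{(1-\delta)\frac{|C^*_j|}{n}|S|} = \frac{\frac{1}{\eta}\frac{\epsilon_2}{k}}{(1-\delta)\frac{|C^*_j|}{n}} \le \frac{\epsilon_2}{\eta(1-\delta)\epsilon_1} = \frac{1}{t},$$
where the final equality uses the definition $t = \eta(1-\delta)\frac{\epsilon_1}{\epsilon_2}$. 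This yields
$$|C^*_j \cap S_{in}| \ge (1-\delta)\frac{|C^*_j|}{n}|S|\Bigl(1 - \frac{1}{t}\Bigr) = (1-\delta)\frac{|C^*_j|}{n}|S|\cdot\frac{t-1}{t}.$$

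Finally I would invert this bound, which is legitimate because the hypothesis $t > 1$ of Theorem~\ref{the-km2} guarantees the right-hand side is strictly positive, to obtain
$$\frac{|C^*_j|}{|C^*_j \cap S_{in}|} \le \frac{n}{|S|}\frac{t}{(t-1)(1-\delta)}$$
uniformly over $1 \le j \le k$, as claimed.

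The computation is essentially routine; the step I would flag as the main conceptual point (rather than a genuine obstacle) is the very first inequality $|C^*_j \cap S_{in}| \ge |C^*_j \cap S| - z'$, which relies on the fact that forming $S_{in}$ discards only $z'$ points in total, so no single optimal cluster can lose more than $z'$ of its sampled members. Everything downstream is algebra that channels the significance ratio $\epsilon_1/\epsilon_2$ into the parameter $t$.
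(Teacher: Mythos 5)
Your proposal is correct and follows essentially the same route as the paper's own proof: bound $|C^*_j\cap S_{in}|\geq |S\cap C^*_j|-z'$, apply Lemma~\ref{lem-imp1} together with $|C^*_j|\geq\frac{\epsilon_1}{k}n$ and $z'=\frac{1}{\eta}\frac{\epsilon_2}{k}|S|$, absorb the subtracted term into the factor $(1-\frac{1}{t})$, and invert using $t>1$. Your explicit remark that inversion is legitimate because $t>1$ makes the bound strictly positive is a small clarification the paper leaves implicit, but the argument is otherwise identical.
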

\begin{proof}
Since $z'=\frac{1}{\eta}\frac{\epsilon_2}{k}|S|$ and $|S\cap C^*_j|\geq (1-\delta)\frac{|C^*_j|}{n}|S|$ for each $1\leq j\leq k$ (by Lemma~\ref{lem-imp1}), we have 
\begin{eqnarray}
|S_{in}\cap C^*_j|&\geq& |S\cap C^*_j|-z'\geq (1-\delta)\frac{|C^*_j|}{n}|S|-\frac{1}{\eta}\frac{\epsilon_2}{k}|S|\nonumber\\
&=&\Big(1-\frac{1}{\eta(1-\delta)}\frac{\epsilon_2}{k}\frac{n}{|C^*_j|}\Big)(1-\delta)\frac{|C^*_j|}{n}|S|\nonumber\\
&\geq&\Big(1-\frac{1}{\eta(1-\delta)}\frac{\epsilon_2}{\epsilon_1}\Big)(1-\delta)\frac{|C^*_j|}{n}|S|=(1-\frac{1}{t})(1-\delta)\frac{|C^*_j|}{n}|S|,
\end{eqnarray}
where the last inequality comes from $|C^*_j|\geq \frac{\epsilon_1}{k}n$. Thus 
$\frac{|C^*_j|}{|C^*_j\cap S_{in}|}\leq \frac{n}{|S|}\frac{t}{(t-1)(1-\delta)}$.  
\qed
\end{proof}

\begin{lemma}
\label{lem-km2-2}
$Cost(S_{in}\cap P_{opt}, H)\leq (1+\delta)\frac{|S|}{n}\cdot c\cdot \big(Cost(P_{opt}, O^*)+(n-z)\cdot \xi \mathcal{L}^2\big)$. 
\end{lemma}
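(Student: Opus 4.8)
The plan is to chain together three ingredients: the trivial inclusion $S_{in}\cap P_{opt}\subseteq S_{in}$, the $c$-approximation guarantee of the subroutine invoked in Step 2 of Algorithm~\ref{alg-km2}, and Lemma~\ref{lem-km-1}. First I would note that since $S_{in}\cap P_{opt}$ is a subset of $S_{in}$, we immediately have $Cost(S_{in}\cap P_{opt}, H)\leq Cost(S_{in}, H)$. Because $H$ together with the discarded set $S\setminus S_{in}$ constitutes a $c$-approximate solution to $k$-means clustering with $z'$ outliers on $S$, the cost $Cost(S_{in}, H)$ is at most $c\cdot W'$, where $W'$ denotes the optimal (unnormalized) value of that instance on $S$.

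Next I would upper bound $W'$ by substituting the fixed center set $O^*$ into the instance. Since $W'$ is a minimum over all center sets and all choices of $z'$ outliers, we get $W'\leq (|S|-z')\Delta^{-z'}_2(S,O^*)$. The crucial step is to compare this with $Cost(S_{opt},O^*)$, where $S_{opt}=S\cap P_{opt}$. By Lemma~\ref{lem-imp2} we have $|S\setminus P_{opt}|\leq \frac{\epsilon_2}{k\eta}|S|=z'$, so $S_{opt}$ contains at least $|S|-z'$ points. Hence the best $z'$-outlier removal under centers $O^*$ can discard all of the genuine outliers $S\setminus P_{opt}$ and, if the budget is not yet exhausted, the farthest remaining points of $S_{opt}$; every such deletion only decreases the cost, and what remains is a subset of $S_{opt}$. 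This yields $(|S|-z')\Delta^{-z'}_2(S,O^*)\leq Cost(S_{opt},O^*)$, and therefore $Cost(S_{in}\cap P_{opt},H)\leq c\cdot Cost(S_{opt},O^*)$.

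Finally I would apply Lemma~\ref{lem-km-1}, namely $Cost(S_{opt},O^*)\leq (1+\delta)\frac{|S|}{n}(n-z)\big(\Delta^{-z}_2(P,O^*)+\xi\mathcal{L}^2\big)$, and rewrite the term $(n-z)\Delta^{-z}_2(P,O^*)$ as $Cost(P_{opt},O^*)$. Combining the estimates gives $Cost(S_{in}\cap P_{opt},H)\leq (1+\delta)\frac{|S|}{n}\cdot c\cdot\big(Cost(P_{opt},O^*)+(n-z)\xi\mathcal{L}^2\big)$, which is exactly the claimed bound.

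I expect the main obstacle to be the outlier-counting step that establishes $W'\leq Cost(S_{opt},O^*)$. One must argue carefully that the $z'$-outlier budget on $S$ is large enough to absorb \emph{all} the genuine outliers $S\setminus P_{opt}$ (this is precisely the content of Lemma~\ref{lem-imp2}, since $z'=\frac{\epsilon_2}{k\eta}|S|$), and that spending any leftover budget inside $S_{opt}$ can only lower the cost; consequently the optimal $z'$-removal under $O^*$ never exceeds the cost of retaining all of $S_{opt}$. The remaining manipulations are routine once this comparison is in place.
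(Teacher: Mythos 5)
Your proposal is correct and follows essentially the same route as the paper's proof: the paper likewise chains $Cost(S_{in}\cap P_{opt},H)\leq Cost(S_{in},H)=(|S|-z')\Delta^{-z'}_2(S,H)\leq c\,(|S|-z')\Delta^{-z'}_2(S,O^*)\leq c\cdot Cost(S_{opt},O^*)$ and then invokes Lemma~\ref{lem-km-1}, with your $W'\leq(|S|-z')\Delta^{-z'}_2(S,O^*)\leq Cost(S_{opt},O^*)$ step (justified via $z'\geq|S\setminus P_{opt}|$ from Lemma~\ref{lem-imp2}) being exactly the paper's inequality (\ref{for-km2-2-2}). The only difference is cosmetic: you phrase the comparison through the unnormalized optimum $W'$ rather than the normalized $\Delta^{-z'}_2$ notation, which changes nothing in substance.
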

\begin{proof}
Since $z'\geq |S\setminus P_{opt}|=|S|-|S_{opt}|$, we have
\begin{eqnarray}
(|S|-z')\Delta^{-z'}_2(S, O^*)\leq Cost(S_{opt}, O^*). \label{for-km2-2-2}
\end{eqnarray}
Because $H$ is a $c$-approximation on $S$, 
\begin{eqnarray}
\Delta^{-z'}_2(S, H)\leq c\cdot \Delta^{-z'}_2(S, O^*)\leq \frac{c}{|S|-z'}Cost(S_{opt}, O^*), \label{for-km2-2-3}
\end{eqnarray}
where the last inequality comes from (\ref{for-km2-2-2}). Therefore,
\begin{eqnarray}
Cost(S_{in}\cap P_{opt}, H)&\leq& Cost(S_{in}, H)=(|S|-z')\Delta^{-z'}_2(S, H)\nonumber\\
&\leq&c\cdot Cost(S_{opt}, O^*)\nonumber\\
&\leq& (1+\delta)\frac{|S|}{n}\cdot c\cdot (n-z)\big(\Delta^{-z}_2(P, O^*)+\xi  \mathcal{L}^2\big),
\end{eqnarray}
where the second and third inequalities comes from (\ref{for-km2-2-3}) and Lemma~\ref{lem-km-1}, respectively.
\qed
\end{proof}
Since $S_{in}\cap P_{opt}\subseteq S_{opt}$, we immediately have the following lemma via Lemma~\ref{lem-km-1}.
\begin{lemma}
\label{lem-km2-3}
$Cost(S_{in}\cap P_{opt}, O^*)\leq (1+\delta)\frac{|S|}{n}(n-z)\big(\Delta^{-z}_2(P, O^*)+\xi  \mathcal{L}^2\big)$.
\end{lemma}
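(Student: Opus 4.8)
The plan is to obtain this statement as an immediate corollary of Lemma~\ref{lem-km-1}, exploiting only a set containment together with the monotonicity of the cost function. First I would recall that $S_{in}=\cup_{j=1}^k S_j$ is, by its definition in Step 2 of Algorithm~\ref{alg-km2}, a subset of the sample $S$ (it is precisely the set of $|S|-z'$ retained inliers). Intersecting both sides with $P_{opt}$ then yields $S_{in}\cap P_{opt}\subseteq S\cap P_{opt}=S_{opt}$.

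Next I would invoke the monotonicity of the clustering cost. Since $Cost(X, O^*)=\sum_{q\in X}\big(dist(q, O^*)\big)^2$ is a sum of nonnegative terms indexed by the points of $X$, restricting the index set to a subset can only decrease the value; hence $Cost(S_{in}\cap P_{opt}, O^*)\leq Cost(S_{opt}, O^*)$. Chaining this with the bound on $Cost(S_{opt}, O^*)$ already established in Lemma~\ref{lem-km-1}, namely $Cost(S_{opt}, O^*)\leq (1+\delta)\frac{|S|}{n}(n-z)\big(\Delta^{-z}_2(P, O^*)+\xi\mathcal{L}^2\big)$, delivers exactly the claimed inequality.

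I do not anticipate any genuine obstacle here: the whole argument is a one-line consequence of Lemma~\ref{lem-km-1}, and the only points requiring (trivial) verification are the containment $S_{in}\cap P_{opt}\subseteq S_{opt}$ and the fact that $Cost(\cdot, O^*)$ is monotone under restriction to a subset. Both are immediate from the definitions, which is why the lemma follows without any fresh probabilistic analysis; the relevant success events are already assumed to hold throughout this subsection, so no additional failure probability is incurred.
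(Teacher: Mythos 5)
Your proposal is correct and is essentially identical to the paper's own proof: the paper also derives Lemma~\ref{lem-km2-3} immediately from the containment $S_{in}\cap P_{opt}\subseteq S_{opt}$ together with the bound of Lemma~\ref{lem-km-1}, with the success events assumed as in the surrounding discussion. Your extra remarks on monotonicity of $Cost(\cdot, O^*)$ simply make explicit what the paper leaves implicit.
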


For convenience, let $S'_{in}=S_{in}\cap P_{opt}$. Using the same manner of (\ref{for-km-2-2}), we have 
\begin{eqnarray}
Cost(\tilde{S}'_{in}, H)&\leq& 2 Cost(S'_{in}, O^*)+2 Cost(S'_{in}, H);\label{for-km2-2-4}\\
Cost(P_{opt}, H)&\leq& 2Cost(P_{opt}, O^*)+2Cost(\tilde{P}_{opt}, H). \label{for-km2-2-5}
\end{eqnarray}
Also, because $Cost(\tilde{P}_{opt}, H)=\sum^k_{j=1}|C^*_j|(dist(o^*_j, H))^2$ and $Cost(\tilde{S}'_{in}, H)=\sum^k_{j=1}|C^*_j\cap S_{in}|(dist(o^*_j, H))^2$, we have 
\begin{eqnarray}
Cost(\tilde{P}_{opt}, H)&\leq& \max_{1\leq j\leq k}\frac{|C^*_j|}{|C^*_j\cap S_{in}|}\cdot Cost(\tilde{S}'_{in}, H)\nonumber\\
&\leq&\frac{n}{|S|}\frac{t}{(t-1)(1-\delta)}\cdot Cost(\tilde{S}'_{in}, H) \label{for-km2-2-6}
\end{eqnarray}
where the last inequality comes from Lemma~\ref{lem-km2-1}. From (\ref{for-km2-2-4}), (\ref{for-km2-2-5}), and (\ref{for-km2-2-6}), we have
\begin{eqnarray}
Cost(P_{opt}, H)\leq 2Cost(P_{opt}, O^*) +\frac{4n}{|S|}\frac{t}{(t-1)(1-\delta)}\big(Cost(S'_{in}, O^*)+Cost(S'_{in}, H)\big)\label{for-km2-2-7}
\end{eqnarray}
Combining Lemma~\ref{lem-km2-2} and~\ref{lem-km2-3}, we can obtain Theorem~\ref{the-km2} from (\ref{for-km2-2-7}) by simple calculation.

\section{Success Probability and Clustering Memberships}
\label{sec-boost}
 The parameter $\eta$ determines the success probabilities of our algorithms.
In particular, as mentioned in Remark~\ref{rem-2}, we cannot set $\eta$ too small to guarantee ``$\frac{\epsilon_1}{\epsilon_2}>\frac{1}{\eta(1-\delta)}$'' in Algorithm~\ref{alg-kc2} (and similarly in Algorithm~\ref{alg-km2}). To satisfy this requirement, we need to set $\eta$ large enough and therefore the success probability could be low. In fact, we can run the algorithm multiple times so as to achieve a higher success probability; for example, if $\eta=0.8$ and we run the algorithm $50$ times, the success probability will be $1-(1-(1-0.8)^2)^{50}\approx 87\%$. Suppose we run the algorithm (Algorithm~~\ref{alg-kc2} or~\ref{alg-km2}) $m>1$ times and let $H_1, \cdots, H_{m}$ be the set of output candidates. 
The remaining issue is that how to select the one achieving the smallest objective value among all the candidates.

A simple way is to directly scan the whole dataset in one-pass. When reading a point $p$ from $P$, we calculate its distance to all the candidates, {\em i.e.,} $dist(p, H_1), \cdots, dist(p, H_m)$; after scanning the whole dataset, we have calculated the clustering costs $\Delta^{-z}_\infty(P, H_l)$ ({\em resp.,} $\Delta^{-z}_1(P, H_l)$ and $\Delta^{-z}_2(P, H_l)$) for $1\leq l\leq m$ and return the best one. Moreover, a by-product of this procedure is that we can determine the clustering memberships of data points simultaneously. When calculating $dist(p, H_l)$ for $1\leq l\leq m$, we record the index of its nearest cluster center in $H_l$; finally, we return the corresponding clustering memberships after selecting the best candidate. 

We are aware of the sampling method proposed by Meyerson {\em et al.}~\cite{meyerson2004k} for estimating $k$-median clustering cost; but it will induce an error on the number of outliers for our clustering with outliers problems. As mentioned in Section~\ref{sec-ourresult}, the sampling based ideas in~\cite{charikar2003better,huang2018epsilon,DBLP:conf/esa/DingYW19} also have the same issue. 
 \section{An Example for Theorem~\ref{the-kc1} and \ref{the-km}}
\label{sec-example}
 
We construct the following instance for $k$-center clustering with outliers. Let $P$ be an $(\epsilon_1, \epsilon_2)$-significant instance in $\mathbb{R}^D$, where each optimal cluster $C^*_j$ is a set of $|C^*_j|$ overlapping points located at its cluster center $o^*_j$ for $1\leq j\leq k$. Let $x>0$, and we assume
\begin{eqnarray}
||o^*_{j_1}-o^*_{j_2}||&\geq&x, \hspace{0.3in} \forall j_1\neq j_2;\nonumber\\
||q_1-q_2||&\geq& x,\hspace{0.3in}  \forall q_1, q_2\in P\setminus P_{opt};\nonumber\\
||o^*_j-q||&\geq& x, \hspace{0.3in} \forall 1\leq j\leq k, q\in P\setminus P_{opt}.
\end{eqnarray}

Obviously, the optimal radius $r_{opt}$ of $P$ is equal to $0$. Suppose we obtain a sample $S$ satisfying $S\cap C^*_j\neq\emptyset$ for any $1\leq j\leq k$ and $|S\setminus P_{opt}|=\frac{1}{\eta}\frac{\epsilon_2}{k}|S|=k'$. Given a number $k''<k'$, we run $(k+k'')$-center clustering on $S$. Since the points of $S$ take $k+k'$ distinct locations in the space,  any $(k+k'')$-center clustering on $S$ will result in a radius  (at least $x/2$) larger than $0$; thus the approximation ratio is equal to $+\infty$. 
So, the value of $k'$ cannot be further reduced in Theorem~\ref{the-kc1}. It is easy to see that this instance $P$ also can be used to show that $k'$ should be at least $\frac{1}{\eta}\frac{\epsilon_2}{k}|S|$ in Theorem~\ref{the-km}.

%
 \vspace{-0.15in}
\section{Experiments}
\label{sec-exp}
\vspace{-0.1in}

%

All the experimental results were obtained on a Windows workstation with 2.8GHz Intel(R) Core(TM) i5-840 and 8GB main memory; the algorithms are implemented in Matlab R2018a.
To evaluate the performance, we use several baseline algorithms including the non-uniform sampling approaches mentioned in Section~\ref{sec-ourresult} (we only consider the setting with single machine in this paper). For $k$-center clustering with outliers, we consider four existing algorithms:  the $3$-approximation \textsc{\textbf{Charikar}}~\cite{charikar2001algorithms},  the $(4+\epsilon)$-approximation \textbf{\textsc{MK}}~\cite{mccutchen2008streaming}, the $13$-approximation \textbf{\textsc{Malkomes}}~\cite{malkomes2015fast}, and the greedy algorithm
\textbf{\textsc{DYW}}~\cite{DBLP:conf/esa/DingYW19} (as the non-uniform sampling approach). The distributed algorithm \textsc{Malkomes} partitions the dataset into $m\geq 1$ parts and processes each part separately; to make a fair comparison, we set $m=1$ for \textsc{Malkomes} in our experiments. In our Algorithm~\ref{alg-kc2}, we apply \textsc{MK} as the subroutine in Step 2. 


For $k$-means/median clustering with outliers, we consider the heuristic algorithm \textbf{\textsc{$k$-means$--$}}~\cite{chawla2013k} and two non-uniform sampling methods: the local search algorithm \textbf{\textsc{LocalSearch}} with $k$-means++~\cite{gupta2017local}, and the recent data summary based algorithm \textbf{\textsc{DataSummary}}~\cite{DBLP:conf/nips/ChenA018}. In our Algorithm~\ref{alg-km} ({\em resp.,} Algorithm~\ref{alg-km2}), we apply the $k$-means++~\cite{arthur2007k} ({\em resp.,} \textsc{$k$-means$--$}) as the subroutine in Step 2.

\textbf{Datasets.} We generate the synthetic datasets in $\mathbb{R}^{100}$, and set $n=10^5$, $z=2\%n$, and $k=8$. We randomly generate k points as the cluster centers inside a hypercube of side length $400$; around each center, we generate a cluster of points following a Gaussian distribution with standard deviation $\sqrt{1000}$; we keep the total number of points  to be $n-z$; to study the performance of our algorithms with respect to the ratio $\frac{\epsilon_1}{\epsilon_2}$, we vary the size of the smallest cluster appropriately for each synthetic dataset; finally, we uniformly generate $z$ outliers at random outside the minimum enclosing balls of these $k$ clusters. 

We choose $4$ real datasets from {\em UCI machine learning repository}~\cite{Dua:2019}. \textbf{Covertype} has $6$ clusters with $575000$ points in $\mathbb{R}^{54}$; \textbf{Kddcup} has $3$ clusters with $494021$ points in $\mathbb{R}^{38}$; \textbf{Poking Hand} has $4$ clusters with $995000$ points in $\mathbb{R}^{10}$; \textbf{Shuttle} has $3$ clusters with $59000$ points in $\mathbb{R}^9$.
Each  dataset also has some tiny clusters with total size $<1\%$, and we view them as outliers. To keep the fraction of outliers to be $1\%$, we add extra outliers outside the enclosing balls of the clusters as we did for  the synthetic datasets.

\begin{figure}
	\begin{center}
			 \vspace{-0.1in}
		\includegraphics[width=0.24\textwidth]{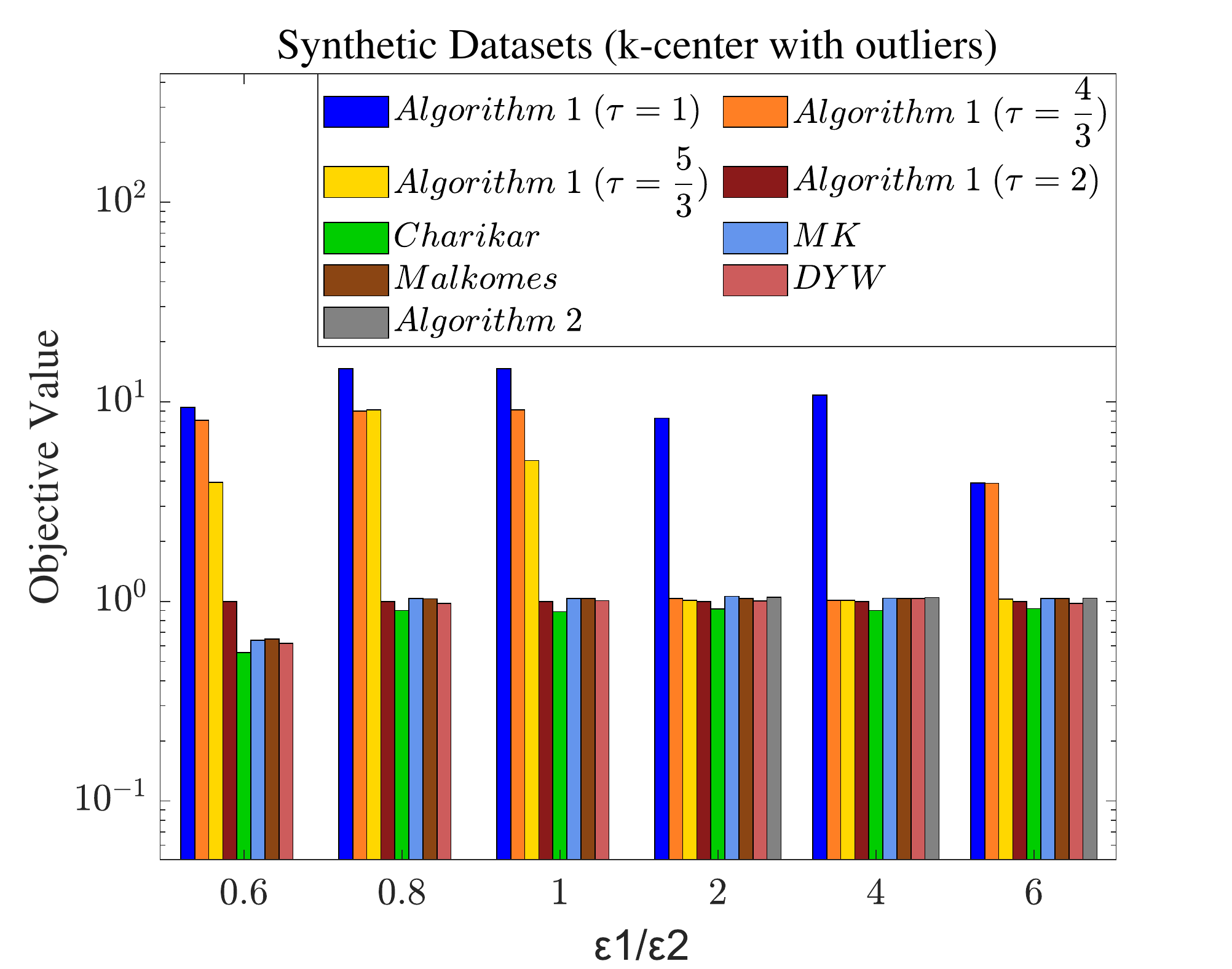}  
		\includegraphics[width=0.24\textwidth]{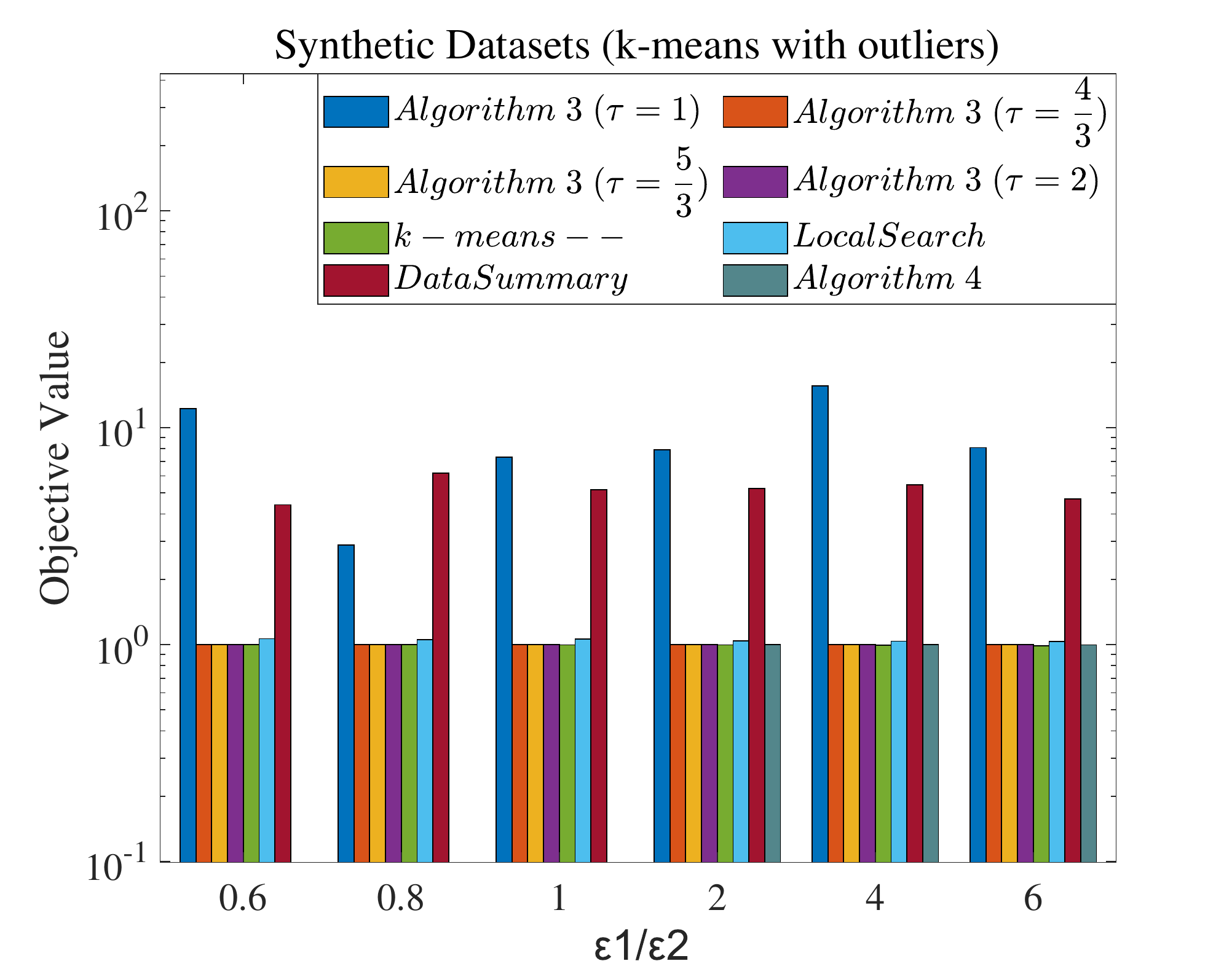} 
		\includegraphics[width=0.24\textwidth]{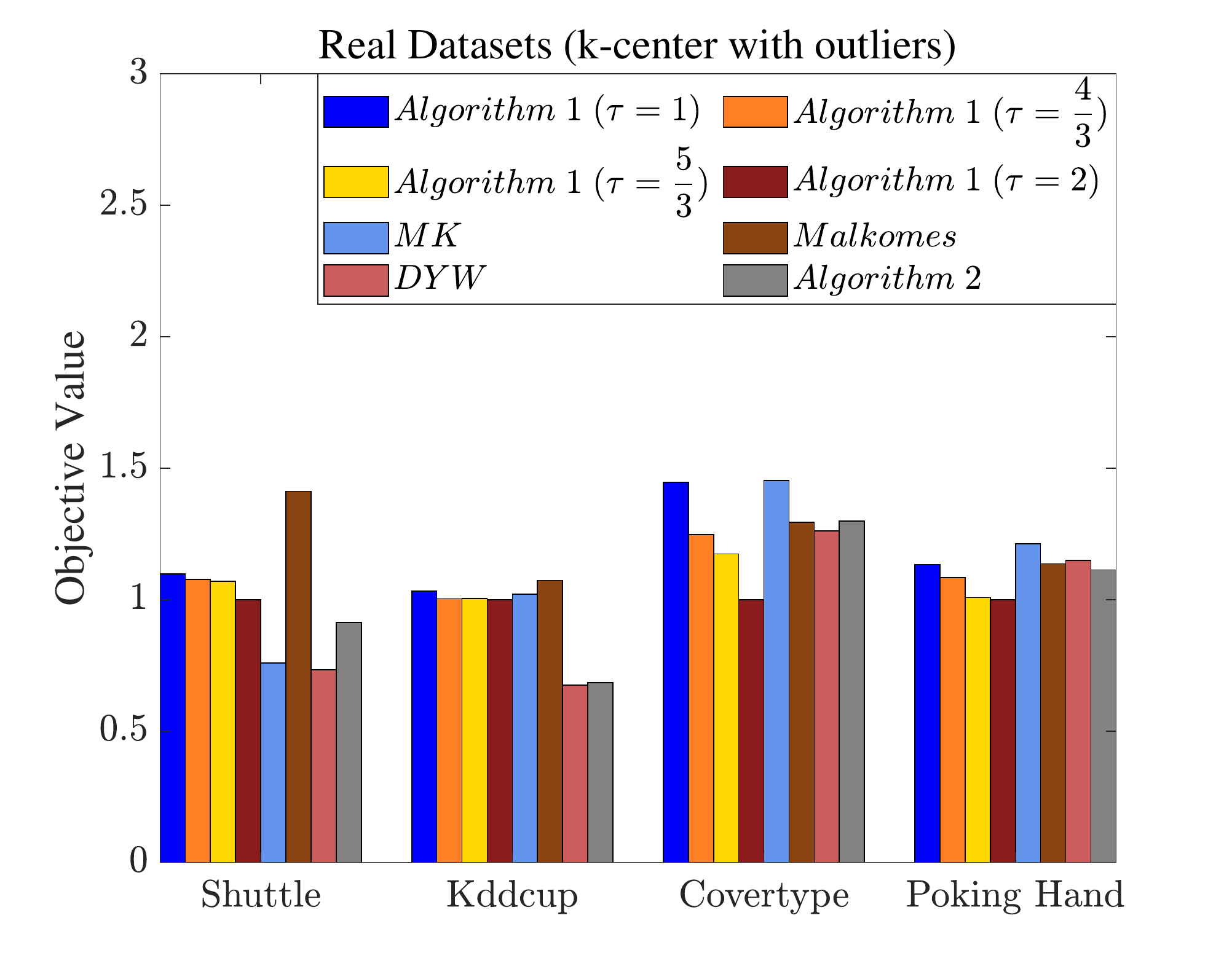}  
		\includegraphics[width=0.24\textwidth]{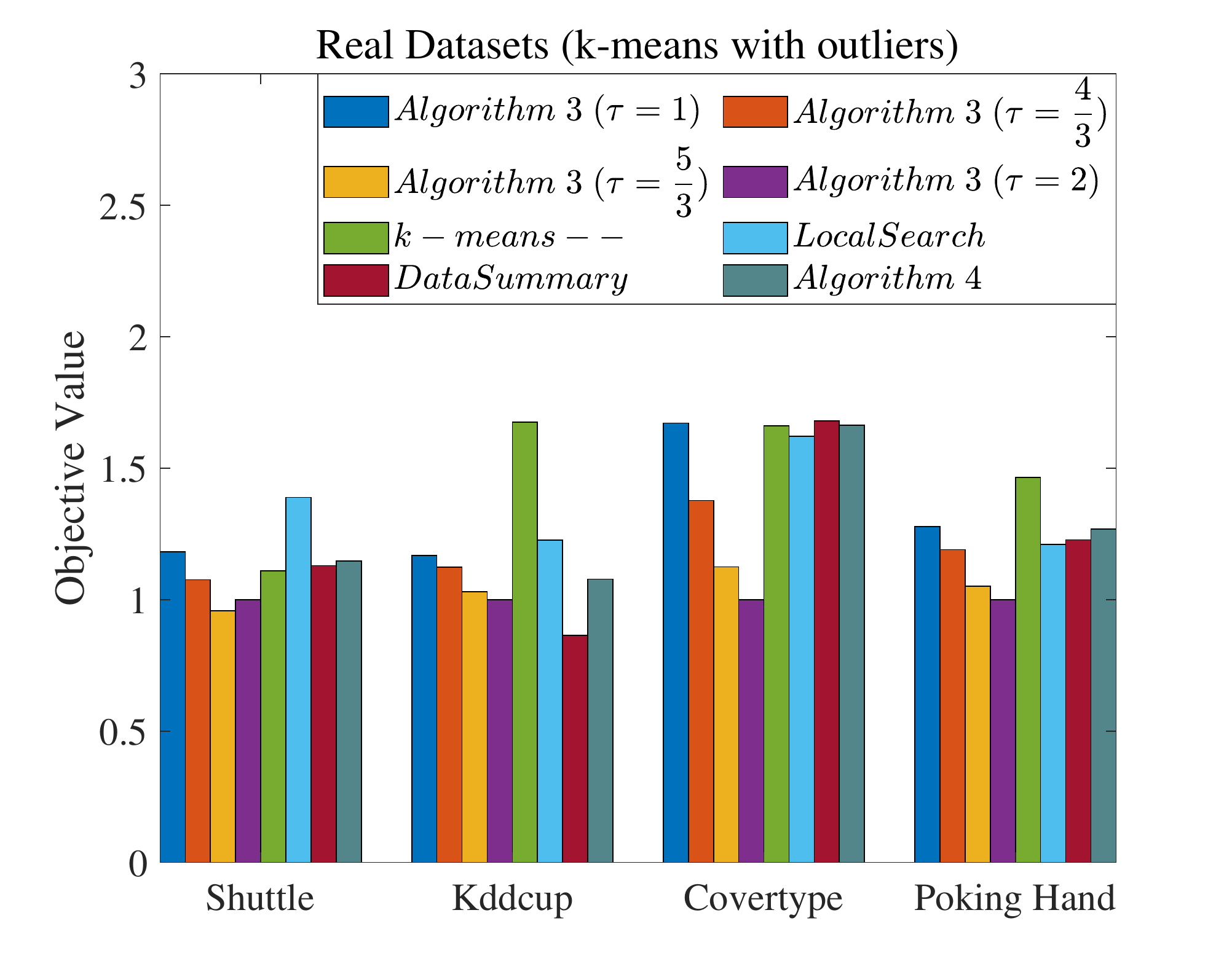}   
		 \vspace{-0.1in}
		\caption{The normalized objective values on the synthetic and real  datasets.}     
		\label{fig-exp-objective}
	\end{center}
	 \vspace{-0.3in}
\end{figure}

\begin{figure}
	\begin{center}
			 \vspace{-0.3in}
		\includegraphics[width=0.24\textwidth]{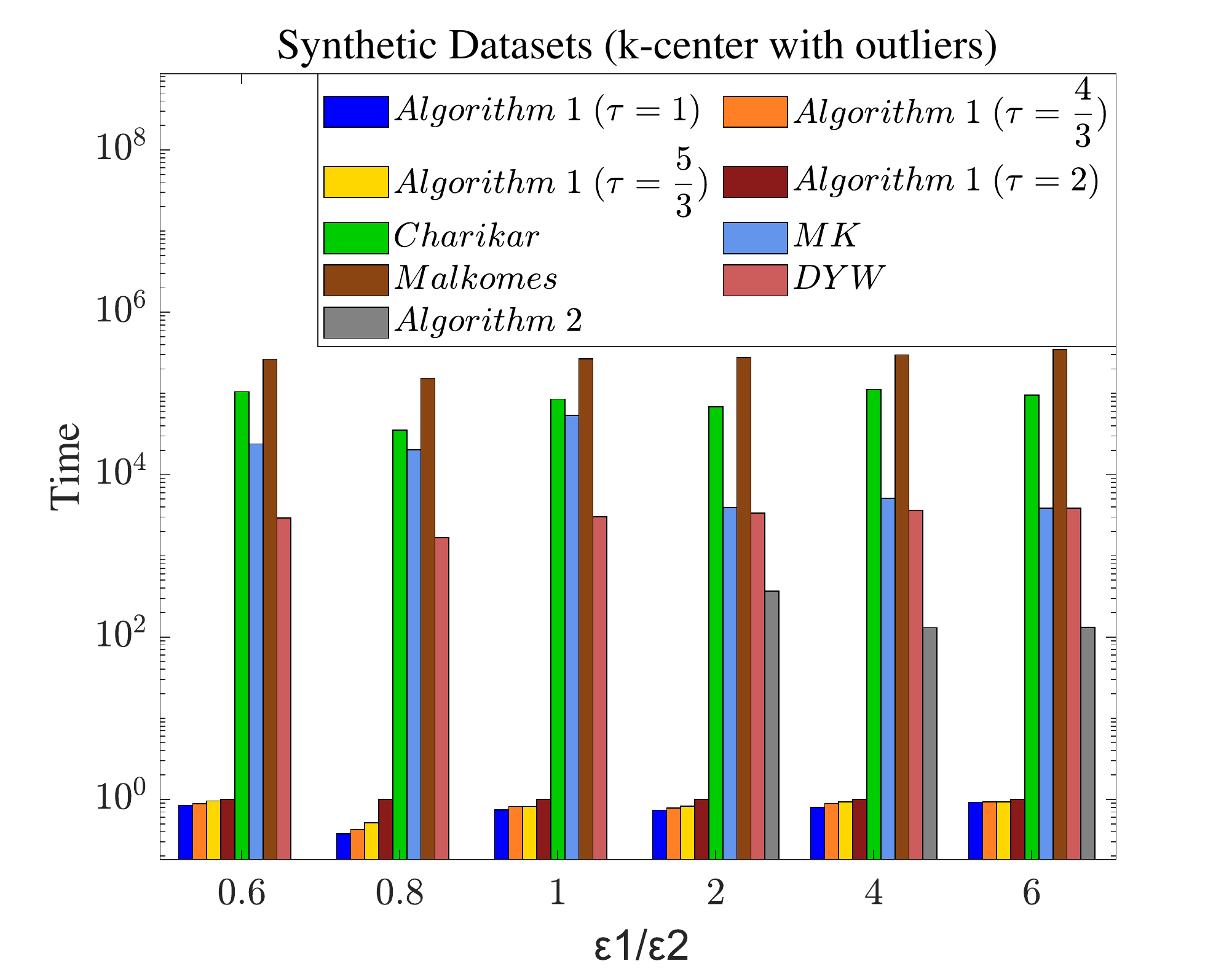}  
		\includegraphics[width=0.24\textwidth]{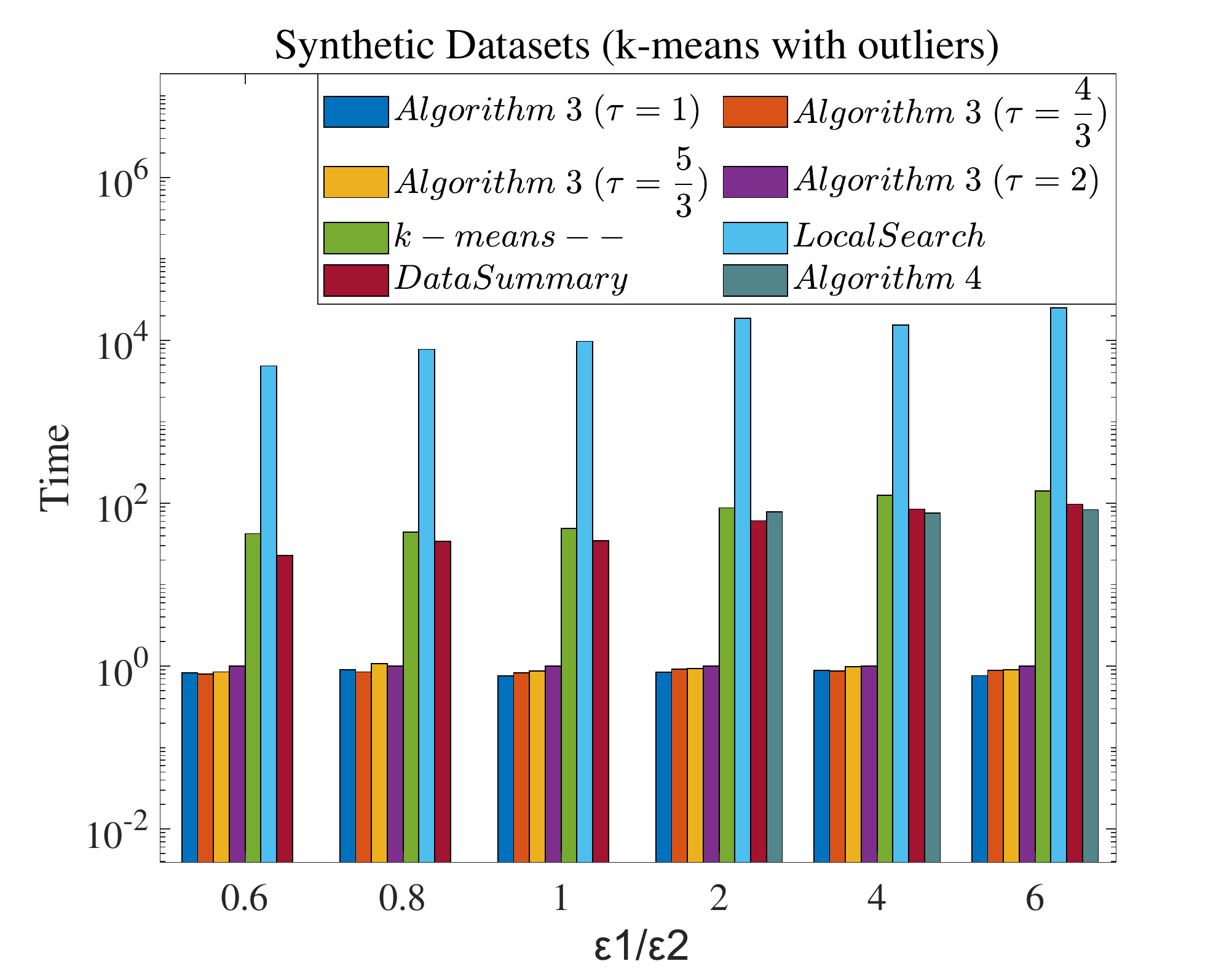} 
		\includegraphics[width=0.24\textwidth]{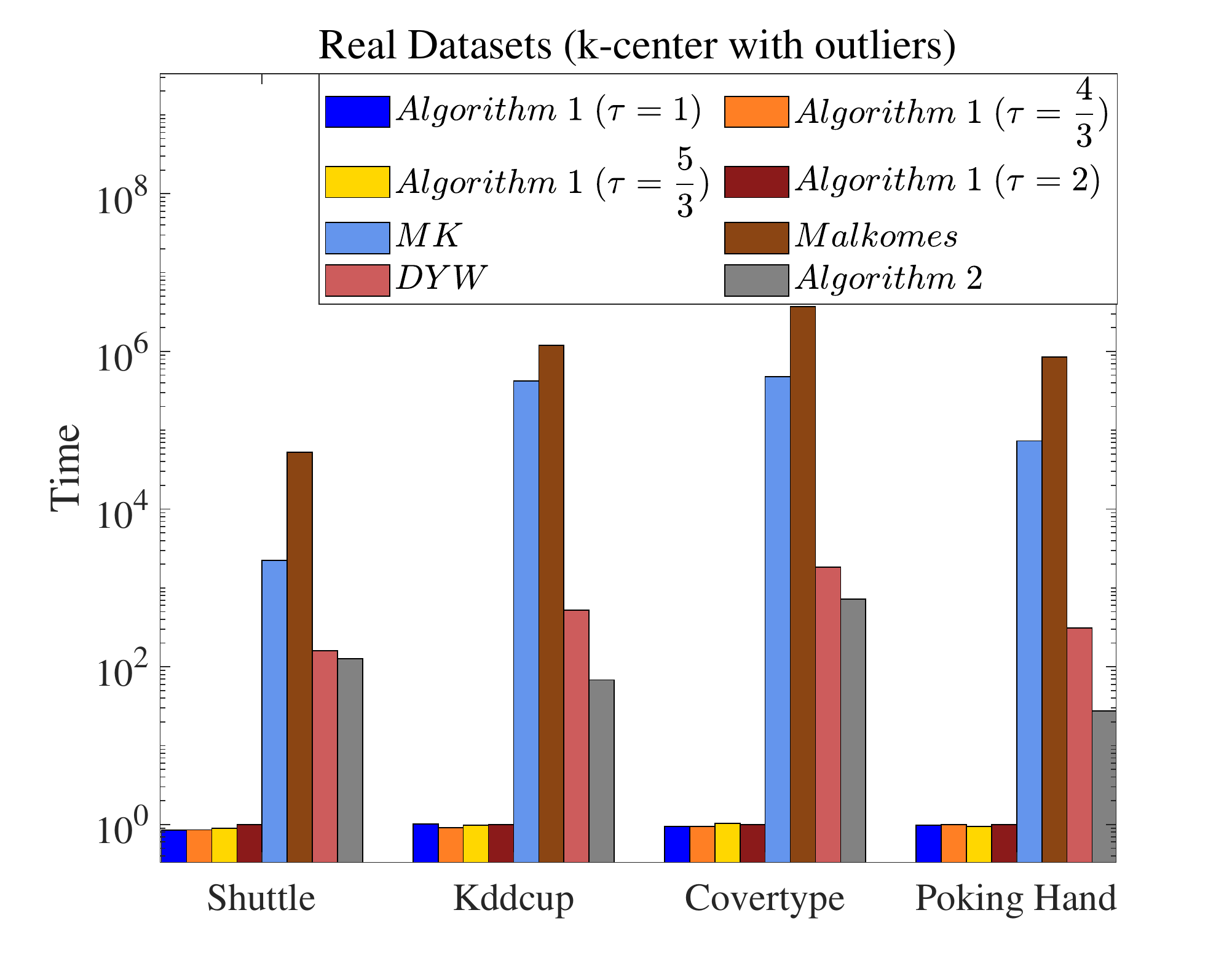}  
		\includegraphics[width=0.24\textwidth]{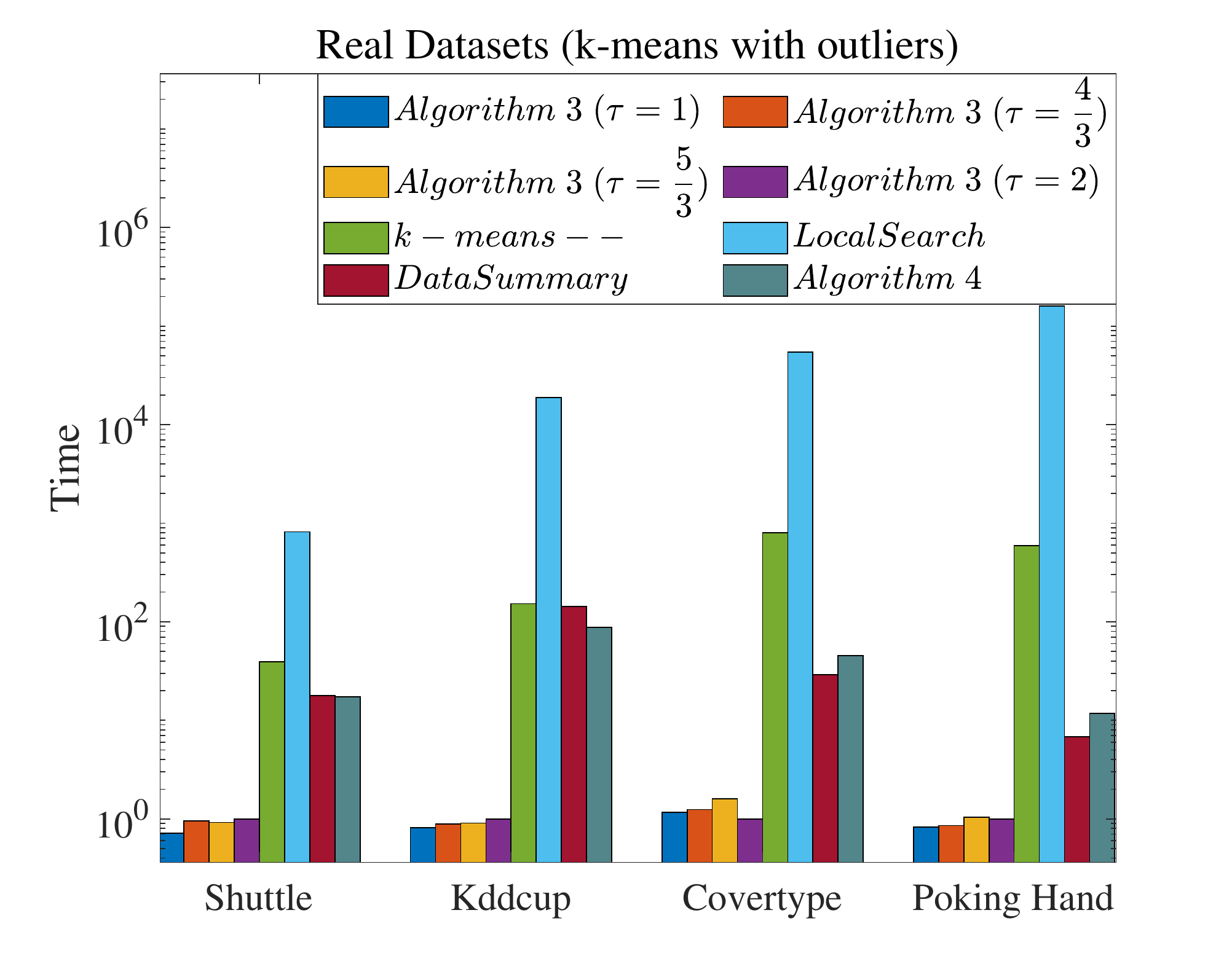}  
				 \vspace{-0.1in}
		  \caption{The normalized running times on the synthetic and real datasets.}     
		\label{fig-exp-time}
	\end{center}
			 \vspace{-0.3in}
\end{figure}

\textbf{Settings.} For each dataset, we run our algorithms and the baseline algorithms $20$ trials and report the average results. For our algorithms, it is not quite convenient to set the values for the parameters $(\eta,\xi,\delta)$ in practice. Instead, it is more intuitive to directly set the sample size $|S|$ and $k'$ for Algorithm~\ref{alg-kc1} and~\ref{alg-km} ({\em resp.,}  $z'$ for Algorithm~\ref{alg-kc2} and~\ref{alg-km2}); actually, we only need these two numbers $(|S|, k')$ ({\em resp.,} $(|S|, z')$) to implement our algorithms. In our experiments on both synthetic and real datasets, we set the sample size $|S|$ to be $2\% n$; for completeness, we also investigate the stability with varying $|S|/n$ in another experiment below. Algorithm~\ref{alg-kc1} and~\ref{alg-km} both output  $k+k'$ cluster centers, and we define the ratio $\tau=\frac{k+k'}{k}$.
The value $k'=\frac{1}{\eta}\frac{\epsilon_2}{k}|S|$ could be large. Though Section~\ref{sec-example} indicates that $k'$ cannot be reduced with respect to the worst case, 
we do not strictly follow this (overly conservative) theoretical value in our experiments. Instead, we keep the ratio $\tau$ to be $1, \frac{4}{3}, \frac{5}{3}$, and $2$ ({\em i.e.,} run the algorithms~\cite{G85,arthur2007k} $k+k'=\tau k$ steps). 
For Algorithm~\ref{alg-kc2} ({\em resp.,} Algorithm~\ref{alg-km2}), we set $z'=2\frac{\epsilon_2}{k}|S|$ that is $2$ times the expected number of outliers in $S$; we run the algorithm $10$ times for each instance and select the best candidate by scanning the whole dataset in one-pass as discussed in Section~\ref{sec-boost} (we count the running time of the whole process). Further, we conduct another two experiments below to observe the influence of $z'$ and the stabilities of the results returned by Algorithm~\ref{alg-kc2} and~\ref{alg-km2} (if we just run each of them by one time).


%
%
%

%

\textbf{Objective value and running time.} The obtained objective values of our and the baseline algorithms are shown in Figure~\ref{fig-exp-objective}; the running times are shown in Figure~\ref{fig-exp-time}; due to the space limit, we leave the experimental results of $k$-median clustering with outliers to appendix (Section~\ref{sec-exp-kmed}). For $k$-center clustering with outliers, our algorithms (Algorithm~\ref{alg-kc2} and Algorithm~\ref{alg-kc1} with $\tau=2$) and the four baseline algorithms achieve similar objective values for most of the instances (we run Algorithm~\ref{alg-kc2} on the synthetic datasets with $\frac{\epsilon_1}{\epsilon_2}>1$ only; we do not run \textsc{Charikar} on the real datasets due to its high complexity). 
Moreover, the running times of our Algorithm~\ref{alg-kc1} and~\ref{alg-kc2} are significantly lower comparing with the baseline algorithms. 
For $k$-means clustering with outliers, Algorithm~\ref{alg-km} (except for the setting with $\tau=1$) and Algorithm~\ref{alg-km2} can achieve the results close to the best of the three baseline algorithms. \textsc{DataSummary} and Algorithm~\ref{alg-km2} achieve comparable running times, but Algorithm~\ref{alg-km2}  outperforms \textsc{DataSummary} with respect to the objective values on the synthetic datasets.


%
%
%

\textbf{The sample size $|S|$.} We also study the influence of the sample size $|S|$ on the experimental performances of our algorithms. 
 We vary the size $|S|$ from $2\%n$ to $10\%n$ and run our algorithms on the synthetic datasets ($\tau=2$ for Algorithm~\ref{alg-kc1} and~\ref{alg-km}). We show the results in Figure~\ref{fig-exp-size}. We can see that their performances stay stable when varying $|S|$. 

\begin{figure}
	\begin{center}
			 \vspace{-0.2in}
		\includegraphics[width=0.24\textwidth]{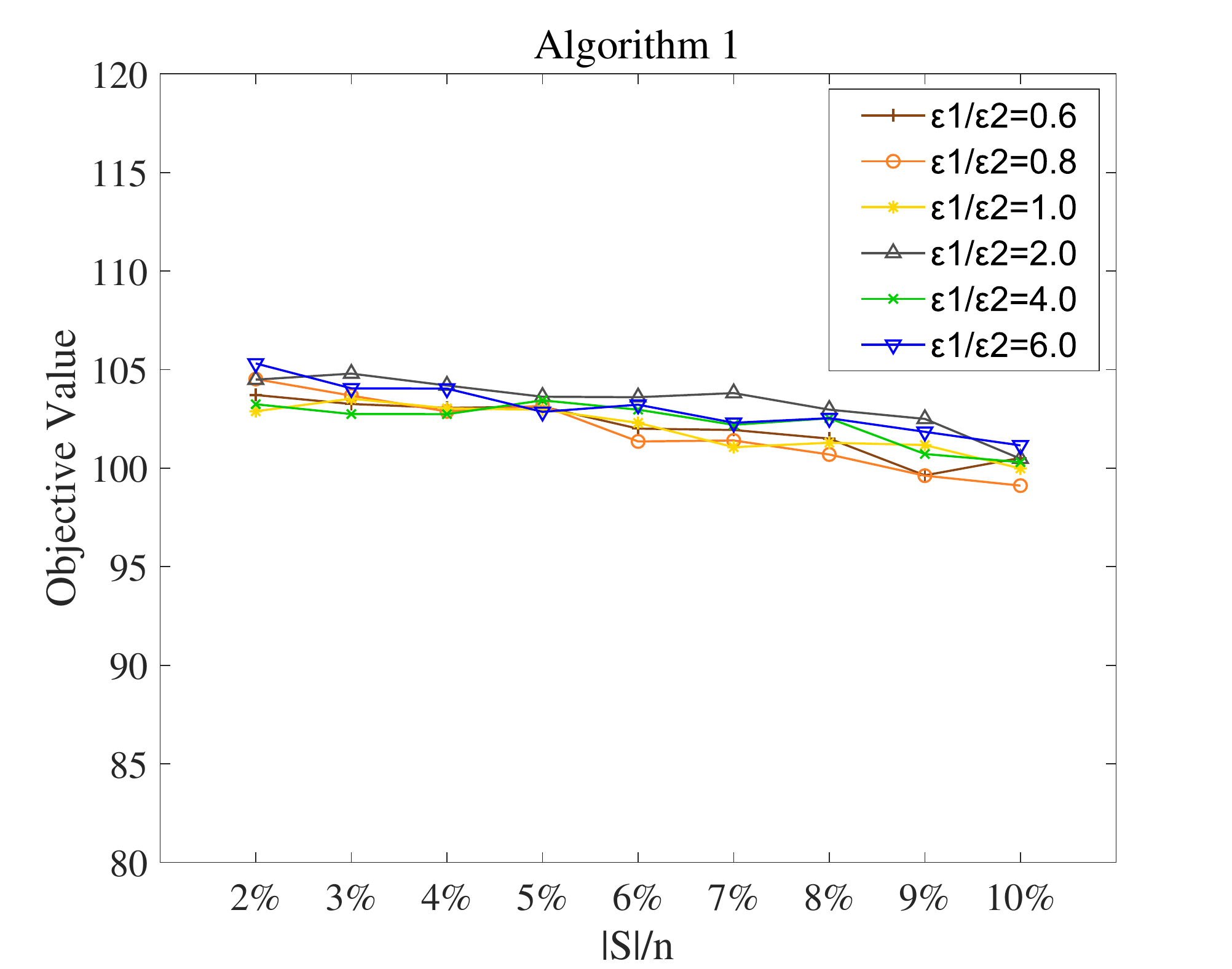}  
		\includegraphics[width=0.24\textwidth]{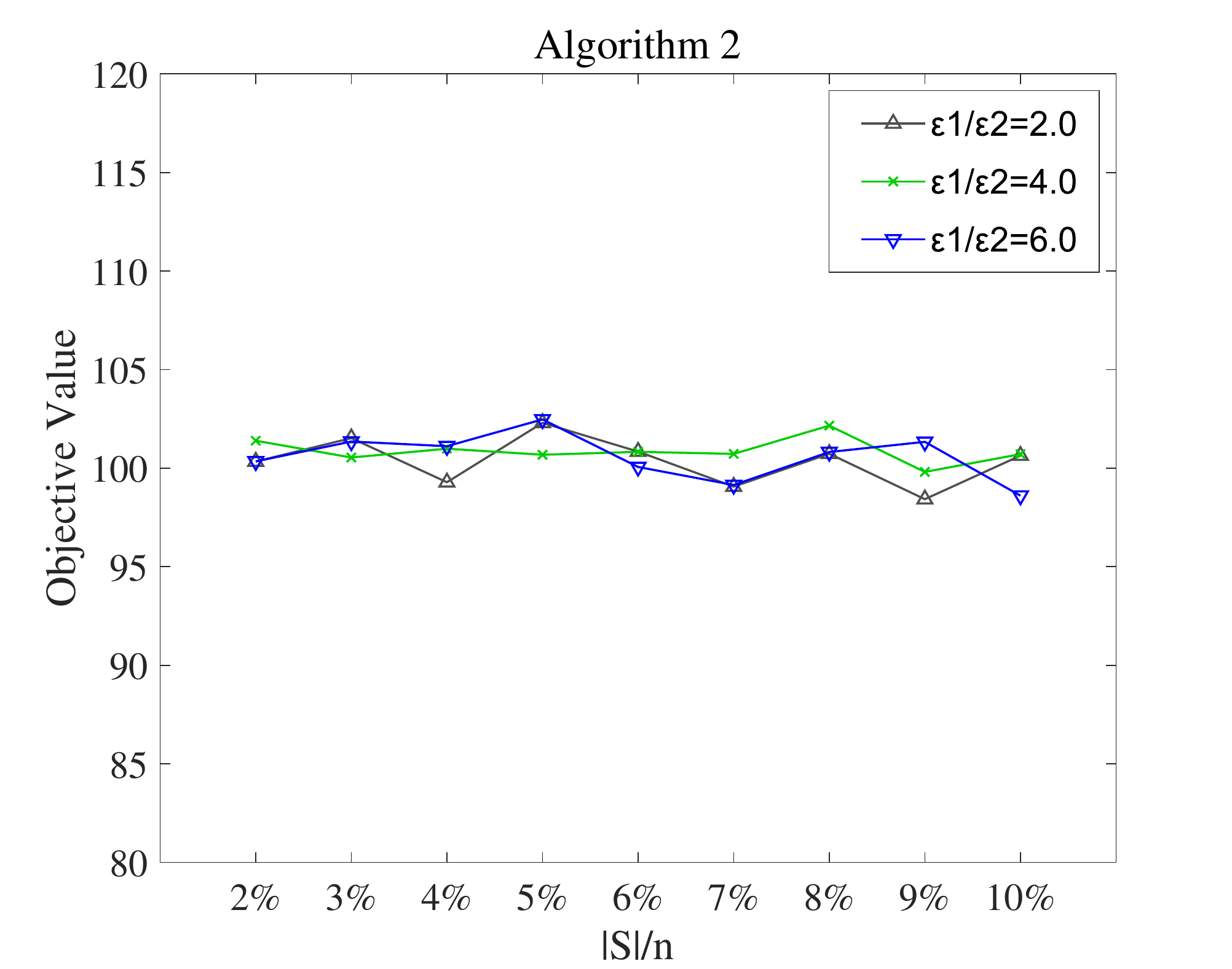} 
		\includegraphics[width=0.24\textwidth]{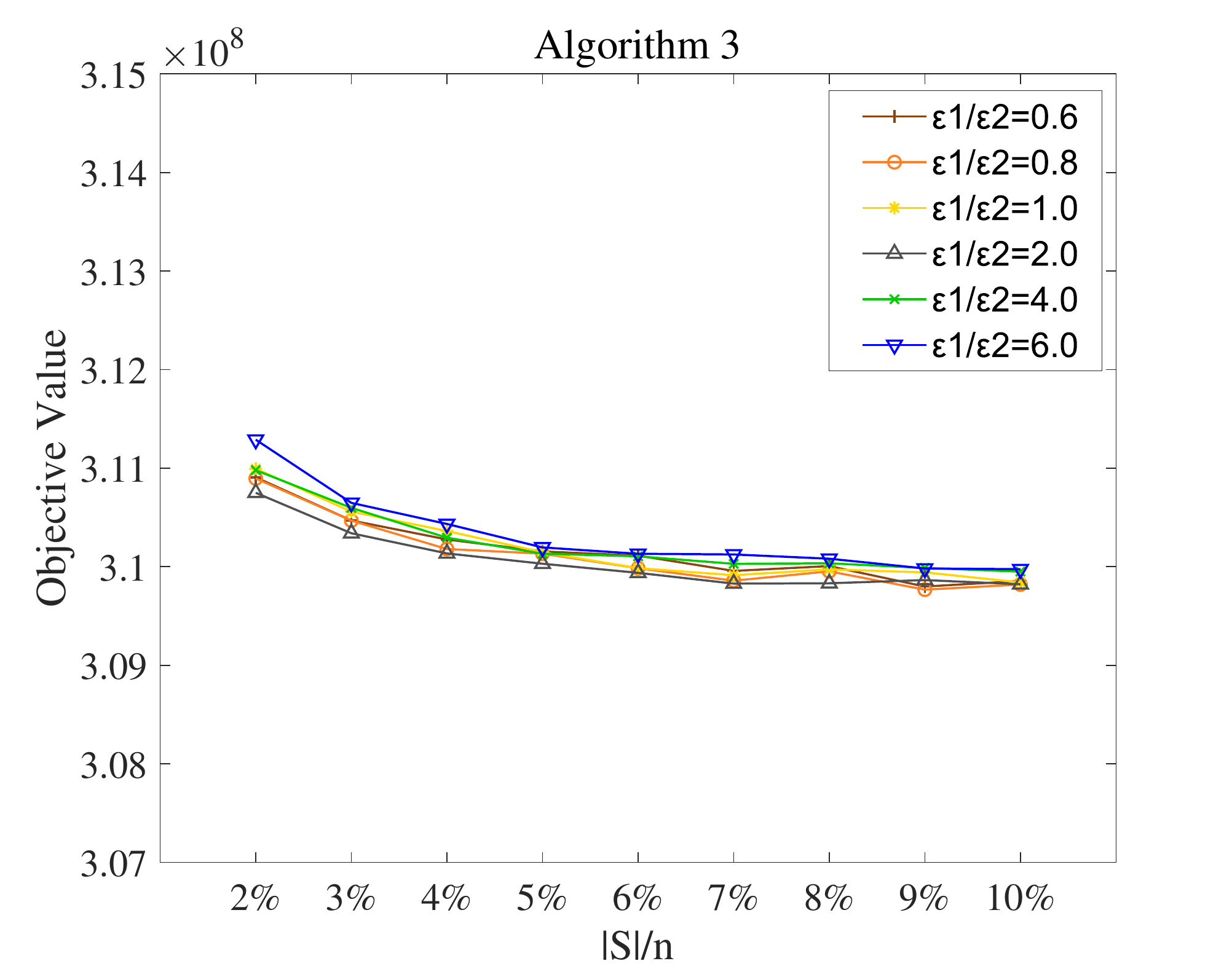}  
		\includegraphics[width=0.24\textwidth]{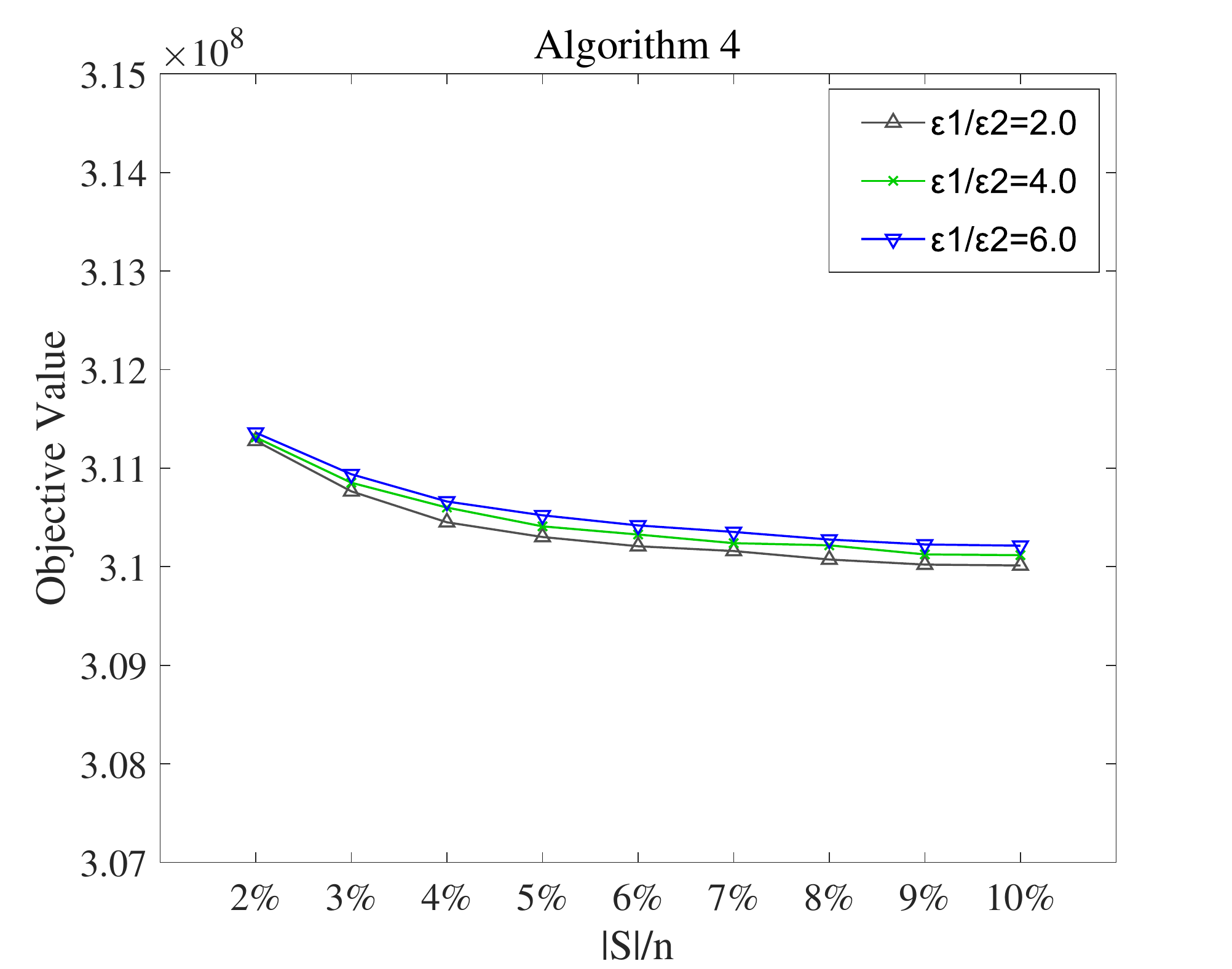}  
				 \vspace{-0.1in}
		  \caption{The performance with varying the sample size $|S|$.}     
		\label{fig-exp-size}
	\end{center}
			 \vspace{-0.3in}
\end{figure}

%
%

\begin{figure}
	\begin{center}
		 \vspace{-0.3in}
		  \includegraphics[width=0.24\textwidth]{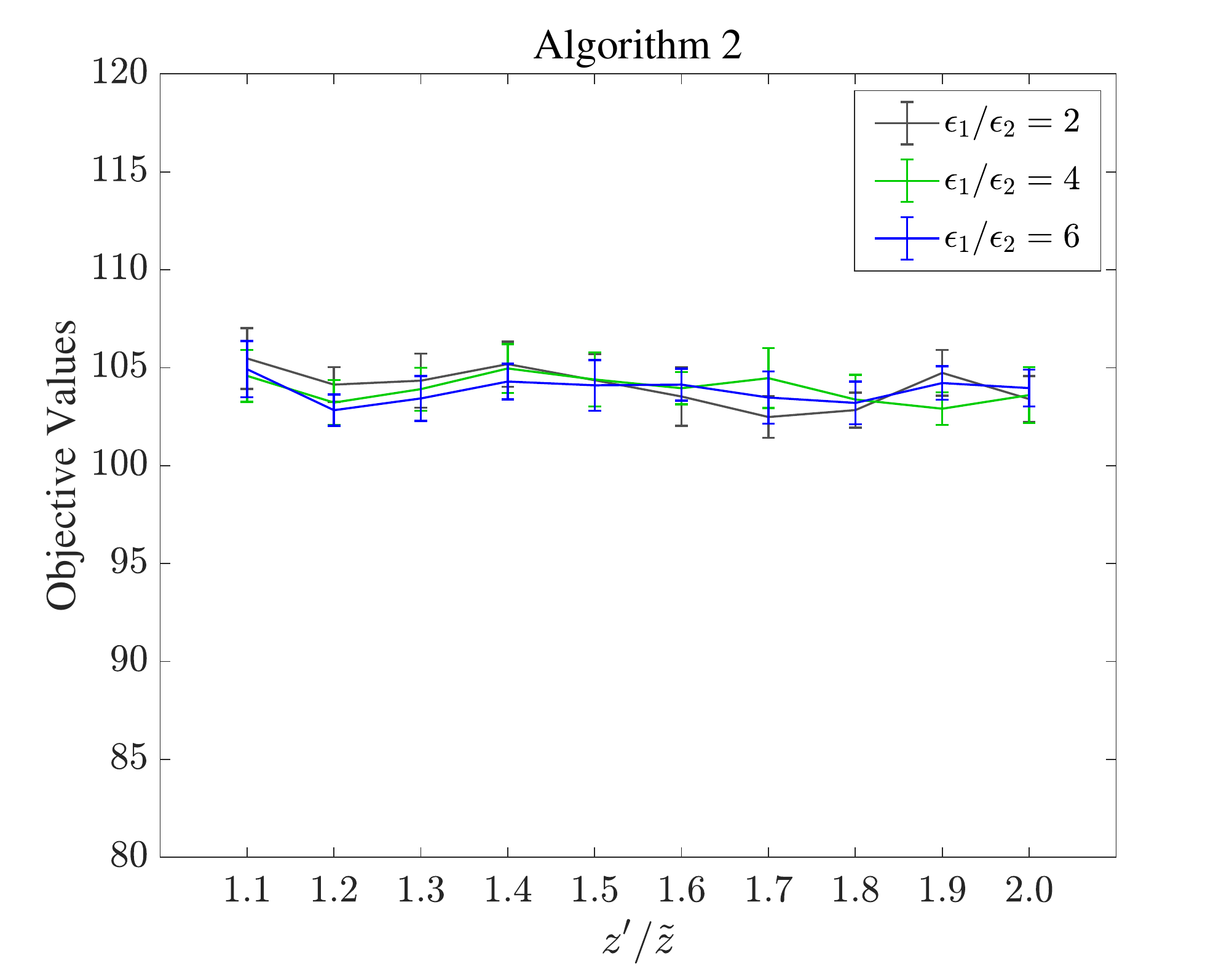}
                  \includegraphics[width=0.24\textwidth]{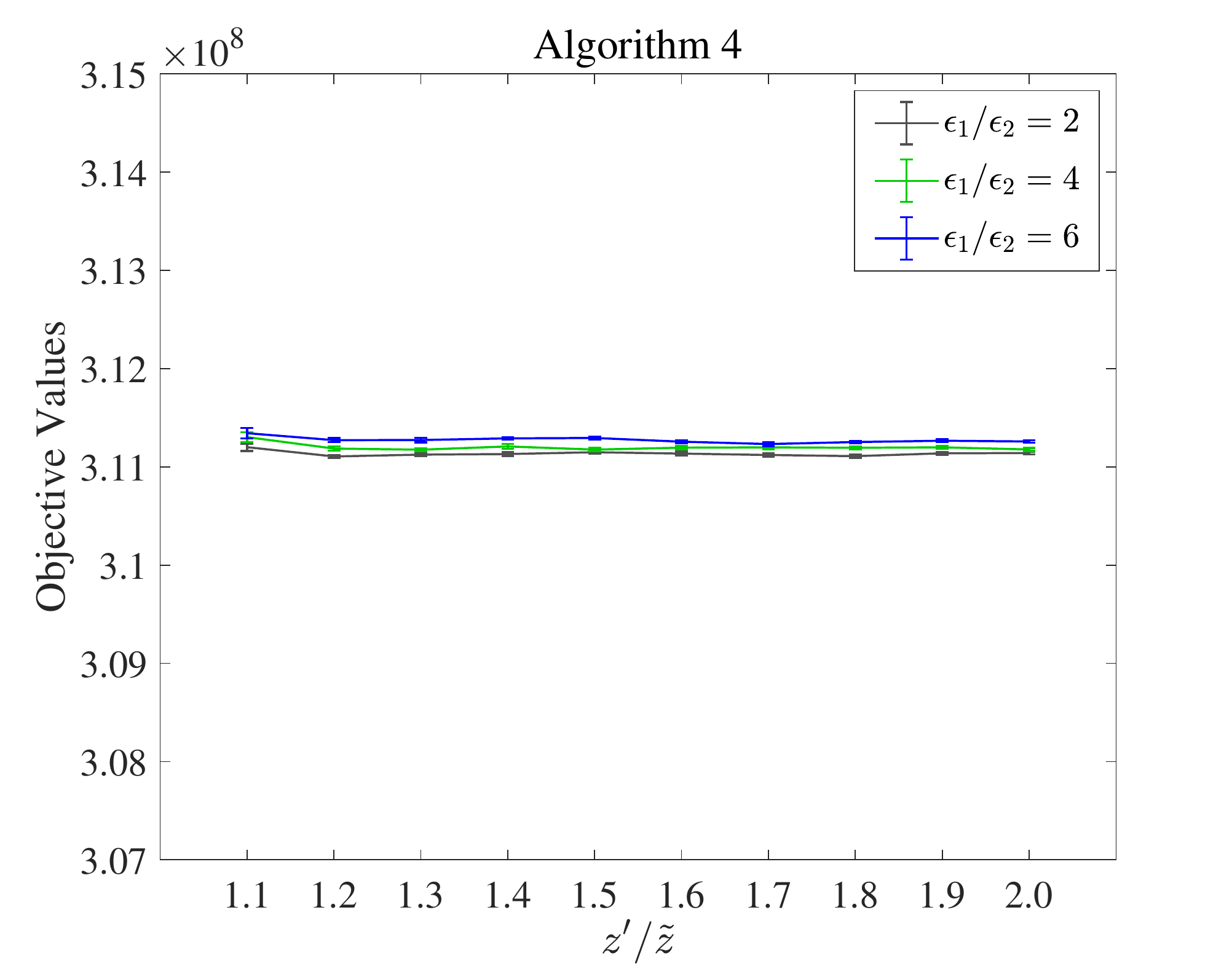}
		\includegraphics[width=0.24\textwidth]{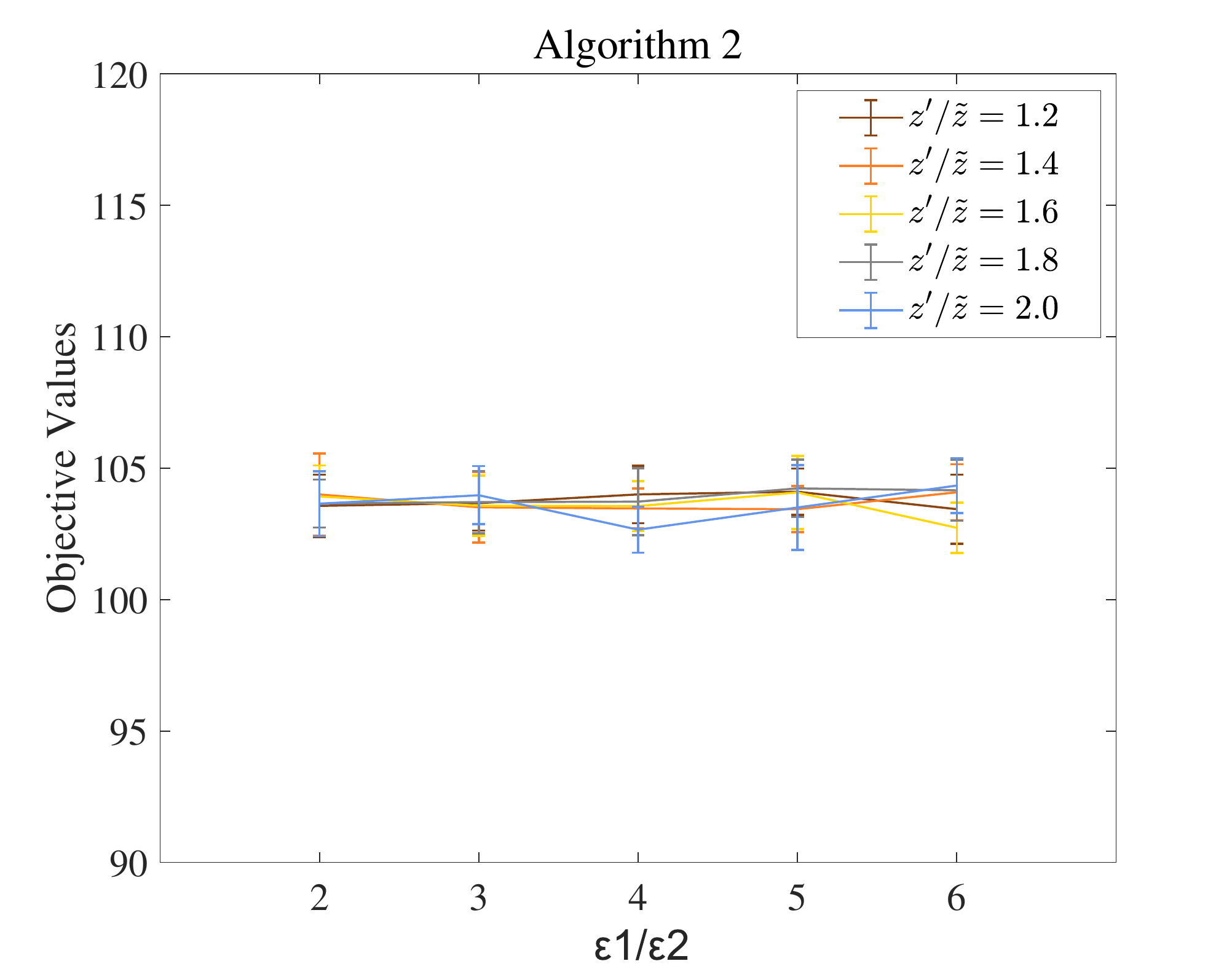}  
		\includegraphics[width=0.24\textwidth]{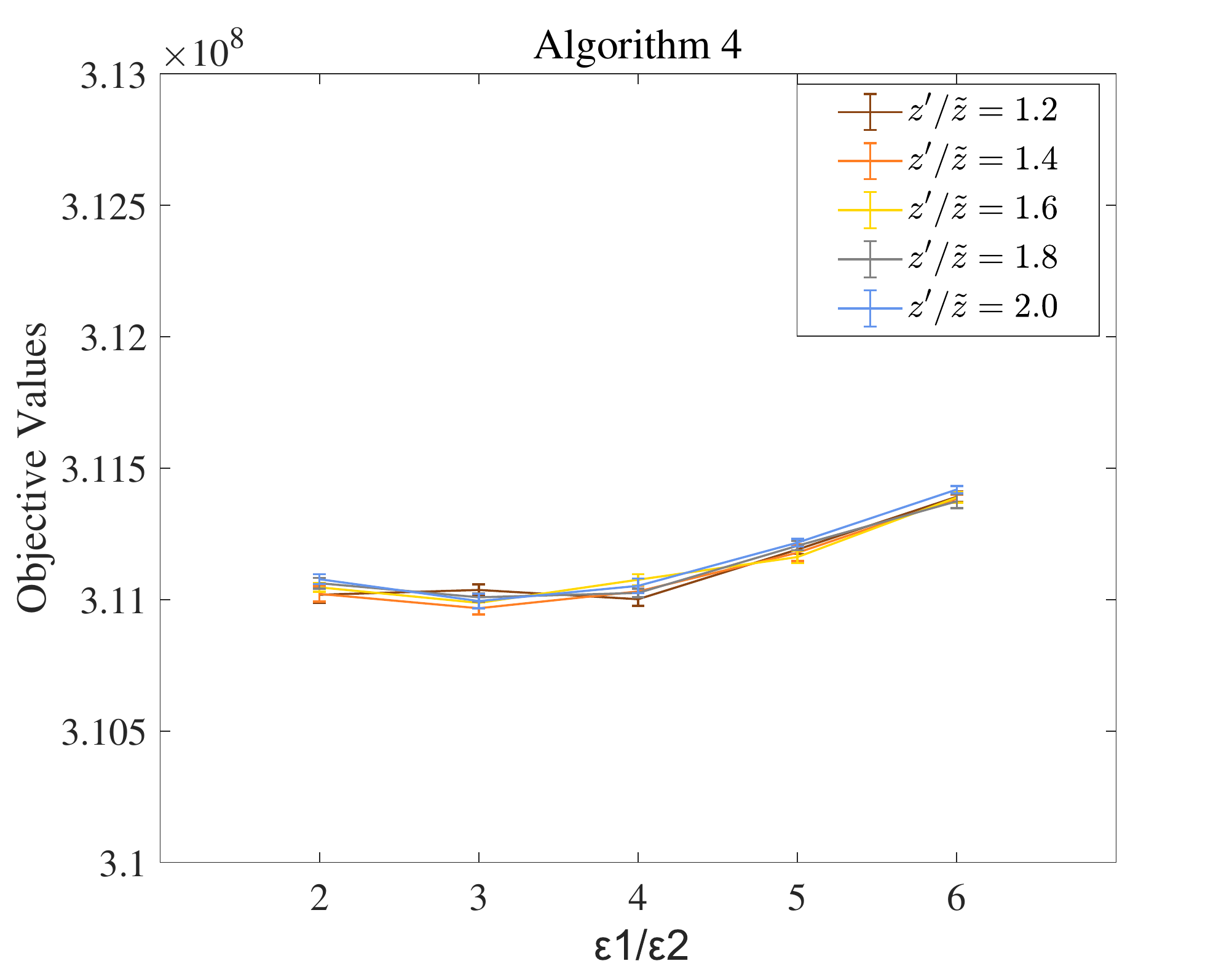} 
                                   \vspace{-0.05in}
		\caption{The  average objective values and standard deviations with varying $z'/\tilde{z}$ and $\epsilon_1/\epsilon_2$.}     
		\label{fig-stability}
	\end{center}
			 \vspace{-0.2in}
\end{figure}


\textbf{The influences of $z'$ and $\epsilon_1/\epsilon_2$ for Algorithm~\ref{alg-kc2} and~\ref{alg-km2}.} 
We study the influence of $z'$ on the performances. We vary $z'/\tilde{z}$ from $1.1$ to $2$, where $\tilde{z}=\frac{\epsilon_2}{k}|S|$ is the expected number of outliers in $S$. 
We also study the stability of the obtained result if we just run Algorithm~\ref{alg-kc2} ({\em resp.,} Algorithm~\ref{alg-km2}) by one time. We run each of them $200$ times on the synthetic datasets and show the results including the average objective values and standard deviations. 
 From Figure~\ref{fig-stability}, we can see that the performances are quite stable when varying $z'/\tilde{z}$ and $\epsilon_1/\epsilon_2$. 


\textbf{Precision and purity.}  To further evaluate our experimental results, we compute the measures {\em precision} and {\em purity}, which have been widely used before~\cite{DBLP:books/daglib/0021593}. The precision is the proportion of the ground-truth outliers found by the algorithm ({\em i.e.,} $\frac{|Out\cap Out_{truth}|}{|Out_{truth}|}$, where $Out$ is the set of returned outliers and $Out_{truth}$ is the set of ground-truth outliers). For each obtained cluster, we assign it to the ground-truth cluster which is most frequent in the obtained cluster, and the purity measures the accuracy of this assignment. 
%
%
%
%
Specifically, let $\{C_1, C_2, \cdots, C_k\}$ be the ground-truth clusters and $\{C'_1, C'_2, \cdots, C'_k\}$ be the obtained clusters from the algorithm; the purity is equal to $\frac{1}{n-z}\sum^k_{j=1}\max_{1\leq l\leq k}|C'_j\cap C_l|$. The experimental results suggest that our algorithms can achieve the precisions and the purities comparable to those of the baselines. Due to the space limit, we leave the details to Section~\ref{sec-precision}.

 \vspace{-0.1in}
\section{Future Work}
\vspace{-0.1in}

Following this work, an interesting question  is that whether the  significance criterion can be applied to analyze  the performance of uniform sampling for other well-known optimization problems, such as {\em PCA with outliers}~\cite{DBLP:journals/jacm/CandesLMW11} and {\em projective clustering with outliers}~\cite{feldman2011unified}. 



\newpage
\bibliographystyle{abbrv}

\bibliography{lightweight}

\section{Proof of Lemma~\ref{lem-imp1}}
\label{sec-proof-lem-imp1}
Lemma~\ref{lem-imp1} can be directly obtained through the following claim (we need to replace $\eta$ by $\eta/k$ in Claim~\ref{cla-sample}, for taking the union bound over all the $k$ clusters).

\begin{claim}
\label{cla-sample}
Let $U$ be a set of elements and $V\subseteq U$ with $\frac{|V|}{|U|}=\tau>0$. Given $\eta, \delta\in(0,1)$, one uniformly selects a set $S$ of elements from $U$ at random. (\rmnum{1}) If $|S|\geq \frac{1}{\tau}\log\frac{1}{\eta}$, with probability at least $1-\eta$, $S$ contains at least one element from $V$. (\rmnum{2}) If $|S|\geq\frac{3}{\delta^2\tau}\log\frac{2}{\eta}$, with probability at least $1-\eta$, we have $\big||S\cap V|-\tau |S|\big|\leq \delta\tau |S|$.
\end{claim}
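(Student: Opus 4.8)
The plan is to treat both parts as direct applications of standard concentration inequalities, viewing the membership of each sampled element in $V$ as an independent Bernoulli trial. Since the sample $S$ is drawn independently and uniformly at random from $U$, each of the $|S|$ draws lands in $V$ with probability exactly $\tau = |V|/|U|$, independently of the others. I would let $X_i$ be the indicator random variable for the event that the $i$-th sampled element belongs to $V$, and set $X = \sum_{i=1}^{|S|} X_i = |S \cap V|$; then $X$ is a sum of independent Bernoulli$(\tau)$ variables with mean $\mu = \tau|S|$.

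For part (\rmnum{1}), I would bound the probability that $S$ misses $V$ entirely. The event $S \cap V = \emptyset$ occurs precisely when all $|S|$ draws avoid $V$, which has probability $(1-\tau)^{|S|}$. Using the elementary inequality $1-\tau \leq e^{-\tau}$, this is at most $e^{-\tau|S|}$, and the hypothesis $|S| \geq \frac{1}{\tau}\log\frac{1}{\eta}$ forces $e^{-\tau|S|} \leq \eta$. Hence $S$ contains at least one element of $V$ with probability at least $1-\eta$.

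For part (\rmnum{2}), I would invoke the two-sided Chernoff bound $\Pr[|X - \mu| \geq \delta\mu] \leq 2\exp(-\delta^2\mu/3)$, valid for $0<\delta<1$. Substituting $\mu = \tau|S|$, the right-hand side becomes $2\exp(-\delta^2\tau|S|/3)$, and the hypothesis $|S| \geq \frac{3}{\delta^2\tau}\log\frac{2}{\eta}$ makes the exponent at most $-\log\frac{2}{\eta}$, so the failure probability is at most $\eta$. This yields $\big||S\cap V| - \tau|S|\big| \leq \delta\tau|S|$ with probability at least $1-\eta$. To then recover Lemma~\ref{lem-imp1} I would set $U = P$ and $V = C^*_j$, so that $\tau = |C^*_j|/n \geq \epsilon_1/k$, apply the claim with $\eta$ replaced by $\eta/k$, and take a union bound over the $k$ clusters.

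There is essentially no hard step here; the only point requiring a little care is the sampling model. Because the paper defines uniform sampling as independent draws, the Bernoulli independence needed for the Chernoff bound holds exactly. Were the sampling done without replacement, I would instead appeal to the fact that the Chernoff–Hoeffding bounds continue to hold (indeed with at least as strong concentration) for sampling without replacement, so the same thresholds on $|S|$ would suffice in either case.
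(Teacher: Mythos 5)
Your proof is correct and takes essentially the same route as the paper: part (\rmnum{1}) bounds $\Pr[S\cap V=\emptyset]=(1-\tau)^{|S|}$ using $1-\tau\leq e^{-\tau}$ (the paper states this as $\frac{\log 1/\eta}{\log 1/(1-\tau)}\leq\frac{1}{\tau}\log\frac{1}{\eta}$, the same inequality), and part (\rmnum{2}) applies the identical two-sided Chernoff bound $\Pr\big[|X-\mu|\geq\delta\mu\big]\leq 2e^{-\delta^2\mu/3}$ with $\mu=\tau|S|$ to the indicator variables. Your closing remark on sampling without replacement is a correct but unnecessary addition, since the paper's definition of uniform sampling already guarantees independent draws.
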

\begin{proof}
Actually, (\rmnum{1}) is a folklore result having been presented in several papers before (such as~\cite{DX14}). Since each sampled element falls in $V$ with probability $\tau$, we know that the sample $S$ contains at least one element from $V$ with probability $1-(1-\tau)^{|S|}$. Therefore, if we want $1-(1-\tau)^{|S|}\geq 1-\eta$, $|S|$ should be at least $\frac{\log 1/\eta}{\log 1/(1-\tau)}\leq\frac{1}{\tau}\log\frac{1}{\eta}$.

(\rmnum{2}) can be proved by using the Chernoff bound~\cite{alon2004probabilistic}. Define $|S|$ random variables $\{y_1, \cdots, y_{|S|}\}$: for each $1\leq i\leq |S|$, $y_i=1$ if the $i$-th sampled element falls in $V$, otherwise, $y_i=0$. So $E[y_i]=\tau$ for each $y_i$. As a consequence, we have
\begin{eqnarray}
\textbf{Pr}(\big|\sum^{|S|}_{i=1}y_i-\tau |S|\big|\leq \delta\tau|S|)\geq 1-2e^{-\frac{\delta^2\tau}{3}|S|}. 
\end{eqnarray}
If $|S|\geq\frac{3}{\delta^2\tau}\log\frac{2}{\eta}$, with probability at least $1-\eta$, $\big|\sum^{|S|}_{i=1}y_i-\tau |S|\big|\leq \delta\tau|S|$ ({\em i.e.}, $\big||S\cap V|-\tau |S|\big|\leq \delta\tau |S|$).
\qed
\end{proof}

\section{Extensions of Theorem~\ref{the-km} and \ref{the-km2}}
\label{sec-extension} 
The results of Theorem~\ref{the-km} and \ref{the-km2} can be easily extended to $k$-median clustering with outliers in Euclidean space by using almost the same idea, where the only difference is that we can directly use triangle inequality in the proof ({\em e.g.,} the inequality (\ref{for-km-2-1}) is replaced by $||\tilde{q}-h_{\tilde{j}_q}||\leq ||\tilde{q}-q||+||q-h_{j_q}||$); the coefficients $\alpha$ and $\beta$ are reduced to be $\big(1+(1+c)\frac{1+\delta}{1-\delta}\big)$ and $(1+c)\frac{1+\delta}{1-\delta}$ ({\em resp.,} $\big(1+(1+c)\frac{t}{t-1}\frac{1+\delta}{1-\delta}\big)$ and $(1+c)\frac{t}{t-1}\frac{1+\delta}{1-\delta}$) in Theorem~\ref{the-km} ({\em resp.,} Theorem~\ref{the-km2}), respectively.


To solve the metric $k$-median/means clustering with outliers problems for an instance $(X, d)$, we should keep in mind that the cluster centers can only be selected from the vertices of $X$. However, the optimal cluster centers $O^*=\{o^*_1, \cdots, o^*_k\}$ may not be contained in the sample $S$, and thus we need to modify our analysis slightly. We observe that the sample $S$ contains a set $O'$ of vertices close to $O^*$ with certain probability. Specifically, for each $1\leq j\leq k$, there exists a vertex $o'_j\in O'$ such that $d(o'_j, o^*_j)\leq O(1)\times \frac{1}{|C^*_j|}\sum_{p\in C^*_j}d(p, o^*_j)$ (or $\big(d(o'_j, o^*_j)\big)^2\leq O(1)\times \frac{1}{|C^*_j|}\sum_{p\in C^*_j}\big(d(p, o^*_j)\big)^2$) with constant probability (this claim can be easily proved by using the Markov's inequality). Consequently, we can use $O'$ to replace $O^*$ in our analysis, and achieve the similar results as Theorem~\ref{the-km} and \ref{the-km2}.

\section{The Experimental Results for $K$-median Clustering with Outliers}
\label{sec-exp-kmed}

\begin{figure}
	\begin{center}
		\includegraphics[width=0.45\textwidth]{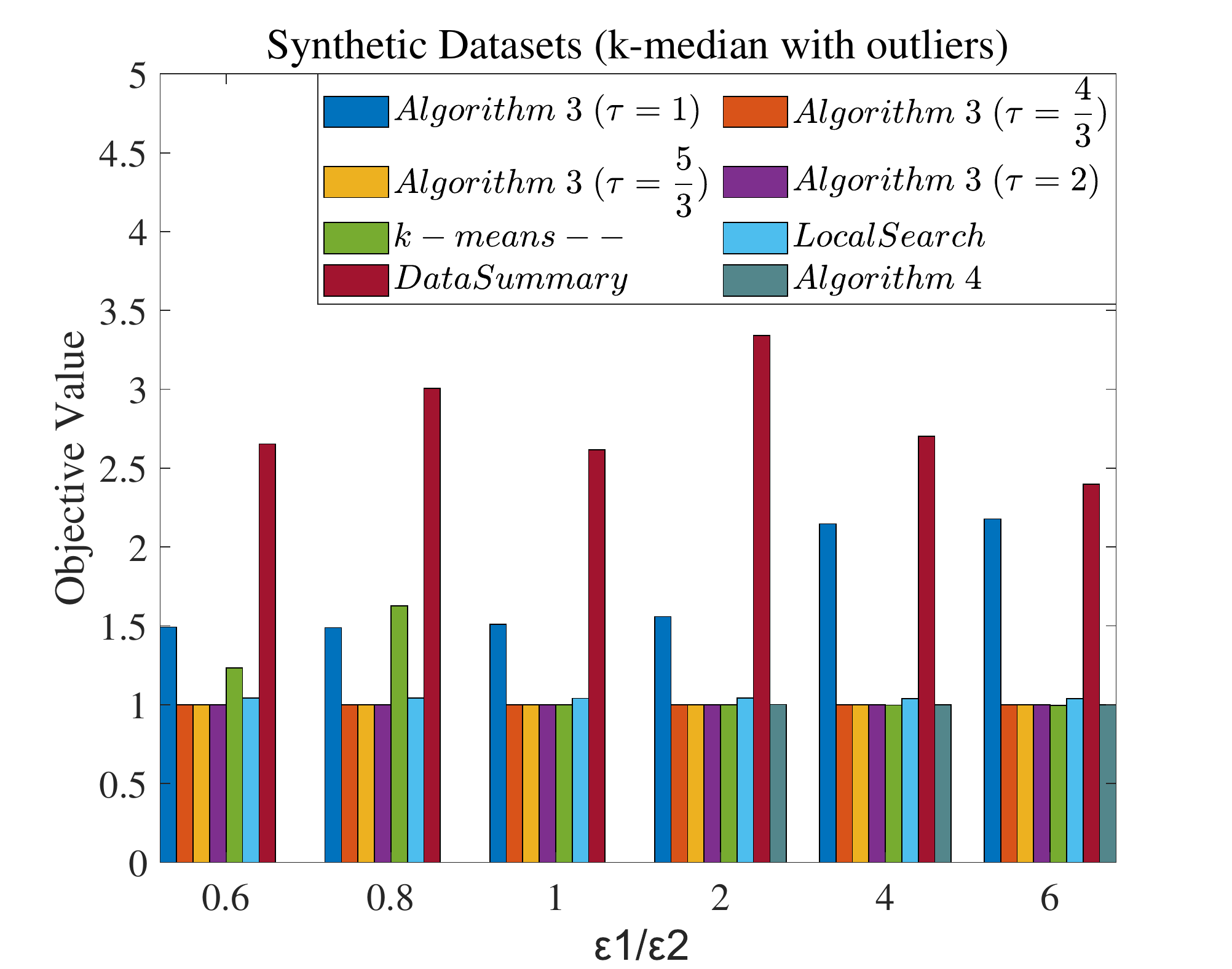} 
		\includegraphics[width=0.45\textwidth]{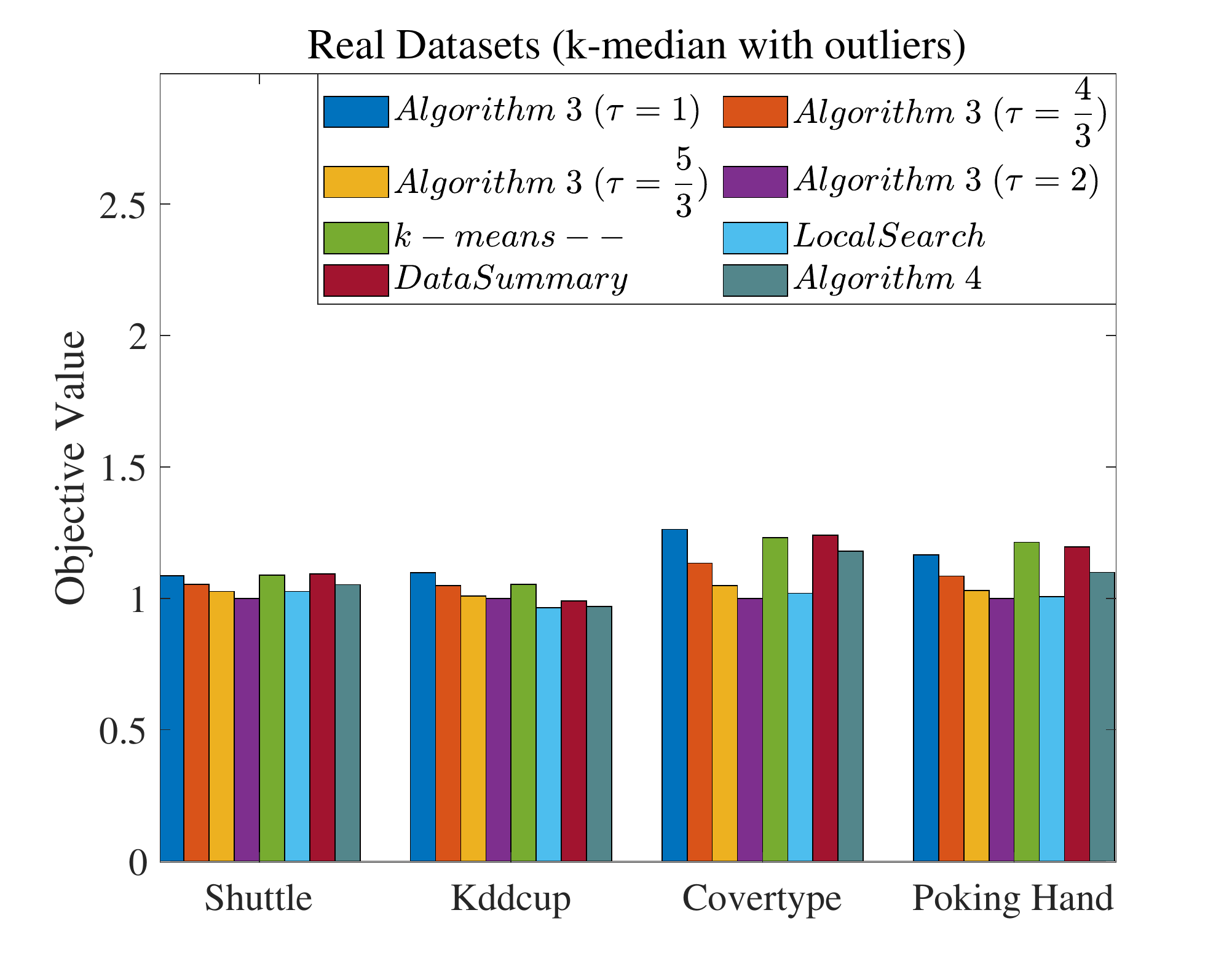} 
		\caption{The normalized objective values on the synthetic and real  datasets.}     
		\label{fig-objective}
	\end{center}
\end{figure}

\begin{figure}
	\begin{center}
		\includegraphics[width=0.45\textwidth]{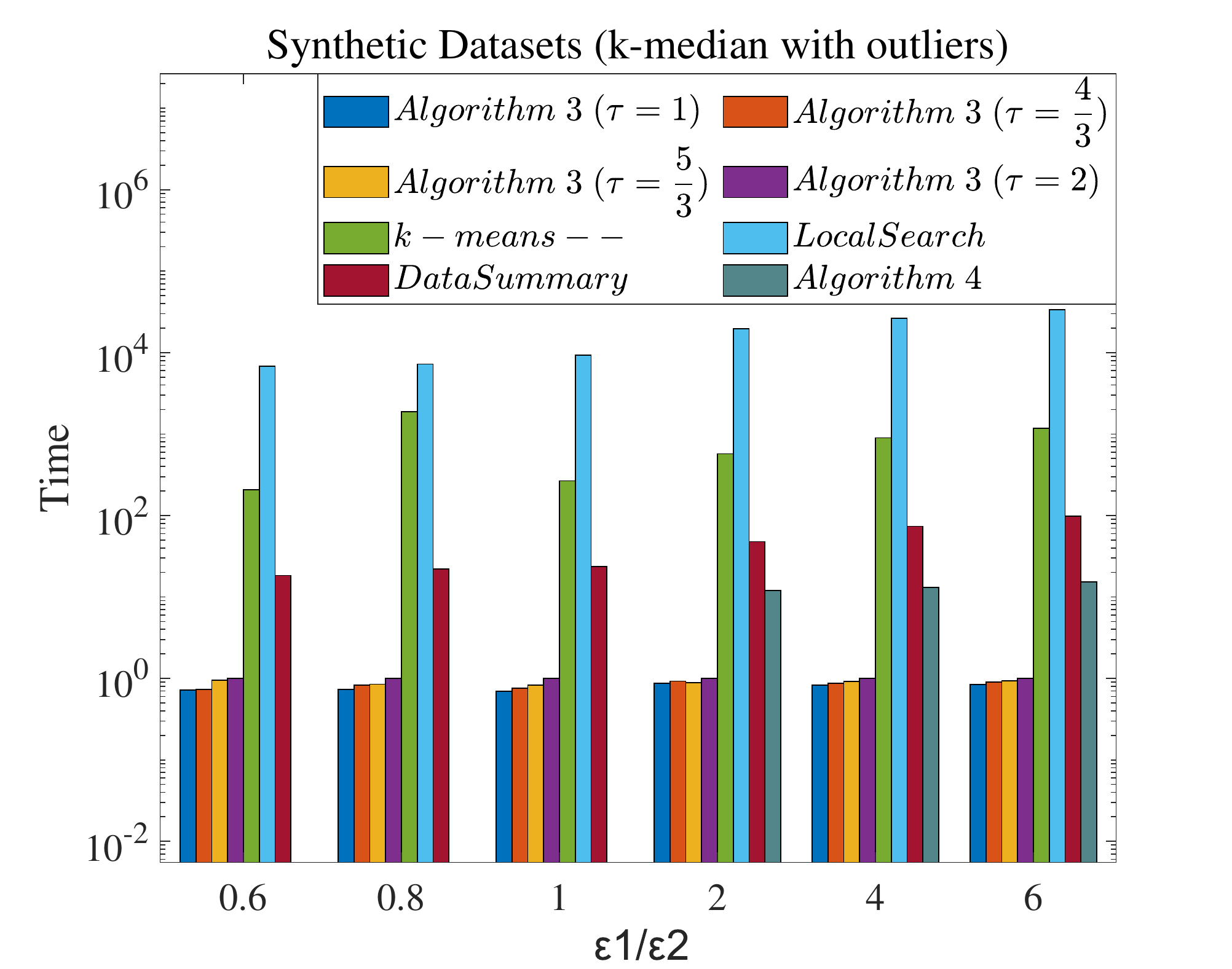} 
		\includegraphics[width=0.45\textwidth]{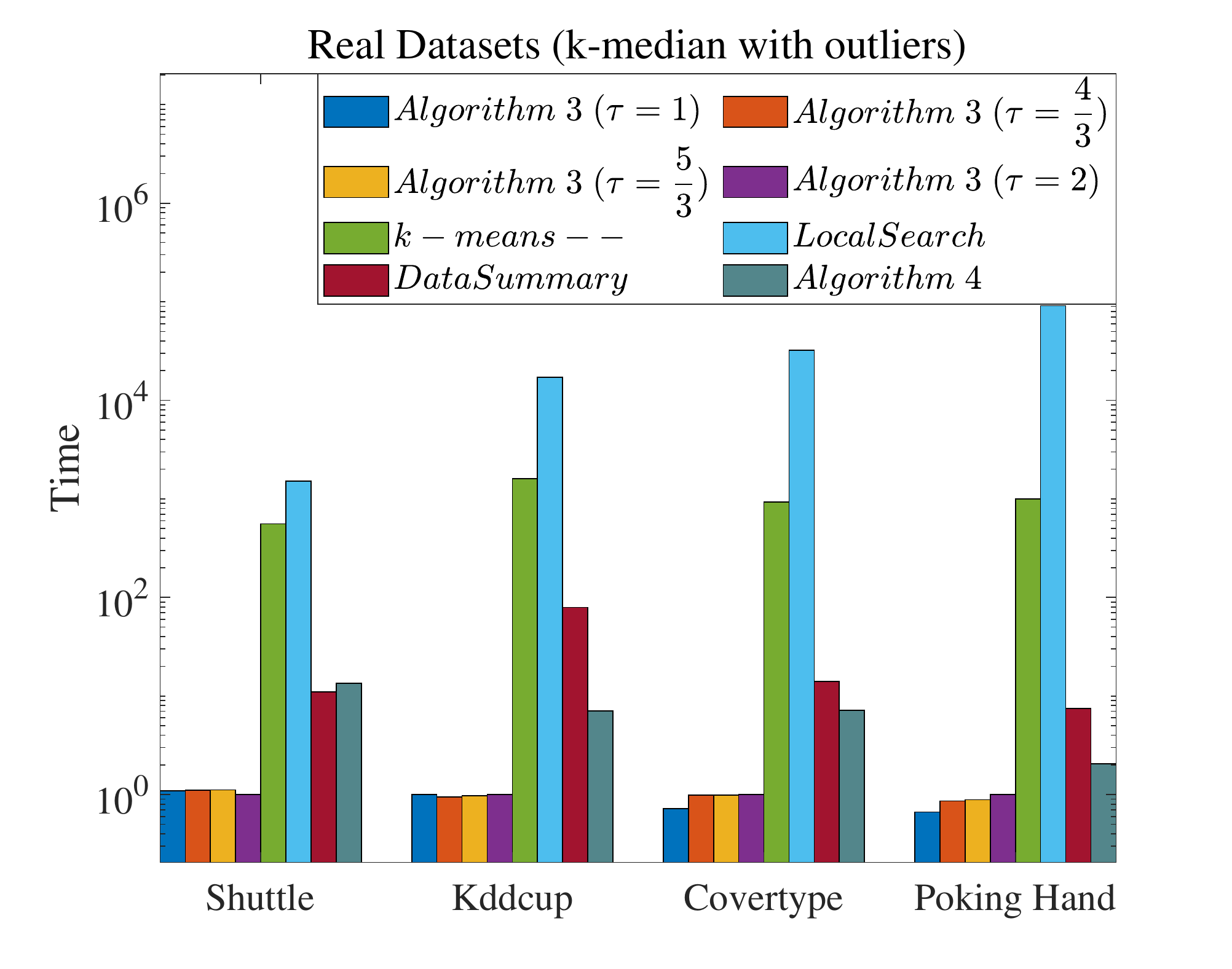}
		  \caption{The normalized running times on the synthetic and real  datasets.}     
		\label{fig-time}
	\end{center}
\end{figure}

\section{Precision and Purity}
\label{sec-precision}

\begin{table}[h]
	\centering
	\caption{Precision on synthetic datasets}
	\begin{tabular}{|c|c|c|c|c|c|c|}
		\hline
		$\epsilon_{1}/\epsilon_{2}=$ & 0.6   & 0.8   & 1     & 2     & 4     & 6 \\
		\hline
		Algorithm~\ref{alg-kc1} $\tau = 1$ & 0.512  & 0.734  & 0.755 & 0.746  & 0.774 & 0.876  \\
		\hline
		Algorithm~\ref{alg-kc1} $\tau = \frac{4}{3}$ & 0.808  & 0.897 & 0.839  & 0.999 & 0.999 & 0.852  \\
		\hline
		Algorithm~\ref{alg-kc1} $\tau = \frac{5}{3}$ & 0.874  & 0.845 & 0.973 & 0.999 & 0.999 & 1.000 \\
		\hline
		Algorithm~\ref{alg-kc1} $\tau = 2$ & 0.974  & 0.997 & 0.997 & 0.998 & 0.999 & 0.999 \\
		\hline
		Algorithm~\ref{alg-kc2}  & None   & None   & None   & 1.000 &1.000 & 1.000 \\
		\hline
		\textsc{MK} & 1.000 & 1.000 & 1.000 & 1.000& 1.000 & 1.000 \\
		\hline
		\textsc{Malkomes} & 1.000 & 1.000 &	1.000 & 1.000 & 1.000 & 1.000 \\
		\hline
		\textsc{Charikar} & 1.000 &1.000 & 1.000 & 1.000 & 1.000 & 1.000 \\
		\hline
		\textsc{DYW}    & 0.999 & 0.997 &	0.997 & 0.998 & 0.998&0.997 \\
	
		\hline
		Algorithm~\ref{alg-km} $\tau = 1$ & 1.000 & 1.000 &  1.000 & 1.000 & 1.000 & 1.000  \\
		\hline
		Algorithm~\ref{alg-km} $\tau = \frac{4}{3}$ & 1.000 & 1.000 & 1.000 &1.000 & 1.000& 0.999 \\
		\hline
		Algorithm~\ref{alg-km} $\tau = \frac{5}{3}$ &1.000 & 1.000 &1.000 & 1.000 &  0.999 &  0.999 \\
		\hline
		Algorithm~\ref{alg-km} $\tau = 2$&1.000  &1.000 & 0.999  & 0.999  & 0.997 & 0.998 \\
		\hline
		 Algorithm~\ref{alg-km2}   & None   & None   & None &  1.000 &  1.000 &  1.000 \\
		\hline
		\textsc{$k$-means$--$} & 1.000 &  1.000 &  1.000 &  1.000 &  1.000 &  1.000 \\
		\hline
		\textsc{LocalSearch} & 1.000 &  1.000 &  1.000 & 1.000& 1.000 &  1.000 \\
		\hline
		\textsc{DataSummary} & 1.000 & 0.999 & 1.000  & 1.000  & 1.000  & 1.000  \\

		\hline
		Algorithm~\ref{alg-km} ($k$-median) $\tau = 1$ & 0.994 & 0.998& 0.997  &0.998  &0.999 & 0.999 \\
	\hline
	Algorithm~\ref{alg-km} ($k$-median) $\tau = \frac{4}{3}$ & 1.000 & 1.000&  0.998 & 1.000 & 1.000 & 1.000 \\
	\hline
	Algorithm~\ref{alg-km} ($k$-median) $\tau = \frac{5}{3}$ & 1.000 & 1.000&  1.000 & 1.000 & 1.000 & 1.000\\
	\hline
	Algorithm~\ref{alg-km} ($k$-median) $\tau = 2$& 1.000 &1.000&  1.000 & 1.000 & 1.000 & 1.000\\
	\hline
	Algorithm~\ref{alg-km2} ($k$-median) & None   & None   & None &  1.000 &  1.000 &  1.000 \\
	\hline
	\textsc{$k$-means$--$} ($k$-median) & 1.000 &  1.000 &  1.000 &  1.000 &  1.000 &  1.000 \\
	\hline
	\textsc{LocalSearch} ($k$-median) & 1.000 &  1.000 &  1.000 & 1.000& 1.000 &  1.000 \\
	\hline
	\textsc{DataSummary} ($k$-median) & 0.958 & 0.932 & 0.948  & 0.927  & 0.964 & 0.938  \\
	\hline
	\end{tabular}%
	\label{tab-1}%
\end{table}%
\begin{table}[h]
	\centering
	\caption{Purity on synthetic datasets}
	\begin{tabular}{|c|c|c|c|c|c|c|}
	\hline
	$\epsilon_{1}/\epsilon_{2}=$ & 0.6   & 0.8   & 1     & 2     & 4     & 6\\
	\hline
	Algorithm~\ref{alg-kc1} $\tau = 1$ & 0.746 & 0.804 & 0.722&0.977   & 0.896 & 0.978\\
	\hline
	Algorithm~\ref{alg-kc1} $\tau = \frac{4}{3}$ & 0.860  & 0.981 & 0.931 &1.000 & 0.999  & 0.976  \\
	\hline
	Algorithm~\ref{alg-kc1} $\tau = \frac{5}{3}$ & 0.957   & 0.953  &0.893 &1.000 & 1.000 & 1.000  \\
	\hline
	Algorithm~\ref{alg-kc1} $\tau = 2$ & 0.972  & 1.000 & 1.000 &1.000  &1.000  &1.000  \\
	\hline
	Algorithm~\ref{alg-kc2}  & None   & None   & None   & 1.000 &1.000 & 1.000 \\
	\hline
	\textsc{MK} & 1.000 & 1.000 & 1.000 & 1.000& 1.000 & 1.000 \\
	\hline
	\textsc{Malkomes} & 1.000 & 1.000 &	1.000 & 1.000 & 1.000 & 1.000 \\
	\hline
	\textsc{Charikar} & 1.000 &1.000 & 1.000 & 1.000 & 1.000 & 1.000 \\
	\hline
	\textsc{DYW}   & 1.000 &1.000 & 1.000 & 1.000 & 1.000 & 1.000 \\
	
	\hline
	Algorithm~\ref{alg-km} $\tau = 1$ & 0.972 & 0.989& 0.986  &0.986  &0.961 & 0.914 \\
	\hline
	Algorithm~\ref{alg-km} $\tau = \frac{4}{3}$ & 1.000 & 1.000&  0.998 & 1.000 & 1.000 & 1.000 \\
	\hline
	Algorithm~\ref{alg-km} $\tau = \frac{5}{3}$ & 1.000 & 1.000&  1.000 & 1.000 & 1.000 & 1.000\\
	\hline
	Algorithm~\ref{alg-km} $\tau = 2$& 1.000 &1.000&  1.000 & 1.000 & 1.000 & 1.000\\
	\hline
	Algorithm~\ref{alg-km2}   & None   & None   & None &  1.000 &  1.000 &  1.000 \\
	\hline
	\textsc{$k$-means$--$} & 1.000 &  1.000 &  1.000 &  1.000 &  1.000 &  1.000 \\
	\hline
	\textsc{LocalSearch} & 1.000 &  1.000 &  1.000 & 1.000& 1.000 &  1.000 \\
	\hline
	\textsc{DataSummary} & 0.958 & 0.932 & 0.948  & 0.927  & 0.964 & 0.938  \\

	\hline
	Algorithm~\ref{alg-km} ($k$-median) $\tau = 1$ & 0.997 & 0.994& 0.999  &0.998  &0.996 & 0.994 \\
	\hline
	Algorithm~\ref{alg-km} ($k$-median) $\tau = \frac{4}{3}$ & 0.996 & 0.999&  1.000 & 1.000 & 1.000 & 1.000 \\
	\hline
	Algorithm~\ref{alg-km} ($k$-median) $\tau = \frac{5}{3}$ & 0.995 & 1.000&  1.000 & 1.000 & 1.000 & 1.000\\
	\hline
	Algorithm~\ref{alg-km} ($k$-median) $\tau = 2$& 1.000 &1.000&  1.000 & 1.000 & 1.000 & 1.000\\
	\hline
	Algorithm~\ref{alg-km2}  ($k$-median) & None   & None   & None &  1.000 &  1.000 &  1.000 \\
	\hline
	\textsc{$k$-means$--$} ($k$-median) & 0.996 &  0.999 &  1.000 &  1.000 &  1.000 &  1.000 \\
	\hline
	\textsc{LocalSearch} ($k$-median) &0.999 &  1.000 &  1.000 & 1.000& 1.000 &  1.000 \\
	\hline
	\textsc{DataSummary} ($k$-median) & 0.872 & 0.912 & 0.968 & 0.984  & 0.992 & 0.991  \\
	\hline
\end{tabular}%
	\label{tab-2}%
\end{table}%
\begin{table}[h]
	\centering
	\caption{Precision on real datasets}
	\begin{tabular}{|c|c|c|c|c|}
		\hline
		$datasets$ & Shuttle   & Kddcup   & Covtype     & Poking Hand     \\
		\hline
		Algorithm~\ref{alg-kc1} $\tau = 1$ & 0.904 & 0.966  & 0.612 &0.959   \\
		\hline
		Algorithm~\ref{alg-kc1} $\tau = \frac{4}{3}$ & 0.906 & 0.964  &0.695  &0.978  \\
		\hline
		Algorithm~\ref{alg-kc1} $\tau = \frac{5}{3}$&0.913   & 0.965  &0.753  &0.975   \\
		\hline
		Algorithm~\ref{alg-kc1} $\tau = 2$ &0.904& 0.966  & 0.908  &0.984  \\
		\hline
		Algorithm~\ref{alg-kc2}  		& 0.903   & 0.957 & 0.513  & 0.965 \\
		\hline
		\textsc{MK}    			& 0.903 & 0.978  &0.502  & 0.968\\
		\hline
		\textsc{Malkomes} 		& 0.933  & 0.959  &	0.754  & 0.957\\
		
		\hline
		\textsc{DYW}    				& 0.896 & 0.960  &	0.804  & 0.986  \\
		
		\hline
		Algorithm~\ref{alg-km} $\tau = 1$ &0.886  &0.946 & 0.823 &0.993\\
		\hline
		Algorithm~\ref{alg-km} $\tau = \frac{4}{3}$& 0.883 &0.947  &0.900 &0.991 \\
		\hline
		Algorithm~\ref{alg-km} $\tau = \frac{5}{3}$ &0.883  &0.947 &0.916  &0.990   \\
		\hline
		Algorithm~\ref{alg-km} $\tau = 2$&0.886  &0.948 &0.897   & 0.991  \\
		\hline
		Algorithm~\ref{alg-km2}  			& 0.906 & 0.958 & 0.807  &  0.986\\
		\hline
		\textsc{$k$-means$--$} 			& 0.883 & 0.971 & 0.793  &  0.999 \\
		\hline
		\textsc{LocalSearch}			 & 0.894 & 0.958  & 0.795& 0.973 \\
		\hline
		\textsc{DataSummary}			 & 0.889 & 0.950 & 0.764  & 0.988  \\

		\hline
		Algorithm~\ref{alg-km} ($k$-median) $\tau = 1$ &0.889  &0.946 & 0.728 &0.995\\
		\hline
		Algorithm~\ref{alg-km}  ($k$-median) $\tau = \frac{4}{3}$& 0.886 &0.946  &0.802 &0.990 \\
		\hline
		Algorithm~\ref{alg-km} ($k$-median) $\tau = \frac{5}{3}$ &0.881  &0.946 &0.802  &0.990   \\
		\hline
		Algorithm~\ref{alg-km} ($k$-median) $\tau = 2$&0.887  &0.949 &0.862   & 0.989  \\
		\hline
		Algorithm~\ref{alg-km2} ($k$-median)  			& 0.917 & 0.966 & 0.777  &  0.994\\
		\hline
		\textsc{$k$-means$--$}  ($k$-median)			& 0.882 & 0.966 & 0.733  &  0.999 \\
		\hline
		\textsc{LocalSearch} ($k$-median)			 & 0.891 & 0.953  & 0.751& 0.964 \\
		\hline
		\textsc{DataSummary}	 ($k$-median)		 & 0.882 & 0.966 & 0.734  & 0.999  \\
		\hline
	\end{tabular}%
	\label{tab-3}%
\end{table}%
\begin{table}[!htb]
	\centering
	\caption{Purity on real datasets}
	\begin{tabular}{|c|c|c|c|c|}
	\hline
	$datasets$ & Shuttle   & Kddcup   & Covtype     & Poking Hand     \\
	\hline
	Algorithm~\ref{alg-kc1} $\tau = 1$ &0.810 &0.690  &0.491  &0.502    \\
	\hline
	Algorithm~\ref{alg-kc1} $\tau = \frac{4}{3}$ &0.837 &0.848  & 0.493 &0.504  \\
	\hline
	Algorithm~\ref{alg-kc1} $\tau = \frac{5}{3}$ &0.859 & 0.872   &0.496  & 0.505  \\
	\hline
	Algorithm~\ref{alg-kc1} $\tau = 2$ & 0.913  &0.904  &0.504  & 0.505 \\
	\hline
	Algorithm~\ref{alg-kc2}  		& 0.830   & 0.851 & 0.495  & 0.504 \\
	\hline
	\textsc{MK}    			& 0.790  & 0.579   &0.513   & 0.501\\
	\hline
	\textsc{Malkomes} 		&0.789   & 0.632  &	0.498   & 0.500 \\
	
	\hline
	\textsc{DYW}    				& 0.790 & 0.580  &	0.492  & 0.501  \\
	
	\hline
	Algorithm~\ref{alg-km} $\tau = 1$ & 0.793   & 0.579 &  0.490  & 0.508\\
	\hline
	Algorithm~\ref{alg-km} $\tau = \frac{4}{3}$ & 0.790 &0.580   & 0.494  & 0.507  \\
	\hline
	Algorithm~\ref{alg-km} $\tau = \frac{5}{3}$ & 0.803  & 0.582  &0.504  &0.505  \\
	\hline
	Algorithm~\ref{alg-km} $\tau = 2$& 0.797  &0.582   & 0.508   &0.505  \\
	\hline
	Algorithm~\ref{alg-km2} 		& 0.793  & 0.579  &0.490 &  0.501 \\
	\hline
	\textsc{$k$-means$--$} 			& 0.818  & 0.579 &0.491  &  0.501 \\
	\hline
	\textsc{LocalSearch}			 & 0.790 & 0.579   &0.498 & 0.511 \\
	\hline
	\textsc{DataSummary}			 & 0.832 & 0.591  &0.488  & 0.504  \\

	\hline
	Algorithm~\ref{alg-km}  ($k$-median) $\tau = 1$ & 0.790   & 0.582 &  0.491  & 0.508\\
	\hline
	Algorithm~\ref{alg-km}  ($k$-median) $\tau = \frac{4}{3}$ & 0.798 &0.579   & 0.493  & 0.505  \\
	\hline
	Algorithm~\ref{alg-km}  ($k$-median) $\tau = \frac{5}{3}$ & 0.797  & 0.581  &0.499  &0.504  \\
	\hline
	Algorithm~\ref{alg-km}  ($k$-median) $\tau = 2$& 0.790  &0.582   & 0.508   &0.505  \\
	\hline
	Algorithm~\ref{alg-km2}  ($k$-median) 			& 0.789  & 0.579  &0.492 &  0.501 \\
	\hline
	\textsc{$k$-means$--$}  ($k$-median)			& 0.790  & 0.579 &0.492  &  0.501 \\
	\hline
	\textsc{LocalSearch}  ($k$-median)			 & 0.811 & 0.627   &0.496 & 0.510 \\
	\hline
	\textsc{DataSummary}  ($k$-median)			 & 0.829 & 0.602  &0.489  & 0.502  \\
	\hline
\end{tabular}%
	\label{tab-4}%
\end{table}%

\end{document}